\documentclass[12pt]{amsart}

\usepackage[margin=1in]{geometry}
\usepackage[T1]{fontenc}
\usepackage[utf8]{inputenc}
\usepackage{lmodern}
\usepackage{microtype}
\usepackage{bm}
\usepackage{amsmath,amssymb,amsthm,mathtools}
\usepackage{enumitem}
\usepackage{xcolor}
\usepackage[colorlinks=true,linkcolor=blue,citecolor=blue,urlcolor=blue]{hyperref}
\usepackage[capitalize,noabbrev]{cleveref}
\usepackage[
 backend=biber,
 style=alphabetic,
 sorting=nyt,
 giveninits=true,
 maxnames=8,
 uniquelist=false
]{biblatex}
\usepackage{comment}
\usepackage{todonotes}
\usepackage{etoolbox}
\AtBeginEnvironment{quote}{\itshape\small} 
 
\AtBeginBibliography{\small}

\hypersetup{
 pdftitle={Uniformity in Finite Convolution Models},
 pdfauthor={Nilava Metya and Satyaki Mukherjee},
 pdfkeywords={finite convolution, loaded dice, entropy, extremal probability, Wasserstein distance}
}

\newtheorem{lemma}{Lemma}[section]
\newtheorem{theorem}[lemma]{Theorem}
\newtheorem{proposition}[lemma]{Proposition}

\newtheorem{conjecture}[lemma]{Conjecture}
\theoremstyle{definition}

\theoremstyle{remark}
\newtheorem{remark}[lemma]{Remark}

\makeatletter
\renewcommand\paragraph{\@startsection{paragraph}{4}{\z@}%
  \z@{-.5em}%
  {\normalfont\bfseries}}
\makeatother

\newcommand{\R}{\mathbb{R}}

\newcommand{\cM}{\mathcal{M}}
\newcommand{\DeltaS}[1]{\Delta_{#1}}
\newcommand{\norm}[2]{\left\lVert #1\right\rVert_{#2}}
\newcommand{\abs}[1]{\left|#1\right|}
\newcommand{\set}[1]{\left\{#1\right\}}
\newcommand{\E}[1]{\underset{#1}{\mathbb E}}
\newcommand{\KL}{D_{\mathrm{KL}}}
\newcommand{\supp}{\operatorname{supp}}

\newcommand{\sett}{\coloneqq}
\newcommand{\st}{%
  \nonscript\;
  \ifnum\currentgrouptype=16
    \;\middle|\;
  \else
    \;|\;
  \fi
  \nonscript\;}

\title{Approximate Uniformity in Finite Convolution Models}
\author{Nilava Metya}\thanks{authors listed in alphabetical order of surname}
\address{Department of Mathematics, Rutgers University--New Brunswick}
\email{nilava.metya@rutgers.edu}

\author{Satyaki Mukherjee}
\address{Department of Mathematics, National University of Singapore}
\email{satyaki.mukhyo@gmail.com}
\date{\today}
\begin{document}
\maketitle

\begin{abstract}
The problem of the uniform law being a sum of two independent distributions has been well studied. Here, we study the approximation of the uniform law to a sum of two distributions with fixed support, under the following discrepancies: the Manhattan distance, the Euclidean distance, the forward Kullback--Leibler divergence and the Wasserstein-one distance based on the line metric. The problem of membership of the uniform law in this model has been well studied. Explicit results are obtained in two models, one where one of the distributions is a Bernouli and the other when both distributions have the same support, including one conjecture. Finally, we show an application of our reconstruction method to a recent conjecture about the coding capacity of an additive noise channel.
\end{abstract}

\noindent\textbf{Keywords.} finite convolution; positive polynomial factorization; loaded dice; entropy; coding capacity; Wasserstein distance.

\tableofcontents

\section{Introduction}
In \cite{kelly1950}, Kelly asked whether two biased six-sided dice can have a uniformly distributed sum. In technical terms, this problem can be rephrased as a factorization of the generating polynomials of the respective probability vectors. For a probability vector $\bm p\in \Delta_{n-1} \sett \set{(\mu_0,\cdots,\mu_{n-1}) \in \R_{\ge 0}^n \st \sum_i\mu_i=1}$ supported on $n$ points, define its polynomial generating function as $$G_{\bm p}(x) \sett \E{X\sim \bm p}[x^X] = \sum_i p_ix^i.$$ If $\bm p\in \Delta_{n-1}, \bm q\in \Delta_{m-1}$ and $X\sim \bm p,Y\sim \bm q$ are independent, then $X+Y\sim \bm p*\bm q$ where $*$ denotes the convolution. Then independence gives $G_{\bm p*\bm q} = G_{\bm p}G_{\bm q}$. This factorization of polynomials is what gives the answer to the originally question of Kelly negatively. The following paragraph gives the simple argument.

Suppose $\bm u = \bm p * \bm q \in \Delta_{10}$ for some $\bm p,\bm q\in \Delta_5$ where $\bm u = (1,\cdots,1)/11$ is the uniform distribution. Then $11G_{\bm r}(x) = (x^{11}-1)/(x-1)$. The roots of $G_{\bm r}$ are the ten non-real $11^{th}$ roots of unity. In particular, they are all non-real. If the factorization $G_{\bm r} = G_{\bm p}G_{\bm q}$ were true, then both $G_{\bm p}, G_{\bm q}$ would have only non-real roots. But this is not the case because both are polynomials of degree five with real coefficients whence they have at least one real root.

The central object of study in this paper is the compact convolution model $$\cM_{m,n} = \set{\bm p*\bm q \in \Delta_{m+n-2}\st \bm p\in \Delta_{m-1}, \bm q\in \Delta_{n-1}}.$$
$\cM_{m,n}$ is the set of laws of of the sum of two independent random variables, one supported on $\set{0,\cdots,m-1}$ and the other on $\set{0,\cdots,n-1}$. 

The membership in $\cM_{m,n}$ of the uniform law in $\Delta_{m+n-2}$ is studied in \cite{morrison2018}, and the complexity aspects of deconvolution of a general law are studied in \cite{bauschCubitt2016}. In the case of non-membership of the uniform law in $\cM_{m,n}$, the most natural question is to approximate the uniform law under a discrepancy $d$, namely, $\min_{\bm r\in \cM_{m,n}} d(\bm r, \bm u)$ where $\bm u\in \Delta_{m+n-2}$ is the uniform law. In this paper, $d\in \set{\ell_1, \ell_2, W_1^{\text{line}}, \KL}$. While $\KL$ is not a metric, it is a measure of discrepancy of general interest and falls under the general class of $f-$divergences. Please refer to Appendix~\ref{sec:w-dist-tutorial} for a short relevant tutorial on Wasserstein distances.

Specializing to $m=2$ we get the model $\cM_{2,n}$ which contains the laws of $B+X$ where $B\in\set{0,1}$ is a Bernoulli and $X\in\set{0,\cdots,n-1}$ is independent from $B$. Such a family exhibits an explicit description with one parameter. This interpretation of corrupting a main message $X$ with an independent Bernoulli $B$ is the exact setting used in a Bernoulli additive-noise sticky channel as defined by \textcite[Definition 1.1, Section 1.3.1]{BCPR2026}. Our one-parameter description of the model lets us settle a conjecture \cite[Conjecture 3.6]{BCPR2026} about the coding capacity of such a channel. Here is the one-parameter description in short. Suppose $\bm r = (r_0,\cdots,r_n)$ is a convolution with $\bm r = (a,1-a)*\bm q$ where $a>0$. Then we have that $t=(1-a)/a$ is a root of the polynomial $\E{Z\sim \bm r}\left[(-t)^{n-Z}\right] = (-t)^nG_{\bm r}(-1/t)$. Conversely a nonnegative root of this polynomial lets us recover Bernoulli part $(a,1-a)$ and the main distribution $\bm q$ recursively via non-negativity constraints. As an immediate consequence, the uniform law on $\set{0,\cdots,n}$ belongs to $\cM_{2,n}$ exactly when $n$ is
odd, and in that case its ordered factorization is unique. 

For the problem of projection of the uniform law onto this non-convex model, we summarize our results in the following statement. The explicit statements are mentioned later in Section~\ref{sec:proj-m2n}.

\begin{theorem}
Let $\bm u = (n+1)^{-1}\bm 1\in \Delta_{n}$ and suppose $n$ is even. Let $\delta_n = \tau_n^n/(1+\tau_n)$ $\tau_n$ be the unique root in $(0,1)$ of the equation $$(n+1)\tau^n + (n-1) \tau^{n+1} =2.$$
Then optimal values for the projection problems are \begin{align*}
\min_{\bm r\in \cM_{2,n}} \norm{\bm r-\bm u}{2}^2 = \frac{1}{n(n+1)(n+2)}, &\qquad   \min_{\bm r\in \cM_{2,n}} \KL(\bm r\|\bm u) = \log \frac{n+1}{\sqrt{n(n+2)}},\\
\min_{\bm r\in \cM_{2,n}} \norm{\bm r-\bm u}{1} = \delta_n,& \qquad \min_{\bm r\in \cM_{2,n}} W_1^{\text{line}}(\bm r,\bm u) = \delta_n/2.
\end{align*}
\end{theorem}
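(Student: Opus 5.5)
The plan is to work throughout with the one-parameter description of $\cM_{2,n}$ from the introduction. Write $\bm r=(a,1-a)*\bm q$ and $t=(1-a)/a\in[0,\infty]$. Then $\bm r\in\cM_{2,n}$ if and only if, for some $t\ge 0$, two conditions hold. The first is $\langle \bm r,\bm v_t\rangle=0$, where $(\bm v_t)_k=(-t)^{n-k}$. The second is that the recursively reconstructed $q_k=(r_k-(1-a)q_{k-1})/a$ stay nonnegative. Each projection problem therefore becomes $\min_{t\ge0}$ of a projection of $\bm u$ onto the affine slice $H_t=\set{\bm r\in\Delta_n \st \langle \bm r,\bm v_t\rangle=0}$, with the $q$-nonnegativity checked a posteriori at the optimiser. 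Reversing the coordinates, $k\mapsto n-k$, preserves $\bm u$, all four discrepancies and the model, and sends $t\mapsto 1/t$. So it suffices to take $t\in[0,1]$. The endpoints $t=0$ and $t=\infty$, where the Bernoulli is degenerate, are handled separately: there one needs $r_n=0$ or $r_0=0$.

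For $\ell_2$, I first relax the problem to the hyperplane $\set{\sum r_k=1,\ \langle\bm r,\bm v_t\rangle=0}$. Write $\bm v_t'=\bm v_t-\bar v_t\bm 1$ for the centered vector. The squared distance from $\bm u$ is then $\langle\bm u,\bm v_t\rangle^2/\norm{\bm v_t'}{2}^2$. At $t=1$, since $n$ is even, $\bm v_1$ is alternating with $\sum_k(\bm v_1)_k=1$. This gives $\langle \bm u,\bm v_1\rangle=1/(n+1)$ and $\norm{\bm v_1'}{2}^2=n(n+2)/(n+1)$, hence the value $1/(n(n+1)(n+2))$. The projection is $\bm u-\frac{1}{n(n+2)}\bm v_1'$. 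Its entries alternate between two values, so it is positive. Its reconstruction with $a=1/2$ gives $q_k=2r_k-q_{k-1}$, which I will verify is nonnegative by a telescoping computation. It remains to show that
\[
f(t)=\frac{\bigl(\sum_{j=0}^n(-t)^j\bigr)^2}{(n+1)\sum_j t^{2j}-\bigl(\sum_j(-t)^j\bigr)^2}
\]
is minimised on $[0,1]$ at $t=1$. Using $\sum_j(-t)^j=(1+t^{n+1})/(1+t)$, I would show that the numerator is decreasing and the denominator increasing in $t$ on $[0,1]$. The denominator is a variance-type quantity, and Cauchy--Schwarz-type monotonicity arguments should apply.

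For $\KL$, minimising $\KL(\bm r\|\bm u)$ means maximising the entropy $H(\bm r)$ on $H_t$. The Lagrange conditions give $r_k\propto e^{\lambda(\bm v_t)_k}$. At $t=1$ this forces two values: $\alpha$ on the $n/2+1$ even indices and $\beta$ on the $n/2$ odd ones. The constraints $(n/2+1)\alpha=(n/2)\beta$ and $(n/2+1)\alpha+(n/2)\beta=1$ give $\alpha=1/(n+2)$ and $\beta=1/n$. This yields $\KL=\log(n+1)-\tfrac12\log(n(n+2))$, and $q$ is again checked nonnegative. To prove that $t=1$ is optimal, I would use the entropy dual $\max_{\bm r\in H_t}H(\bm r)=\min_\lambda\log\sum_k e^{\lambda(\bm v_t)_k}$, together with a comparison in $t$, or else deduce it from the $\ell_2$ monotonicity via a convexity argument.

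For $\ell_1$ and $W_1^{\text{line}}$ the optimum is not at $t=1$, because of the $\tau_n$ in the answer. For fixed $t$, LP duality gives the relaxed $\ell_1$ distance as $2\abs{\langle\bm u,\bm v_t\rangle}/(\max_k(\bm v_t)_k-\min_k(\bm v_t)_k)$, and the nonnegativity constraints on $\bm r$ and $\bm q$ are then imposed. For $W_1$, I write $W_1=\norm{F_{\bm r}-F_{\bm u}}{1}$ in terms of CDFs. The constraint becomes linear in the CDF increments with the vector of differences $(\bm v_t)_k-(\bm v_t)_{k+1}$, so the same duality applies. Its $\ell_\infty$ dual norm should be exactly half the one arising for $\ell_1$, which would give the relation $\delta_n/2$. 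On $[0,1]$ the spread $\max_k(\bm v_t)_k-\min_k(\bm v_t)_k$ equals $1+t$, attained at $k=n$ and $k=n-1$. This gives the objective $2(1+t^{n+1})/((n+1)(1+t)^2)$. However, $q$-feasibility constrains how mass can be moved between the extreme coordinates. I expect the true optimiser to move mass only at the first coordinates near the end, with the binding constraint being $q$ hitting zero. Setting the derivative of the resulting one-variable expression to zero should produce $(n+1)\tau^n+(n-1)\tau^{n+1}=2$. The main obstacle is this step: identifying exactly which nonnegativity constraints are active in the $\ell_1$ problem and proving that the candidate is globally optimal. My first attempt would be to exhibit an explicit primal $\bm r$ and a matching dual certificate for each fixed $t$, and then do the one-dimensional calculus.
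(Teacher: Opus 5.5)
\textbf{Main gap: the $\KL$ value.} Your entropy dual is correct. After reindexing $k\mapsto n-k$, $\max_{\bm r\in H_t}H(\bm r)=\log\mathcal Z(t)$ with $\mathcal Z(t)=\min_{\lambda}\sum_{i=0}^n e^{\lambda(-t)^i}$. So everything hinges on showing $\mathcal Z(t)\le\mathcal Z(1)=\sqrt{n(n+2)}$ for $t\in(0,1]$. ``A comparison in $t$'' names this step without supplying it, and it is the main technical content of the paper (Lemma~\ref{lem:partition-monotonicity}).

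The fallback of deducing it from the $\ell_2$ monotonicity ``via a convexity argument'' has no mechanism behind it. The relaxed $\ell_2$ value depends only on $\langle\bm u,\bm v_t\rangle$ and $\norm{\bm v_t'}{2}$, whereas $\mathcal Z(t)$ depends on the whole configuration of exponents $(-t)^i$. Nothing transfers the location of the minimum in $t$ from one problem to the other.

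What is actually needed is the following chain:
\begin{itemize}
\item the minimizer $\lambda_t$ is unique and negative;
\item the envelope identity gives $t\mathcal Z'(t)/\lambda_t=\sum_i i(-t)^ie^{\lambda_t(-t)^i}$;
\item this weighted alternating sum must be shown negative for $t<1$ under the balance condition $\sum_i(-t)^ie^{\lambda_t(-t)^i}=0$.
\end{itemize}
The paper proves the last step by splitting into even and odd terms $u_j,v_j$ and symmetrizing into $\sum_{j<\ell}(j+\tfrac12-\ell)(u_\ell v_j-u_jv_{\ell-1})$. It then shows every cross term is negative. That needs the a priori bound $\abs{\lambda_t}<\Lambda$, which comes from a delicate scalar estimate (Theorem~\ref{thm:weighted-alternating}, Lemmas~\ref{lem:cross-term} and~\ref{lem:scalar-cutoff}). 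Without an argument of this kind the $\KL$ part is unproved.

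\textbf{A false step in the $\ell_2$ part.} Your relaxation is a valid alternative to the paper's route; the paper instead shows in Lemma~\ref{lem:quad-reduction} that $a=1/2$ is optimal for every fixed $\bm q$. Your values at $t=1$ are also right. But the proposed monotonicity is false. Put $S(t)=\sum_{j=0}^n(-t)^j=(1+t^{n+1})/(1+t)$. Then $S(0)=S(1)=1$ and $S<1$ on $(0,1)$, so the numerator of $f$ is not decreasing. It is strictly smaller at every interior $t$ than at $t=1$, so numerator and denominator cannot be handled separately.

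The fix is the factorization $\sum_j t^{2j}=S(t)T(t)$ with $T(t)=\sum_{j=0}^n t^j$. This gives $f=S/((n+1)T-S)$, so $f(t)\ge f(1)=1/(n(n+2))$ is equivalent to $T\le(n+1)S$. With $t=e^{-2s}$ this reads $\tanh((n+1)s)\le(n+1)\tanh s$, which is subadditivity of $\tanh$ on $[0,\infty)$.

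\textbf{A misdiagnosis in the $\ell_1$ and $W_1^{\text{line}}$ parts.} You already have the right fixed-$t$ values and then misread them. Let $\phi(t)=(1+t^{n+1})/((n+1)(1+t)^2)$. Your relaxed values $2\phi(t)$ and $\phi(t)$ are exactly the fixed-$t$ optima. The certificates behind your formulas are $(2\bm v_t+(t-1)\bm 1)/(1+t)$ and the potential $\bm v_t/(1+t)$. Up to sign and reversal, these are the paper's certificates in Lemma~\ref{lem:fixed-t-l1} and Theorem~\ref{thm:w1-minimizer}.

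No nonnegativity constraint binds. The relaxed optimizer $\bm u+\phi(t)(\bm e_{n-1}-\bm e_n)$ reconstructs with $a=1/(1+t)$ to $q_k=(1-(-t)^{k+1})/(n+1)$ for $k\le n-2$ and $q_{n-1}=(2+t-t^n)/((n+1)(1+t))$. This is precisely $\bm q(t)$ in \eqref{eq:fixed-t-minimizer-l1}, and it is nonnegative on $[0,1]$.

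The equation for $\tau_n$ is simply $\phi'(t)=0$, since the numerator of $\phi'$ is $(n+1)t^n+(n-1)t^{n+1}-2$. The identity $2\phi(\tau_n)=\tau_n^n/(1+\tau_n)$ follows from $2+2\tau_n^{n+1}=(n+1)\tau_n^n(1+\tau_n)$. Once the spurious ``binding constraint'' step is dropped, your plan for these two metrics is the paper's proof.
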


While this is easy to prove for $\ell_2$ projection, the $\KL$ result requires substantially more work. The minimization $\KL$ problem is equivalent to entropy maximization because we are comparing against the uniform law $\bm u$. A nondegenerate factorization supplies a negative root $-\rho$ of $G_{\bm r}$. At fixed $\rho$, Gibbs’ variational inequality (or first order optimality condition) reduces entropy maximization to the exponential family
$$r_i\propto \exp\left(\lambda(-\rho)^i\right), \qquad Z_\rho(\lambda) = \sum_{i=0}^n \exp\left(\lambda(-\rho)^i\right).$$
The challenging step is to show that $\min_\lambda Z_{\rho}(\lambda)$ is strictly increasing with respect to $\rho\in (0,1]$. We prove this through a weighted alternating exponential-sum inequality in Section~\ref{sec:technical-KL}.

A second structurally convenient regime is the diagonal model $\cM_{n,n}$. In this model, the $\ell_2$ and the $\KL$ projection theorems are readily available from recent literature. The $\ell_2$ projection formula was proven by \textcite[Theorem 1]{asgarliEtAl2024}, while the $\KL$ projection formula follows from \textcite[Theorem 2.1]{kovacevic2021}. For the $W_1^{\text{line}}$ metric, we prove $$\min_{\bm r\in \cM_{n,n}} W_1^{\text{line}}(\bm r,\bm u) = (n-1)\left(\sqrt{\frac{2n}{2n-1}}-1\right).$$
The proof is a standard technique used for tackling linear programs. For $\ell_1$ projection, we construct a family at distance $1/(2n-1)$ from uniform for every $n\ge 3$. The matching lower bound is left as a conjecture.

\subsection{An application to information theory}
A natural interpretation of the model $\cM_{2,n}$ is to look at random variables $X\in\set{0,\cdots,n-1}$ corrupted by an independent Bernouli noise $B\in \set{0,1}$. A lot of the structural analysis in this document, especially the one parameter description of Theorem~\ref{thm:M2n-membership}, utilizes the fact that the corruption is done by a one-bit error, namely $B$, and not as much on the support of the main message $X$. Thus, some of our methods also extend to the case when $X$ is supported on $\mathbb N\cup\set{0}$. The corresponding model would be then called $\cM_{2,\mathbb N\cup\set{0}}$.

The above is just a Bernoulli convolution, which arise naturally in communication channels with synchronization errors. In the Bernoulli additive-noise sticky channel, described by \textcite[Definition~1.1, Section~1.3.1]{BCPR2026}, each maximal run of identical binary symbols is independently extended by one symbol with probability $p$ and otherwise left unchanged. Encoding a run by its length $X$ gives the additive channel $Z=X+B$, where $X$ is a positive integer and $B \sim \operatorname{Bernoulli}(p)$ is independent of $X$. If input run lengths are restricted to $\set{1,\cdots,n}$ and $q_j=\mathbb P[X=j+1]$, then the law of $Z-1$ is $\bm r=(1-p,p)*\bm q$. Thus the possible output laws form the subfamily of $\cM_{2,n}$ obtained by fixing the Bernoulli factor. Moreover, $I(X;Y)=H(\bm r)-h(p)$, where $h(p)=H(B)$ is the Bernoulli entropy. Since $D_{\mathrm{KL}}(\bm r\Vert\bm u)=\log(n+1)-H(\bm r)$, the fixed-bias KL projection onto this subfamily determines the largest mutual information per run.

The coding capacity additionally accounts for the number of input symbols consumed by each run. It is therefore characterized by $C(p)=\sup_X I(X;X+B)/\E{}X$, where the supremum ranges over positive-integer-valued random variables $X$ with finite mean; see \cite[Lemma~1.2]{BCPR2026}. \cite[Theorem~3.1]{BCPR2026} obtained an exponential-family upper bound for this quantity and conjectured about its optimality for $1/2<p\le 1$ in \cite[Conjecture~3.6]{BCPR2026}. The remaining issue is deconvolution. A proposed run-length observation $Z$ must be shown to arise from an actual run-length $X$. We prove a criterion that establishes this property both for finite entropy optimizers and for the geometrically decaying distributions arising in the capacity problem.

Indeed Lemma~\ref{lem:positive-deconvolution} proves that a probability law $\bm r$ on $\mathbb N\cup\set{0}$ of the form $r_j\propto z^j\exp\set{-b(-\rho)^j}$ such that $\E{X\sim \bm r}(-\rho)^X$ can be deconvolved into an independent sum $X+B$ where $B$ is a Bernoulli with parameter $(1+\rho)^{-1}$ and $X$ realizes each non-negative integer with positive probability. Applied to the capacity problem, it proves the conjectured capacity formula and yields a unique optimizing input with strictly positive probabilities and an exponentially decaying tail. The deterministic endpoint $p=1$ is treated separately.

\begin{theorem}[{\cite[Conjecture~3.6]{BCPR2026}}]
\label{conj:BCPR-capacity}
Let $B\sim\operatorname{Bernoulli}(p)$, where $1/2<p< 1$, and 
$$C(p) \coloneqq \sup_{\substack{X\perp B\\
\mathbb P[X\in\mathbb N]=1\\
\E{} X<\infty}} \frac{I(X;X+B)}{\E{} X},$$
with mutual information measured in bits. Under the convention $0\log_2 0=0$, write $h(p)=H(B)=-p\log_2 p-(1-p)\log_2(1-p).$
For each $\beta\in\mathbb R$, define $\lambda(p,\beta)$ by the equation
$$\sum_{y=1}^{\infty} 2^{-\lambda(p,\beta)(y-p)-h(p) -\beta\left(-\frac{1-p}{p}\right)^{y-1}}=1.$$
Then
$$C(p)=\inf_{\beta\in\mathbb R}\lambda(p,\beta).$$
\end{theorem}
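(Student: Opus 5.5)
We will prove the two inequalities $C(p)\le\inf_\beta\lambda(p,\beta)$ and $C(p)\ge\inf_\beta\lambda(p,\beta)$ separately. The upper bound is the exponential-family bound of \cite[Theorem~3.1]{BCPR2026}, which we take as given. For completeness we recall the mechanism. For any input $X$ with output $Y=X+B$ and any $\beta$, define the law $w_\beta(y)=2^{-\lambda(y-p)-h(p)-\beta(-\rho)^{y-1}}$ on $y\ge1$, where $\rho=(1-p)/p\in(0,1)$ and $\lambda=\lambda(p,\beta)$. Gibbs' inequality gives
$$H(Y)\le \E{}\left[-\log_2 w_\beta(Y)\right]=\lambda(\E{}X+p-p)+h(p)+\beta\,\E{}(-\rho)^{Y-1}.$$
Because $B$ puts mass $1-p$ on $0$ and $p$ on $1$, we have $\E{}(-\rho)^{B}=1-p-p\rho=0$, so $\E{}(-\rho)^{Y-1}=0$. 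Hence $I(X;Y)=H(Y)-h(p)\le\lambda\,\E{}X$ for every $\beta$.

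For the lower bound we will exhibit an input attaining $\lambda^\ast:=\inf_\beta\lambda(p,\beta)$. The argument proceeds in four steps.

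\emph{Step 1: properties of $\lambda$.} Show that $\beta\mapsto\lambda(p,\beta)$ is well defined, differentiable, and strictly convex. This follows because the defining sum $\sum_y 2^{-\lambda(y-p)-h-\beta(-\rho)^{y-1}}$ is jointly log-convex in $(\lambda,\beta)$ and strictly decreasing in $\lambda$.

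\emph{Step 2: existence of a minimizer.} Show that $\lambda(p,\beta)\to\infty$ as $\beta\to\pm\infty$, so the infimum is attained at some $\beta^\ast$. As $\beta\to+\infty$, the $y=2$ term $2^{\beta\rho}$ blows up, forcing $\lambda$ to grow. As $\beta\to-\infty$, the $y=1$ term $2^{-\beta}$ does the same. Implicit differentiation at $\beta^\ast$ then gives the stationarity condition $\sum_y w_{\beta^\ast}(y)(-\rho)^{y-1}=0$.

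\emph{Step 3: deconvolution.} The optimal law $r_j:=w_{\beta^\ast}(j+1)$ has the form $r_j\propto z^j\exp\{-b(-\rho)^j\}$ with $z=2^{-\lambda^\ast}<1$, and its generating function vanishes at $-\rho$. These are exactly the hypotheses of Lemma~\ref{lem:positive-deconvolution}. That lemma gives $r=(1-p,p)*q$ for a probability law $q$ on $\mathbb N\cup\{0\}$ with all entries positive; the Bernoulli parameter $(1+\rho)^{-1}=p$ matches. Setting $\mathbb P[X=j+1]=q_j$, the output law equals $w_{\beta^\ast}$. The geometric tail of $r$, together with the recursion $q_j=(r_j-pq_{j-1})/(1-p)$ or the series for $q$ in powers of $-\rho$, shows that $\E{}X<\infty$.

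\emph{Step 4: equality.} With this input, Gibbs' inequality is an equality, so $I(X;Y)=\lambda^\ast\,\E{}X$. Hence $C(p)\ge\lambda^\ast$.

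The main obstacle is Step 3: verifying that the hypotheses of Lemma~\ref{lem:positive-deconvolution} hold, in particular the sign conditions on $b$ and $z$ that the lemma requires. Strict positivity of every $q_j$ is the nontrivial part. If the lemma needs $b$ of a specific sign, we first try to derive that sign from the stationarity condition in Step 2. The endpoint $p=1$ is excluded by hypothesis, so $\rho>0$ throughout. The case $\rho<1$ is what makes the alternating weights $(-\rho)^{y-1}$ decay, and it should be used in bounding the recursion.
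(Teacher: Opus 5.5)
Your argument follows the paper's proof step for step. The upper bound is the relative-entropy (Gibbs) bound using $\mathbb{E}(-\rho)^{B}=0$. Coercivity of $\beta\mapsto\lambda(p,\beta)$ comes from the $y=1$ and $y=2$ terms, and stationarity gives $\sum_y w_{\beta^*}(y)(-\rho)^{y-1}=0$. Deconvolution goes through Lemma~\ref{lem:positive-deconvolution} with $a=\rho/(1+\rho)=1-p$, and equality holds in Gibbs for the constructed input.

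The one genuine gap is the one you flag and leave open. Lemma~\ref{lem:positive-deconvolution} assumes $b>0$, i.e.\ $\beta^*>0$ (with $b=\beta^*\log 2$), and you never establish it. The lemma's proof genuinely uses this hypothesis. The even partial sums decrease because $v_j/u_{j+1}=w^{-1}e^{b\rho^{2j+1}(1+\rho)}>1$. The odd partial sums keep one sign because $v_j/u_j=w\,e^{b\rho^{2j}(1+\rho)}$ decreases in $j$. For $b<0$ neither property holds, so you cannot invoke the lemma until the sign is settled.

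The missing step is short, and either of two arguments works.
\begin{itemize}
\item The paper gets it from convexity. At $\beta=0$ the normalized law is geometric, $w_0(y)=(1-z_0)z_0^{y-1}$, so $\mathbb{E}(-\rho)^{Y-1}=(1-z_0)/(1+\rho z_0)>0$. Implicit differentiation then gives $\partial_\beta\lambda(p,0)<0$, so the minimizer of the convex function $\lambda(p,\cdot)$ is positive.
\item Your suggested route via stationarity also works directly. Fix $z=2^{-\lambda^*}$ and set $g(b)=\sum_{j\ge0}z^je^{-b(-\rho)^j}(-\rho)^j$. Then $g'(b)=-\sum_{j\ge0}z^je^{-b(-\rho)^j}\rho^{2j}<0$ and $g(0)=1/(1+\rho z)>0$. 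Hence the stationarity condition $g(b^*)=0$ forces $b^*>0$.
\end{itemize}

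Two smaller points. First, in Step~1 you only need convexity, which joint log-convexity does give via the epigraph; strict convexity is unnecessary. However, monotonicity in $\lambda$ gives only uniqueness of $\lambda(p,\beta)$. Existence also needs the limits of the sum as $\lambda\downarrow0$ (where it diverges) and as $\lambda\to\infty$. These limits also give $\lambda^*>0$, hence $z<1$, as the lemma requires. Second, $\mathbb{E}X<\infty$ is immediate from $\mathbb{E}X=\mathbb{E}Y-p$, since $Y$ has a geometric tail.
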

Their original conjecture includes $p=1$ in the range, unlike what is stated above. The conjecture for the range $1/2<p<1$ is proven in Theorem~\ref{dext:thm:capacity} and the endpoint $p=1$ case is discussed in Remark~\ref{dext:rem:capacity-endpoint}. The proof and formal statement of the above deconvolution argument is stated in Lemma~\ref{lem:positive-deconvolution} and uses the same construction argument as Theorem~\ref{thm:M2n-membership}, described above, but for an infinite-support case. The bound is proven using the same idea as Gibbs variational inequality in Proposition~\ref{prop:gibbs}.

\subsection{Related work}

The exact loaded-dice problem begins with \textcite{kelly1950}. \textcite{chenRaoShreve1997} showed that two dice of the same order cannot have a uniform sum. \textcite[Main Theorem and Main Corollary]{gasarchKruskal1999} later characterized fair tuples
through their ``nice'' dice and the uniqueness of representations of totals. Respectively \cite[Corollaries~7 and~8]{gasarchKruskal1999} give the even-order and repeated-order obstructions. \textcite[Theorem~5.1]{morrison2018} completed this structural picture with an explicit construction of all fair sacks. These results concern exact uniformity. Our projection problems ask instead for the closest attainable law when exact uniformity fails.

The factorization viewpoint also connects the problem with discrete tiling. For factor polynomials with coefficients in $\set{0,1}$, exact factorization amounts to a direct-sum tiling problem for intervals or cyclic groups. \textcite{covenMeyerowitz1999}. develop the associated polynomial and cyclotomic methods Since our factors may have arbitrary nonnegative real coefficients, the tiling analogy is informative but does not yield the weighted extremal results considered here. In a more directly probabilistic direction, a recent work of \textcite{klich2026} studies positive factorization and its local stability for aggregate counting laws; see \cite[Remark~2.6 and Theorem~4.5]{klich2026}.

Approximation to a fair total already appears in \textcite{gasarchKruskal1999}, who asked for the $\ell_2$-approximation to uniformity and reported numerical experiments in \cite[pp.~137--138]{gasarchKruskal1999}. \textcite[Theorem~1]{asgarliEtAl2024} subsequently obtained the exact optimizer and proved its uniqueness up to swapping. For the $\KL$-projection, \cite[Theorem~2.1, equation~(2.3)]{kovacevic2021} on maximum-entropy for sums on the same support implies the exact equal-support two-sum result. Our negative-root entropy inequality addresses the different, asymmetric $2$-by-$n$ setting.

Finally, prescribed-support membership is related to computational questions about divisibility and decomposability of finite distributions. \textcite{bauschCubitt2016} distinguish divisibility into identically distributed factors from decomposition into arbitrary nontrivial factors. The former is polynomial-time decidable, whereas the latter is NP-complete; see \cite[Definitions~52 and~65, Theorems~54 and~77, and Lemma~68]{bauschCubitt2016}.

\subsection{Acknowledgements.} We would like to thank ChatGPT $5.6$ Pro for suggesting an alternative reduced lemma whose consequence is the $\KL-$projection problem in $\cM_{2,n}$, although it gave an incorrect reference and failed to produce a correct proof of the reduced lemma independently. However it was an immense help because this lemma was indeed correct and was eventually proven by the authors. We have also used ChatGPT for editing and language. NM would also like to thank Budhaditya Halder, Bora \c{C}alim and Jonathan Scarlett for helpful discussions.

\section{Membership and recronstruction in $\cM_{2,n}$}

\subsection{Membership}
We begin with exact membership of the uniform law in $\cM_{2,n}$. For $\bm p\in\DeltaS{m-1}$ and $\bm q\in\DeltaS{n-1}$, convolution is given by
\begin{equation}
(\bm p*\bm q)_k
=\sum_{i=\max\set{0,k-n+1}}^{\min\set{m-1,k}}p_iq_{k-i},
\qquad 0\leq k\leq m+n-2.
\label{eq:convolution}
\end{equation}
The identity $G_{\bm p*\bm q}=G_{\bm p}G_{\bm q}$ translates membership in $\cM_{m,n}$ exactly into factorization of $G_{\bm r}$. Namely, that the two factors have degrees at most $m-1$ and $n-1$, nonnegative coefficients, and value $1$ at $x=1$. This also gives a useful existence fact for the approximation problems.

Nonnegativity has another elementary consequence. Since no cancellation is possible,
\begin{equation}
\supp(\bm p*\bm q)=\supp(\bm p)+\supp(\bm q),
\label{eq:support-sumset}
\end{equation}
where the right side is the sumset of subsets of the integers. Vanishing coordinates therefore detect combinatorial restrictions on the factors and later appear as natural boundary strata.

The uniform generating polynomial is a normalized geometric sum. Its roots are roots of unity, so parity gives immediate obstructions, while divisibility yields useful constructions We state two consequences that will help the later approximation results.

\begin{proposition}
\label{prop:classical-fairness}
Let $\bm u=(m+n-1)^{-1}\bm 1$ on
$\set{0,\ldots,m+n-2}$.
\begin{enumerate}[label=\textup{(\roman*)}]
\item If $m$ and $n$ are both even, then $\bm u\notin\cM_{m,n}$.
\item If $m\mid n-1$ or $n\mid m-1$, then
   $\bm u\in\cM_{m,n}$.
\end{enumerate}
\end{proposition}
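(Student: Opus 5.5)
Both parts go through the generating-function identity $G_{\bm p*\bm q}=G_{\bm p}G_{\bm q}$. Set $N=m+n-1$. The uniform law on $\set{0,\ldots,N-1}$ has
$$G_{\bm u}(x)=\frac{1}{N}\cdot\frac{x^{N}-1}{x-1}=\frac1N\prod_{\zeta^N=1,\ \zeta\neq 1}(x-\zeta).$$
Its roots are the nontrivial $N$th roots of unity, and they are all simple.

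\emph{Part (i).} Suppose $m$ and $n$ are both even and $\bm u=\bm p*\bm q$ with $\bm p\in\Delta_{m-1}$, $\bm q\in\Delta_{n-1}$. Since $G_{\bm u}$ has degree exactly $N-1=m+n-2$ and $\deg G_{\bm p}\le m-1$, $\deg G_{\bm q}\le n-1$, both factors must have full degree: $\deg G_{\bm p}=m-1$ and $\deg G_{\bm q}=n-1$. These degrees are odd. A real polynomial of odd degree has a real root, so $G_{\bm p}$ has a real root $\alpha$, and $\alpha$ is then a root of $G_{\bm u}$.

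On the other hand, $N=m+n-1$ is odd, so $-1$ is not an $N$th root of unity. Hence the only real $N$th root of unity is $1$, and $1$ is excluded from the roots of $G_{\bm u}$. So $G_{\bm u}$ has no real root, which is a contradiction. This is exactly Kelly's argument from the introduction, generalized.

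\emph{Part (ii).} By symmetry it suffices to treat $m\mid n-1$. Write $n-1=km$, so that $N=m+n-1=(k+1)m$. We factor
$$\frac{x^{(k+1)m}-1}{x-1}=\frac{x^m-1}{x-1}\cdot\frac{x^{(k+1)m}-1}{x^m-1}=\Bigl(\sum_{i=0}^{m-1}x^i\Bigr)\Bigl(\sum_{j=0}^{k}x^{jm}\Bigr).$$
Define $\bm p=m^{-1}\bm 1\in\Delta_{m-1}$, the uniform law on $\set{0,\ldots,m-1}$. Define $\bm q$ to be uniform on $\set{0,m,2m,\ldots,km}$, each point with mass $1/(k+1)$. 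Since $km=n-1$, the law $\bm q$ is supported inside $\set{0,\ldots,n-1}$, so $\bm q\in\Delta_{n-1}$.

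Both $\bm p$ and $\bm q$ have nonnegative coefficients summing to $1$. The product of their normalizing constants is $\frac1m\cdot\frac1{k+1}=\frac1N$. Therefore $G_{\bm p}G_{\bm q}=G_{\bm u}$, and so $\bm u=\bm p*\bm q\in\cM_{m,n}$.

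Probabilistically, this just says that a uniform digit in base $m$ plus $m$ times an independent uniform element of $\set{0,\ldots,k}$ is uniform on $\set{0,\ldots,(k+1)m-1}$. This can also be checked directly from \eqref{eq:convolution}, since each integer in that range has a unique representation in this form.

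Neither part has a real obstacle. The only point needing care is the degree bookkeeping in (i): we must confirm that both factors have full degree, which follows from $\deg G_{\bm u}=m+n-2$. Only then does oddness of $m-1$ force a real root.
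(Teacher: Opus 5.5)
Your proof is correct and matches the paper's argument. For (i), degree bookkeeping forces full odd degrees, so a factor has a real root, while $G_{\bm u}$ has none because its roots are the nontrivial $(m+n-1)$st roots of unity. For (ii), you use the same factorization, $\frac{x^{(k+1)m}-1}{x-1}=\bigl(\sum_{i<m}x^i\bigr)\bigl(\sum_{j\le k}x^{jm}\bigr)$.
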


\begin{proof}
For (i), the positive leading coefficient of $G_{\bm u}$ forces both factor polynomials to have their full prescribed degrees $m-1$ and $n-1$. If $m,n$ are even, both degrees are odd, so both real factor polynomials have a real zero. On the other hand,
$$
G_{\bm u}(x)=\frac{1+x+\cdots+x^{m+n-2}}{m+n-1}
$$
has as its zeros the nontrivial $(m+n-1)$st roots of unity. Since $m+n-1$ is odd, none is real, a contradiction. This is a special case of \cite[Corollary~7]{gasarchKruskal1999}.

For (ii), WLOG say $n-1=mL$. With $\Psi_d(x)=(x^d-1)/(x-1)$, the identity $\Psi_{m(L+1)}(x)=\Psi_m(x)\Psi_{L+1}(x^m)$ proves the result. This is a special case of \cite[Corollary~4.2]{morrison2018}.
\end{proof}

\begin{remark}
Divisibility is a convenient construction, not a characterization. For example,
$$
(1+x)(1+x^6)\,(1+x^2+x^4)=1+x+\cdots+x^{11}
$$
gives a fair $8$-by-$5$ pair after normalization, although neither $8\mid4$ nor $5\mid7$.
\end{remark}

The factorization fact above makes the membership situation in $\cM_{2,n}$ rigid and provides the actual obstructions. While the following result immediately follows from Proposition~\ref{prop:classical-fairness}, we still prove it to demonstrate to the reader the power of generating polynomials and its actual role in the rigidity of $\cM_{2,n}$.

\begin{proposition}[Exact fairness in $\cM_{2,n}$]
\label{prop:M2n-fairness}
Let $n\geq2$ and let $\bm u=(n+1)^{-1}\bm 1\in\DeltaS n$. Then $\bm u\in\cM_{2,n}$ if and only if $n$ is odd. When $n$ is odd, the ordered factorization is unique and is given by $\bm p =(1,1)/2, \bm q = (1,0,1,\cdots,1)\cdot 2/(n+1)$.
\end{proposition}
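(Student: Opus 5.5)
The plan is to work entirely with generating polynomials, as in the proof of Proposition~\ref{prop:classical-fairness}. Suppose $\bm u=\bm p*\bm q$ with $\bm p=(a,1-a)\in\DeltaS1$ and $\bm q\in\DeltaS{n-1}$. Then
$$(n+1)\,G_{\bm p}(x)G_{\bm q}(x)=1+x+\cdots+x^n .$$
Both endpoint coefficients of $G_{\bm u}$ are positive: $u_0=p_0q_0$ and $u_n=p_1q_{n-1}$. Hence $a\in(0,1)$, so $G_{\bm p}(x)=a+(1-a)x$ has exactly one real root,
$$x=-\frac{a}{1-a}<0 .$$
This must also be a root of $G_{\bm u}$. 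The roots of $G_{\bm u}$ are the nontrivial $(n+1)$st roots of unity, and the only real negative one is $-1$. It exists exactly when $n+1$ is even, i.e.\ $n$ odd. This gives the ``only if'' direction and forces $a/(1-a)=1$, so $a=1/2$. Note that for $n$ even this is just Proposition~\ref{prop:classical-fairness}(i), but the root argument gives uniqueness of $\bm p$ for free.

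For the converse and for uniqueness of $\bm q$, I take $n$ odd and $\bm p=(1/2,1/2)$. Polynomial division by $1+x$ is unique, so $G_{\bm q}$ is determined:
$$G_{\bm q}(x)=\frac{2}{n+1}\cdot\frac{1+x+\cdots+x^n}{1+x}=\frac{2}{n+1}\bigl(1+x^2+x^4+\cdots+x^{n-1}\bigr),$$
using $1+x+\cdots+x^n=(1+x)(1+x^2+\cdots+x^{n-1})$ for odd $n$. Alternatively, one can read $q_k$ off recursively from \eqref{eq:convolution}: $q_0=2u_0$ and $q_k=2u_k-q_{k-1}$, which gives the alternating pattern $2/(n+1),0,2/(n+1),\dots$. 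The coefficients of this $G_{\bm q}$ are nonnegative, sum to $1$, and it has degree $n-1$, so $\bm q\in\DeltaS{n-1}$ and $\bm u\in\cM_{2,n}$. This $\bm q$ matches the vector stated in the proposition. (There is a small typo there: the pattern should alternate $1,0,1,0,\ldots,1$ with length $n$.)

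There is no real obstacle. The only point needing care is ruling out degenerate Bernoulli factors ($a\in\{0,1\}$), which would make $G_{\bm p}$ constant and leave no root to exploit. I handle this with the endpoint-coefficient observation above, which is equivalently the support sumset identity \eqref{eq:support-sumset}.
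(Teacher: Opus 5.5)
Your proof is correct and follows essentially the same route as the paper: the linear factor $a+(1-a)x$ must contribute a negative real root of $1+x+\cdots+x^n$. Such a root exists only for odd $n$ and equals $-1$, which forces $a=1/2$; division by $1+x$ then determines $\bm q$ uniquely, and the degenerate cases $a\in\set{0,1}$ are excluded by the positivity of $u_0$ and $u_n$, exactly as in the paper. One small correction: the statement has no typo, since $(1,0,1,\cdots,1)$ is meant as the same alternating vector of length $n$ that you derive.
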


\begin{proof}
Every nondegenerate two-point law has a generating polynomial with one negative real root. The roots of $1+x+\cdots+x^n$ are the nontrivial $(n+1)$st roots of unity, and among them a negative real root occurs exactly when $n$ is odd -- that root is $-1$. Consequently any factorization of $\bm u$ forces $G_{\bm p}(x)=(1+x)/2$, after normalization at $1$. Polynomial division gives the abovementioned vector $\bm q$, whose entries are nonnegative and sum to one. Direct multiplication gives $G_{\bm p}(x) G_{\bm q}(x) = (1+x+\cdots+x^n)/(n+1)$. The root determines the first factor, and division determines the second, so the factorization is unique. A degenerate two-point factor is impossible because the uniform output has two positive endpoint coordinates.
\end{proof}

Thus odd $n$ is settled exactly. When $n$ is even, uniformity fails and the rest of the paper asks how close one can get to fairness.

\subsection{Roots and Reconstruction}

Recall that $\cM_{2,n}$ is the set of laws of sums of the form $B+X$ where $B\in\set{0,1}$ is Bernoulli, $X\in \set{0,\cdots,n-1}$ and both are independent. We shall show that once the Bernouli factor is chosen, the other factor can be recovered one coordinate at a time. Say the probability vectors for $B,X$ are respectively $(a,1-a), \bm p$. Let's encode the first factor by its odds ratio $t=(1-a)/a.$

For a vector $\bm r\in \Delta_n$ define the polynomials 
\begin{align}
Q_0(t) &= (1+t)r_0,\notag\\
Q_k(t) &= (1+t)r_k - tQ_{k-1}(t), \qquad 1\le k\le n-1.\label{eq:recover-q}
\end{align}
and the generating polynomial \begin{align}
\label{eq:gen-r}R_{\bm r} = r_n - tr_{n-1} + \cdots+(-t)^nr_0 = (-t)^nG_{\bm r}(-1/t).
\end{align}

The leftmost convolution equation gives $q_0$, each subsequent equation gives the next $q_k$, and the final coordinate closes the recursion with a single polynomial equation in $t$. The following theorem makes this reduction precise.

\begin{theorem}[Membership and factor reconstruction]
\label{thm:M2n-membership}
There is a bijection between ordered factorizations $\bm r= (a,1-a)*\bm q\in \Delta_n$ with $a>0$, and values $t\ge 0$ satisfying $R_{\bm r} = 0, Q_k(t)\ge 0~\forall~ 0\le k<n$. The correspondence is
$$a=\frac{1}{1+t}, \qquad q_k=Q_k(t).$$
\end{theorem}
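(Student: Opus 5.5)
Divide the convolution equations by $a$ and write them in terms of the odds ratio. If $\bm r=(a,1-a)*\bm q$ with $a>0$ and $t=(1-a)/a\ge 0$, then $1/a=1+t$ and $(1-a)/a=t$. By \eqref{eq:convolution} the coordinates satisfy
\begin{equation*}
r_0=aq_0,\qquad r_k=aq_k+(1-a)q_{k-1}\ (1\le k\le n-1),\qquad r_n=(1-a)q_{n-1}.
\end{equation*}
Multiplying by $1+t$, the first $n$ equations become $q_0=(1+t)r_0$ and $q_k=(1+t)r_k-tq_{k-1}$. Inducting on $k$, these give exactly $q_k=Q_k(t)$ from \eqref{eq:recover-q}. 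In particular $Q_k(t)\ge 0$ for all $0\le k<n$, since $\bm q$ is a probability vector.

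The last equation reads $(1+t)r_n=tQ_{n-1}(t)$. To see that this is $R_{\bm r}(t)=0$, I would prove by induction the closed form
\begin{equation*}
Q_k(t)=(1+t)\sum_{j=0}^{k}(-t)^{k-j}r_j .
\end{equation*}
The induction step is immediate from the recursion. Then
\begin{equation*}
(1+t)r_n-tQ_{n-1}(t)=(1+t)\sum_{j=0}^{n}(-t)^{n-j}r_j=(1+t)R_{\bm r}(t),
\end{equation*}
and since $1+t>0$, the final equation is equivalent to $R_{\bm r}(t)=0$. This gives the forward map from factorizations to admissible $t$.

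For the converse, take $t\ge0$ with $R_{\bm r}(t)=0$ and all $Q_k(t)\ge0$. Set $a=1/(1+t)\in(0,1]$ and $q_k=Q_k(t)$. Reading the computation above backwards, all $n+1$ convolution equations hold. The main point to check is that $\bm q$ sums to one. Summing the convolution identities gives $\sum_k r_k=(a+(1-a))\sum_k q_k$, so $\sum_k q_k=1$ because $\bm r\in\Delta_n$. Together with nonnegativity this puts $\bm q\in\Delta_{n-1}$.

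Finally, the two maps are mutually inverse. The factorization determines $t$ through $a$, and $t$ determines $\bm q$ uniquely through the recursion. So the correspondence is a bijection. The only mildly delicate step is the telescoping identity relating $Q_{n-1}$ to $R_{\bm r}$, and the explicit closed form for $Q_k$ handles it.
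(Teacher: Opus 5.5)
Your proposal is correct and follows essentially the same route as the paper: you rescale the convolution equations by $1+t$, use the closed form $Q_k(t)=(1+t)\sum_{j\le k}(-t)^{k-j}r_j$ to turn the last equation into $R_{\bm r}(t)=0$, and use mass conservation in the converse to get $\sum_k q_k=1$. Your explicit identity $(1+t)r_n-tQ_{n-1}(t)=(1+t)R_{\bm r}(t)$ states a little more cleanly the step the paper phrases as ``substituting $k=n-1$ and multiplying by $t$.''
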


\begin{proof}
Suppose first that $\bm r=(a,1-a)*\bm q$ with $a>0$, and set $t=(1-a)/a$. Writing the convolution equations gives
\begin{align*}
r_0&=aq_0,\\
r_k&=aq_k+(1-a)q_{k-1}\quad(1\leq k\leq n-1),\\
r_n&=(1-a)q_{n-1}.
\end{align*}
The first $n$ equations above give \eqref{eq:recover-q} and $q_k=Q_k(t)$. Iterating the recursion yields
\begin{equation}
Q_k(t)=(1+t)\sum_{j=0}^{k}(-t)^{k-j}r_j.
\label{eq:Q-closed}
\end{equation}
Using the equation $tQ_{n-1}(t) = r_n(1+t)$, substituting $k=n-1$ into the above and multiplying by $t$ gives precisely $R_{\bm r}(t)=0$. Therefore a factorization of $\bm r$ gives a value $t\ge 0$ which is a root of $R_{\bm r}$  and evaluates to a non-negative value on each $Q_k$.

Conversely, assume $t\ge 0$ is a root of $R_{\bm r}$ such that $Q_{k}(t)\ge 0~\forall~0\le k<n$. Define $a = (1+t)^{-1}$ and $\bm q = (Q_0(t),\cdots,Q_{n-1}(t))$. The recursion recovers all but the final convolution coordinate $r_n$. But the equation $R_{\bm r}(t)=0$ recovers the value of $r_n$. Convolution by $(a,1-a)$ preserves total mass, so $\sum_i r_i=1$ forces $\sum_iq_i=1$. Together with the assumed inequalities, this forces $\bm q\in \DeltaS{n-1}$. The above equations also show that a fixed $t$ admits no second choice of $\bm q$. 
\end{proof}

\begin{remark}
The remaining case for $a=0$ exists exactly when $r_0=0$, and then $\bm q=(r_1,\ldots,r_n)$. Equivalently, it is the reversal of the $t=0$ branch.
\end{remark}

We may therefore think of an ordered factorization as a negative root of the generating polynomial $G_{\bm r}$ of $\bm r$. But this is only an `admissible' root because polynomial division by a negative root need not have nonnegative coefficients which is needed for a valid probability law, and the inequalities $Q_k(t) \ge 0$ are exactly what rules this out.

Using the same technique as above, one can prove a deconvolution for the extreme case of $\bm r$ given by the Gibbs inequality, either when $n$ is even or the support is the set of all non-negative integers. Precisely speaking, let $k\in \mathbb N\cup \set{0,\infty}, b>0, \rho\in(0,1], z\in (0,1]$ and consider the density $\bm r$ supported on $\mathbb Z\cap [0,2k+1)$ $$r_j\propto z^j\exp\set{-b(-\rho)^j}.$$
If $k=\infty$, we will require $\rho<1,z<1$. Let $A$ be the normalizing constant that makes $r_j$'s sum to $1$.

\begin{lemma}
\label{lem:positive-deconvolution}
Suppose $k,\rho,b,z$ are as above such that $\E{Z\sim \bm r}[(-\rho)^Z] = 0$. Then 
\begin{enumerate}
\item there is a unique probability law $\bm q$ such that $\bm r = (a,1-a)*\bm q$ where $a=\rho/(1+\rho)$;
\item all coordinates of $\bm q$ are strictly positive;
\item if $k=\infty$, then $q_j \le Ae^b\rho z^{j+1}/[a(1-\rho z)]~\forall~j\ge 0$.
\end{enumerate}
\end{lemma}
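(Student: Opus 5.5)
The plan is to run the reconstruction of Theorem~\ref{thm:M2n-membership} with the odds ratio $t=(1-a)/a=1/\rho$, and then reduce all three claims to sign and size information about the tail sums of the sequence $w_j\sett(-\rho)^jr_j$.

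\textbf{Setup.} With $t=1/\rho$, the root condition $R_{\bm r}(t)=0$ is equivalent to $G_{\bm r}(-\rho)=0$, which is exactly the hypothesis $\E{Z\sim\bm r}[(-\rho)^Z]=0$. Since $a>0$, the convolution equations $r_0=aq_0$ and $r_j=aq_j+(1-a)q_{j-1}$ determine $\bm q$ recursively. This gives uniqueness in (1) immediately, for finite and infinite $k$ alike. In the infinite case I will redo this recursion directly rather than quote the theorem. By \eqref{eq:Q-closed},
$$q_j=(1+\rho^{-1})(-\rho)^{-j}S_j,\qquad S_j\sett\sum_{i\le j}w_i=-\sum_{i>j}w_i .$$
The second expression for $S_j$ uses $\sum_i w_i=0$. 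In the finite case this same identity gives the closing equation $r_{2k}=(1-a)q_{2k-1}$, and mass conservation then gives $\sum_j q_j=1$. In the infinite case, mass conservation follows once we know $q_j\to0$ summably, which is what (3) provides.

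Thus (1) and (2) reduce to showing that $S_j$ has sign $(-1)^j$ strictly.

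\textbf{The sign pattern.} Write $w_j$ up to the positive normalizing constant as $(-1)^j(\rho z)^j e^{-b(-\rho)^j}$. Even-indexed terms are positive and damped by the factor $e^{-b\rho^j}$; odd-indexed terms are negative and boosted by $e^{b\rho^j}$.

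For $j$ even, I pair the tail $\sum_{i>j}w_i$ as $(w_{j+1}+w_{j+2})+(w_{j+3}+w_{j+4})+\cdots$. Each pair is odd followed by even, and
$$\frac{|w_{2i+2}|}{|w_{2i+1}|}=\rho z\,e^{-b(\rho^{2i+1}+\rho^{2i+2})}<1 .$$
So every pair is negative. In the finite case the last index $2k$ is even, so the pairing is exact; in the infinite case the series converges absolutely because $\rho z<1$. Hence the tail is negative and $S_j>0$.

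For $j$ odd, this pairing does not work, and this is the step I expect to be the real obstacle. I plan to use the pairs $P_i=w_{2i}+w_{2i+1}$ instead. The ratio
$$\frac{|w_{2i+1}|}{|w_{2i}|}=\rho z\,e^{b\rho^{2i}(1+\rho)}$$
is strictly decreasing in $i$. Therefore the sequence $(P_i)$ is negative for small $i$ and positive afterwards, with at most one sign change and at most one zero.

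Now $S_{2m+1}=\sum_{i\le m}P_i$. These partial sums first decrease and then increase. Their terminal value is $-w_{2k}<0$ in the finite case, and the limit $0$ in the infinite case. Unimodality then forces every partial sum to be strictly negative:
\begin{itemize}
\item If $P_0\ge0$, then all later $P_i$ are positive, so the total would be positive, which is a contradiction.
\item In the infinite case, the $P_i$ cannot stay $\le0$ forever: a nonpositive sequence summing to $0$ would be identically zero, which is impossible.
\end{itemize}
This gives $S_j<0$ for odd $j$, and hence (2).

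\textbf{The bound (3).} For $k=\infty$, I bound the tail crudely:
$$|S_j|\le\sum_{i>j}|w_i|\le \frac{e^b}{N}\sum_{i>j}(\rho z)^i=\frac{e^b(\rho z)^{j+1}}{N(1-\rho z)},$$
where $N$ is the normalizing constant of $\bm r$ and I have used $e^{b\rho^i}\le e^b$. Multiplying by $(1+\rho^{-1})\rho^{-j}=\rho^{-j}/a\cdot\rho^{-1}\cdot\rho$ gives
$$q_j\le \frac{e^b\rho z^{j+1}}{N\,a(1-\rho z)} .$$
This is the stated bound with the paper's convention for $A$, namely $A=1/N$, the factor multiplying $z^je^{-b(-\rho)^j}$. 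The bound is summable, which also justifies the mass-conservation step in the infinite case.
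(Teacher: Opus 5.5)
Your proposal is correct and follows essentially the same route as the paper. It reconstructs $\bm q$ through the recursion of Theorem~\ref{thm:M2n-membership} with $t=1/\rho$ and reduces positivity to the sign pattern $(-1)^jS_j>0$. It gets the even case from $|w_{2i+2}|<|w_{2i+1}|$ (the paper's $u_{j+1}<v_j$), the odd case from the single sign change of $w_{2i}+w_{2i+1}$ plus the negative terminal value, and the bound (3) from a geometric tail. One small slip: when $\rho=1$ (allowed for finite $k$) the ratio $\rho z\,e^{b\rho^{2i}(1+\rho)}$ is constant, not strictly decreasing; but then all $P_i$ share a sign and $S_{2k-1}=-w_{2k}<0$ forces them negative, so the argument survives, and your explicit use of (3) to get $\sum_j q_j=1$ when $k=\infty$ fills a detail the paper leaves implicit.
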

\begin{proof}
If $k<\infty$ and $n=2k$, one can invoke Theorem~\ref{thm:M2n-membership} with $t=1/\rho$ after verifying the polynomial constraints. Let $a = \rho/(1+\rho)$. We verify these as follows. $R_{\bm r}(t) = (-t)^n\E{Z\sim \bm r}[(-t)^{-Z}] = (-\rho)^{-n}\E{Z\sim \bm r}[(-\rho)^{Z}] = 0$ by the assumption in the theorem.

\eqref{eq:Q-closed} gives the closed form $Q_\ell(t) = (1+1/\rho)\sum_{j=0}^\ell(-\rho)^{j-\ell}r_j$ and we want to verify that this quantity is positive. Therefore we precisely want to prove that the partial sums $$S_\ell \sett \sum_{j=0}^\ell(-1)^j(\rho z)^j \exp\set{-b(-\rho)^j}$$ alternate in signs because $a(-\rho)^\ell Q_\ell(t) = AS_\ell$. Write $w=\rho z$, $u_j = w^{2j}\exp\set{-b\rho^{2j}}$ and $v_j = w^{2j+1}\exp\set{b\rho^{2j+1}}$. In this notation $$S_\ell = \sum_{0\le i\le \ell, 2\mid i} u_{i/2} - \sum_{0\le i\le \ell, 2\nmid i} v_{(i-1)/2}.$$
Consider $$\frac{v_j}{u_{j+1}} = w^{-1}\exp\set{b\rho^{2j+1}(1+\rho)} > 1, \qquad \frac{v_j}{u_j} = w \exp\set{b\rho^{2j}(1+\rho)}.$$
Note that the second ratio is decreasing in $j$, and strictly decreasing when $\rho<1$. The assumption that $\E{Z\sim \bm r}(-\rho)^Z = 0$ implies that $$\sum_{j=0}^k u_j = \sum_{j=0}^{k-1}v_j.$$
Therefore $$ S_{2\ell} = u_0 + \sum_{j=0}^{\ell-1} (u_{j+1}-v_j)$$ is strictly decreasing in $\ell$ as $u_{j+1}<v_j$. This equals zero at $\ell=k$. Therefore, it must be strictly positive for $0\le j < k$. This shows that all the even-indexed partial sums $S_{2\ell}$ are positive whence $Q_{2\ell}(t) > 0~\forall 0\le \ell <k$.\\
Now focus on the odd sums $$S_{2\ell-1} = \sum_{j=0}^{\ell-1} (u_j-v_j) = \sum_{j=0}^{\ell-1} u_j\left(1-w\exp\set{b(1+\rho)\rho^{2j}}\right).$$
Since $\exp\set{b(1+\rho)\rho^{2j}}$ decreases with $j$, the signs of $u_j-v_j$ can change at most once from negative to positive. But $S_{2k-1} = (u_0-v_0) + \cdots (u_{k-1}-v_{k-1}) = -u_{k} <0$. This means that every prefix sum of the sequence $\set{u_j-v_j}_{k=0}^{k-1}$ should be negative. This prefix sum is precisely $S_{2\ell-1}$. This proves $S_{2\ell-1}<0$ whence $Q_{2\ell-1}(t) > 0~\forall~1\le \ell < k$. By Theorem~\ref{thm:M2n-membership}, there is a unique factorization $\bm r = (a,1-a)*\bm q$.

For the case of $k=\infty$, all series converge absolutely. By the same calculations as above, the even partial sums decrease strictly to zero and hence are positive. The terms $u_j-v_j$ that sum to the odd partial sums can change sign at most once from negative to positive since $v_j/u_j$ decreases strictly. Further since $u_j-v_j$ sum to zero (due to the expectation polynomial constraint), one of them must be negative. This completes the argument that the odd partial sums are all negative. The recursive structure argument of Theorem~\ref{thm:M2n-membership} still works here and gives a $\bm q$ such that $\bm r=(a,1-a)*\bm q$. Here is a proof for exponentially decaying sums:
$$q_j=-\frac{(-\rho)^{-j}}a \sum_{\ell=j+1}^{\infty}(-\rho)^\ell r_\ell \le \frac{Ae^b}{a}\rho^{-j} \sum_{\ell=j+1}^{\infty}(\rho z)^\ell =\frac{Ae^b\rho z^{j+1}}{a(1-\rho z)}.$$
\end{proof}

\section{Projections in the $2-$by$-n$ model}\label{sec:proj-m2n}
Throughout this section, $$\bm u = \frac1{n+1}\bm 1 \in \Delta_n, \qquad \bm p=(a,1-a) \in \Delta_1, \qquad \bm q \in \Delta_{n-1}, \qquad \bm r = \bm p*\bm q\in \Delta_n.$$
We have already discussed in Proposition~\ref{prop:M2n-fairness} that when $n$ is odd, $\bm u\in \cM_{2,n}$. So in the rest of the section, we discuss about projections of $\bm u$ onto $\cM_{2,n}$ under the discrepancies $\ell_2, \KL, \ell_1, W^{\text{line}}_1$. We naturally divide section this into two parts. One discusses the $\ell_2,\KL$ projections since they are smooth in nature. The other part discusses the $\ell_1, W_1^{\text{line}}$ metrics which are polyhedral.

\subsection{$\ell_2$ and forward $\KL$}
For the quadratic problem \begin{align}\label{eq:l2min}
\min_{\bm r\in \cM_{2,n}}\norm{\bm r-\bm u}{2}^2
\end{align} the following calculation shows that the above projection is equivalent to $\min_{\bm r\in \cM_{2,n}}\norm{\bm r}{2}^2$: $$\bm r\in \Delta_n\implies\norm{\bm r-\bm u}{2}^2 = \norm{\bm r}{2}^2 + \norm{\bm u}{2}^2 - 2\bm u^\top \bm r = \norm{\bm r}{2}^2 - \frac{1}{n+1}.$$
Lemma~\ref{lem:quad-reduction} shows that the optimal $\bm p=(a,1-a)$, for any given $\bm q\in \Delta_{n-1}$, occurs at $a=1/2$. This then makes the final problem an optimization over $\bm q\in \Delta_{n-1}$ which is simply an $\ell_2$ projection. Note that the new objective is to minimize the quadratic form $\frac12\bm q^\top A_n \bm q$ over the simplex $\Delta_{n-1}$ where $(A_n)_{ii}=1$, $(A_n)_{i,i+1} = (A_n)_{i+1,i} = \frac12$ and other entries are zero. This is a positive definite form because expanding out gives a positive expression. 

\begin{theorem}
For even $n$, the optimal value of \eqref{eq:l2min} is $[n(n+1)(n+2)]^{-1}$ with solutions \begin{equation}\label{eq:l2solution}
\bm p^*=\left(\frac12,\frac12\right), \qquad \bm q^* = \frac{2}{n(n+2)}\left(n,2,n-2,4,\cdots,2,n\right).
\end{equation}
\end{theorem}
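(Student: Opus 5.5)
The plan is to reduce the problem to a convex quadratic program over the simplex and then certify an explicit interior point by its first-order (KKT) conditions.

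\textbf{Reduction.} By the computation preceding the statement, minimizing $\norm{\bm r-\bm u}{2}^2$ over $\cM_{2,n}$ is the same as minimizing $\norm{\bm r}{2}^2$, up to the additive constant $-1/(n+1)$. Invoking Lemma~\ref{lem:quad-reduction}, for every fixed $\bm q\in\Delta_{n-1}$ the optimal Bernoulli factor is $a=1/2$, so we may set $\bm p=(1/2,1/2)$. Then $\bm r=\tfrac12(q_0,\,q_0+q_1,\,\ldots,\,q_{n-2}+q_{n-1},\,q_{n-1})$. Expanding the squares gives
$$\norm{\bm r}{2}^2=\tfrac14\Bigl(2\sum_i q_i^2+2\sum_i q_iq_{i+1}\Bigr)=\tfrac12\,\bm q^\top A_n\bm q .$$
So the problem becomes the minimization of the positive definite quadratic form $\tfrac12\bm q^\top A_n\bm q$ over $\Delta_{n-1}$. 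This problem is strictly convex, so any KKT point is the unique global minimizer in $\bm q$.

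\textbf{Certificate.} Guess a fully supported optimum. KKT on the interior of the face $\set{\sum q_i=1}$ then reads $A_n\bm q=\lambda\bm 1$, i.e.
$$q_{i-1}/2+q_i+q_{i+1}/2=\lambda,$$
with the boundary rows $q_0+q_1/2=\lambda$ and $q_{n-2}/2+q_{n-1}=\lambda$. Take $\bm v=(n,2,n-2,4,\ldots,2,n)$, whose even-indexed entries decrease by $2$ from $n$ and whose odd-indexed entries increase by $2$ from $2$. The palindromic form requires $n$ even. Check that $A_n\bm v=(n+1)\bm 1$:
\begin{itemize}
\item the first row gives $n+1$;
\item an interior row with a large middle entry gives $\tfrac12(2j)+(n-2j)+\tfrac12(2j+2)=n+1$;
\item an interior row with a small middle entry gives $\tfrac12(n-2j)+(2j+2)+\tfrac12(n-2j-2)=n+1$.
\end{itemize}
Next compute $\sum_i v_i$. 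There are $n/2$ even-indexed terms $n,n-2,\ldots,2$ and $n/2$ odd-indexed terms $2,4,\ldots,n$, so
$$\sum_i v_i=2\cdot\tfrac{n}{2}\bigl(\tfrac n2+1\bigr)\cdot\tfrac{1}{1}\cdot\tfrac12\cdot 2=\tfrac{n(n+2)}{2}.$$
Hence the normalization is $\kappa=2/(n(n+2))$, as in \eqref{eq:l2solution}. All entries of $\bm q^*=\kappa\bm v$ are positive, so $\bm q^*$ is a KKT point with multiplier $\lambda=\kappa(n+1)$, and therefore it is the global minimizer.

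\textbf{Value.} Using $A_n\bm q^*=\lambda\bm 1$ and $\sum_i q^*_i=1$,
$$\tfrac12\bm q^{*\top}A_n\bm q^*=\tfrac12\lambda=\frac{n+1}{n(n+2)}.$$
Subtracting $1/(n+1)$ gives
$$\frac{(n+1)^2-n(n+2)}{n(n+1)(n+2)}=\frac1{n(n+1)(n+2)}.$$

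\textbf{Main obstacle.} Once the $\bm q$-problem is reduced, it is routine. The real work is the reduction to $a=1/2$, which I take from Lemma~\ref{lem:quad-reduction}. If I had to prove that lemma, I would write $\norm{\bm r}{2}^2$ as the quadratic $a^2X+(1-a)^2X-2a(1-a)Y$ in $a$, with $X=\norm{\bm q}{2}^2$ and $Y=\sum_i q_iq_{i+1}$. I would then use symmetry under $a\leftrightarrow 1-a$, together with convexity in $a$ (which follows from $X\ge Y$ by Cauchy--Schwarz), to place the minimizer at $a=1/2$.
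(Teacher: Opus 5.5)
Your proposal is correct and follows the paper's proof essentially step for step: Lemma~\ref{lem:quad-reduction} fixes $a=1/2$, and the interior point $\bm q^*$ with $A_n\bm q^*=\lambda\bm 1$ is certified by positive definiteness of $A_n$. Your appeal to KKT sufficiency for a strictly convex problem amounts to the paper's expansion $\bm q^\top A_n\bm q-\bm q^{*\top}A_n\bm q^*=\bm h^\top A_n\bm h$. One slip in your closing aside: the expansion is $\norm{\bm r}{2}^2=a^2X+(1-a)^2X+2a(1-a)Y=X-2a(1-a)(X-Y)$, with a plus sign on the cross term, and it is exactly this sign that makes convexity in $a$ depend on $X\ge Y$.
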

\begin{proof}
Lemma~\ref{lem:quad-reduction} reduces the problem to minimizing $\norm{(1/2,1/2)*\bm q}{2}^2$ over $\bm q\in \Delta_{n-1}$. Note that $\norm{(1/2,1/2)*\bm q}{2}^2 = \frac12\bm q^\top A_n\bm q$ where $A_n$ is stated above. The first order optimality condition gives $A_n\bm q = \lambda \bm 1$ for some $\lambda\in\R$, via Lagrange multipliers. Solving this system gives the solution $\bm q^*$ mentioned above. This is the minimizer because if $\bm q\in \Delta_{n-1}$ then $\bm h\sett \bm q-\bm q^*$ satisfies $\bm 1^\top \bm h=0$ whence $\bm q^\top A_n\bm q-(\bm q^*)^\top A_n(\bm q^*) = \bm h^\top A_n\bm h$ which can be zero only when $\bm h=0$ since $A_n\succ 0$. The optimal distance is a direct calculation.
\end{proof}

Now let's move on to projection under $\KL$. Namely, we want to solve \begin{equation}\label{eq:klmin}
\min_{\bm r\in \cM_{2,n}} \KL(\bm r \| \bm u).
\end{equation}
Note that $\KL(\bm r\|\bm u) = \log (n+1) - H(\bm r)$ where $H$ is the Shannon entropy given by $H(\bm r) = -\sum r_i\log r_i$. Therefore this problem is equivalent to maximizing the entropy $H(\bm r)$ of $\bm r$ over $\cM_{2,n}$. We use the convention $0\log 0 = 0$.

The entropy problem surprisingly has the same optimizer as the quadratic problem, but for a different reason. Since we want to optimize over $\bm r\in \cM_{2,n}$, there is negative root $-\rho$ of $G_{\bm r}$ by Theorem~\ref{thm:M2n-membership}. Namely,
$$\sum_{i=0}^n r_i(-\rho)^i = 0.$$
For a fixed $\rho$, define $$Z_\rho(\lambda) \sett \sum_{i=0}^n \exp\set{\lambda(-\rho)^i}, \qquad \mathcal Z(\rho) \sett \min_{\lambda\in \R} Z_\rho(\lambda).$$
By Proposition~\ref{prop:gibbs}, $$H(\bm r) \le \log Z_\rho(\lambda) - \lambda \sum_{i=0}^nr_i(-\rho)^i = \log Z_\rho(\lambda) ~\forall~\lambda \in\R.$$
Therefore $H(\bm r) \le \log \mathcal Z(\rho)$. Therefore it is enough to bound $\mathcal Z(\rho)$ over $\rho > 0$. Note that replacing $\rho$ with $1/\rho$ leaves the entropy unchanged whence it is enough to restrict to $\rho\in(0,1]$. This leads to the following lemma.

\begin{lemma}\label{lem:negative-root-entropy}
Let $n\in 2\mathbb N, \bm r\in \Delta_n,\rho>0$ be such that $G_{\bm r}(-\rho) = \E{Z\sim \bm r}(-\rho)^Z = 0$. Then $H(\bm r) \le \log \sqrt{n(n+2)}$. Equality holds iff $\rho=1, \bm r = \left((n+2)^{-1}, n^{-1}  , (n+2)^{-1}, \cdots, n^{-1} , (n+2)^{-1}\right).$
\end{lemma}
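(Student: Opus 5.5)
We reduce, as in the discussion preceding the statement, to showing $\mathcal Z(\rho)\le \sqrt{n(n+2)}$ for $\rho\in(0,1]$, with equality only at $\rho=1$. First, we use the symmetry $i\mapsto n-i$ to justify the restriction to $\rho\le 1$. Since $n$ is even, reversing $\bm r$ turns the root $-\rho$ of $G_{\bm r}$ into the root $-1/\rho$ and leaves $H(\bm r)$ unchanged. By Proposition~\ref{prop:gibbs}, $H(\bm r)\le \log\mathcal Z(\rho)$. Equality in Gibbs' inequality forces $\bm r$ to be the tilted law $r_i\propto \exp\{\lambda^*(-\rho)^i\}$ at the minimizer $\lambda^*$; the first-order condition $Z_\rho'(\lambda^*)=0$ is exactly $\sum_i r_i(-\rho)^i=0$. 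At $\rho=1$ the sum is
$$Z_1(\lambda)=\tfrac{n}{2}e^{\lambda}+\tfrac{n+2}{2}e^{-\lambda}$$
(there are $n/2+1$ even indices and $n/2$ odd ones). Its minimum is $2\sqrt{\tfrac n2\cdot\tfrac{n+2}{2}}=\sqrt{n(n+2)}$, attained at $e^{2\lambda^*}=(n+2)/n$. The tilted law then has weight $1/(n+2)$ at even indices and $1/n$ at odd indices, which is exactly the claimed extremizer. So everything hinges on the monotonicity claim: $\mathcal Z(\rho)$ is strictly increasing on $(0,1]$.

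To prove monotonicity, write $\lambda^*(\rho)$ for the unique minimizer. It exists because $Z_\rho$ is strictly convex and coercive in $\lambda$: for $\rho>0$ the values $(-\rho)^i$ take both signs. By the envelope theorem,
$$\frac{d}{d\rho}\mathcal Z(\rho)=\partial_\rho Z_\rho(\lambda)\Big|_{\lambda=\lambda^*}=\lambda^*\sum_{i=1}^n i(-1)^i\rho^{i-1}e^{\lambda^*(-\rho)^i}.$$
Multiplying by $\rho$, this becomes $\lambda^*\sum_i i\,(-\rho)^i w_i$ with $w_i=e^{\lambda^*(-\rho)^i}$, subject to the constraint $\sum_i(-\rho)^i w_i=0$. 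We first determine the sign of $\lambda^*$. At $\lambda=0$ the derivative is $Z_\rho'(0)=\sum_i(-\rho)^i=(1+\rho^{n+1})/(1+\rho)>0$, so $\lambda^*<0$. The target is therefore the weighted alternating inequality
$$\sum_{i=0}^n i\,(-\rho)^i w_i<0 \quad\text{given}\quad \sum_{i=0}^n(-\rho)^i w_i=0 .$$
Equivalently, under the probability law $r_i\propto w_i$, the weighted average of the index $i$ against the signed weights $(-\rho)^i$ is negative.

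This alternating inequality is the main obstacle, and the first thing we would try is to reuse the partial-sum sign structure from the proof of Lemma~\ref{lem:positive-deconvolution}. With $b=-\lambda^*>0$ and $z=1$, the law here is precisely of that lemma's form. Set $S_\ell=\sum_{j\le\ell}(-\rho)^j w_j$. Summation by parts gives
$$\sum_{i=0}^n i(-\rho)^iw_i=nS_n-\sum_{\ell=0}^{n-1}S_\ell=-\sum_{\ell=0}^{n-1}S_\ell,$$
since $S_n=0$. So it suffices to show $\sum_{\ell<n}S_\ell>0$. The lemma shows that the even partial sums $S_{2\ell}$ are positive and the odd ones negative. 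We would pair them as $S_{2\ell}+S_{2\ell+1}=S_{2\ell+1}+(\text{term }2\ell\text{ contribution})$, or more directly as $S_{2\ell-1}+S_{2\ell}=2S_{2\ell-1}+u_\ell$, and then use the monotone ratios $v_j/u_{j+1}>1$ and the decreasing $v_j/u_j$ to compare magnitudes.

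If the pairing argument fails, the fallback is a convexity argument: view the constraint as defining $\lambda^*$ implicitly and show that $\log\mathcal Z$ is increasing by comparing with $\rho=1$ through a Chebyshev-type rearrangement, using that the index $i$ and the sign pattern of the tilt are oppositely ordered. Once $\mathcal Z'(\rho)>0$ on $(0,1)$, strictness gives $H(\bm r)<\log\sqrt{n(n+2)}$ whenever $\rho\ne1$. At $\rho=1$, equality forces the Gibbs extremizer, which is unique.
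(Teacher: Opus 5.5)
Your reduction follows the paper's: Gibbs' inequality, reversal to reach $\rho\le 1$, $\lambda^*<0$ from $Z_\rho'(0)>0$, and the envelope formula. Together these reduce monotonicity of $\mathcal Z$ to the weighted alternating inequality $\sum_i i(-\rho)^iw_i<0$ under $\sum_i(-\rho)^iw_i=0$. There is a small slip at $\rho=1$: the $n/2+1$ even indices carry $e^{\lambda}$, so $Z_1(\lambda)=\tfrac{n+2}{2}e^{\lambda}+\tfrac n2 e^{-\lambda}$ and $e^{2\lambda^*}=n/(n+2)$, which is consistent with $\lambda^*<0$. Your minimum value and extremizer are nevertheless correct. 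The genuine gap is that you never prove this alternating inequality, which is the whole difficulty of the lemma. You list two strategies, and the first cannot work with the information you propose to use.

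To see this, write $u_j=\rho^{2j}w_{2j}$, $v_j=\rho^{2j+1}w_{2j+1}$ and $b=-\lambda^*$, and take $n=2$. By the balance condition, $\sum_{\ell<2}S_\ell=2w_0-\rho w_1=u_0-u_1$, so you need $e^{-b}>\rho^2e^{-b\rho^2}$, i.e.\ $b<-2\log\rho/(1-\rho^2)$. Every fact you cite from Lemma~\ref{lem:positive-deconvolution} is automatic here:
$S_0=u_0>0$, $S_1=-u_1<0$, $v_0=u_0+u_1>u_1$, and monotonicity of $v_j/u_j$ is vacuous. The abstract triple $u_0=1$, $u_1=2$, $v_0=3$ satisfies all of them and gives the wrong sign. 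So the sign pattern of the partial sums and the ratio monotonicities cannot decide the sign of $\sum_\ell S_\ell$. You need quantitative control on where the balancing tilt $b$ sits.

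This is exactly what the paper supplies. It shows $b<\Lambda=\frac{-2\log\rho}{(1+\rho)(1-\rho^{2k-1})}$, which for $k=1$ is the bound above. It does so by proving $F(\Lambda)<0$ for the decreasing function $F=\sum_j u_j-\sum_j v_j$, using Lemma~\ref{lem:taylor-comparison}, Lemma~\ref{lem:scalar-cutoff} and the polynomial estimate of Lemma~\ref{lem:dirtypoly}. From $b<\Lambda$ it obtains $v_ju_\ell<v_{\ell-1}u_j$ for all $j<\ell$ (Lemma~\ref{lem:cross-term}). The symmetrization
\[
M\Bigl(\sum_j(j+\tfrac12)v_j-\sum_\ell \ell u_\ell\Bigr)=\sum_{0\le j<\ell\le k}\bigl(j+\tfrac12-\ell\bigr)\bigl(u_\ell v_j-u_jv_{\ell-1}\bigr)>0,\qquad M=\sum_j u_j,
\]
then gives the sign.

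Your fallback does not fill the gap either. The sign pattern $(-1)^i$ is not monotone in $i$, so there are no similarly or oppositely ordered sequences to which a Chebyshev rearrangement applies. Comparing with $\rho=1$ says nothing by itself about the sign of $\mathcal Z'(\rho)$ for $\rho<1$.
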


Here is a proof sketch with a detailed proof in Section~\ref{sec:completion-entropy}. The central non-trivial fact is Lemma~\ref{lem:partition-monotonicity}, namely that $\mathcal Z(\rho)$ is strictly increasing on $(0,1]$. Modulo this fact, we have $$H(\bm r) \le \log \mathcal Z(1) = \log\left(\min_{\lambda\in \R} \set{\frac{n+2}{2}e^\lambda + \frac{n}{2}e^{-\lambda}}\right) = \log\sqrt{n(n+2)}.$$ Uniqueness is guaranteed by tracing the tightness of each inequality. The heart of the argument lies in proving the aforementioned monotonicity and is discussed in Section~\ref{sec:technical-KL}.

With all these tools at hand, we can finally solve the $\KL$ projection problem as follows.
\begin{theorem}
For even $n$, the optimal value of \eqref{eq:klmin} is $\log(n+1)-\log\sqrt{n(n+2)}$ with unique solutions as mentioned in \eqref{eq:l2solution}.
\end{theorem}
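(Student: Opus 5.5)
The plan is to derive the theorem directly from Lemma~\ref{lem:negative-root-entropy} together with Theorem~\ref{thm:M2n-membership}. Since $\KL(\bm r\|\bm u)=\log(n+1)-H(\bm r)$, minimizing \eqref{eq:klmin} is the same as maximizing $H$ over $\cM_{2,n}$. The argument therefore has three parts: an upper bound on $H(\bm r)$ valid for every $\bm r\in\cM_{2,n}$, a check that the candidate in \eqref{eq:l2solution} attains it, and a uniqueness argument that identifies the factors.

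\textbf{Upper bound.} Take $\bm r=(a,1-a)*\bm q\in\cM_{2,n}$. There are three cases.
\begin{enumerate}[label=\textup{(\alph*)}]
\item If $0<a<1$, set $t=(1-a)/a>0$. Theorem~\ref{thm:M2n-membership} gives $R_{\bm r}(t)=0$, that is, $G_{\bm r}(-1/t)=0$. So $\rho=1/t>0$ is admissible in Lemma~\ref{lem:negative-root-entropy}, which yields $H(\bm r)\le\log\sqrt{n(n+2)}$.
\item If $a=1$, then $r_n=0$, and $\bm r$ is supported on $n$ points.
\item If $a=0$, then $r_0=0$, and again $\bm r$ is supported on $n$ points.
\end{enumerate}
In cases (b) and (c), $H(\bm r)\le\log n<\log\sqrt{n(n+2)}$, because $n^2<n(n+2)$. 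Hence in every case $H(\bm r)\le\log\sqrt{n(n+2)}$, which is the claimed optimal value of \eqref{eq:klmin}.

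\textbf{Attainment.} Put $\bm p^*=(1/2,1/2)$ and let $\bm q^*$ be as in \eqref{eq:l2solution}. A direct coordinatewise computation gives
$$\bm p^**\bm q^*=\left(\tfrac1{n+2},\tfrac1n,\tfrac1{n+2},\ldots,\tfrac1n,\tfrac1{n+2}\right).$$
Adjacent entries of $\bm q^*$ are $\tfrac{2}{n(n+2)}\cdot n$ and $\tfrac{2}{n(n+2)}\cdot 2j$ style pairs summing to $n+2$ or $n$ (scaled). The endpoints give $\tfrac12\cdot\tfrac{2}{n+2}$. Since $n$ is even, this vector has $n/2+1$ entries equal to $1/(n+2)$ and $n/2$ entries equal to $1/n$. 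Its entropy is
$$\tfrac{n+2}{2(n+2)}\log(n+2)+\tfrac n{2n}\log n=\log\sqrt{n(n+2)},$$
so the bound is attained.

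\textbf{Uniqueness.} Suppose $\bm r=(a,1-a)*\bm q$ is any optimizer. The strict inequality in cases (b) and (c) forces $0<a<1$. The equality clause of Lemma~\ref{lem:negative-root-entropy} then forces $\rho=1$ and forces $\bm r$ to equal the vector displayed above. From $\rho=1$ we get $t=1$, hence $a=1/2$. The bijection in Theorem~\ref{thm:M2n-membership} then determines $\bm q$ uniquely as $q_k=Q_k(1)$, and this must coincide with $\bm q^*$ because $\bm q^*$ already works.

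All of the difficulty is contained in Lemma~\ref{lem:negative-root-entropy}, which is assumed here. The only points needing care are the degenerate $a\in\{0,1\}$ strata in the upper bound, and confirming that the two-sided reduction $\rho\leftrightarrow1/\rho$ needs no separate treatment. It does not, because the lemma is stated for all $\rho>0$.
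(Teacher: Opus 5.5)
Your proposal is correct and follows the paper's proof essentially step for step. You dispose of the degenerate strata $a\in\{0,1\}$ via $H(\bm r)\le\log n<\log\sqrt{n(n+2)}$, apply Lemma~\ref{lem:negative-root-entropy} at the negative root $-a/(1-a)$, verify attainment by direct computation, and obtain uniqueness from the lemma's equality clause; your explicit deconvolution at $t=1$ via Theorem~\ref{thm:M2n-membership} just spells out the step the paper leaves implicit in recovering $\bm q^*$.
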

\begin{proof}
Let $\bm r=\bm p*\bm q$ with $\bm p=(a,1-a)\in \Delta_1,\bm q\in \Delta_{n-1}$. It is clear for $a=0,1$ that $H(\bm r) \le \log n < \log\sqrt{n (n+2)}$. For $0<a<1$, $G_{\bm r}(-a/(1-a))=0$ whence  $$\KL(\bm r\|\bm u) = \log(n+1) - H(\bm r) \stackrel{Lemma~\ref{lem:negative-root-entropy}}{\ge} \log\frac{n+1}{\sqrt{n(n+2)}}.$$
Computing $\KL(\bm r^*\|\bm u)$ directly for $\bm r^*=\bm p^**\bm q^*$ as given by \eqref{eq:l2solution} shows that the above inequality is tight. Therefore, the optimal value of \eqref{eq:klmin} is indeed what is claimed. The uniqueness of solutions follows from the uniqueness claim of Lemma~\ref{lem:negative-root-entropy}.
\end{proof}

\subsection{$\ell_1$ and line-metric $W_1$.}
For this part we focus on the two polyhedral metrics in question. First we start with $\ell_1$ projection, namely,
\begin{equation}
\label{eq:l1problem}\min_{\bm r\in \cM_{2,n}} \norm{\bm r-\bm u}{1}.
\end{equation}
As before we will write $\bm r = \bm p * \bm q$ where $\bm p=(a,1-a)\in \Delta_1, \bm q\in \Delta_{n-1}$. As we shall see, the optimal solution in this case will be non-symmetric in $\bm p$ so there will be at least two solutions in the $(\Delta_1,\Delta_{n-1})$ space namely $(\bm p,\bm q)$ and $(\bm p^{rev},\bm q^{rev})$, where $\bm s^{rev}$ denotes the reversal of the vector. We will prove that these are the only two solutions. 

Focus on $a\ge 1/2$ since reversal gives the other solution and parameterize $\bm p$ with $t=(1-a)/a\in[0,1]$ whence $\bm p = (1/(1+t), 1-1/(1+t))$. We first fix $t$ and optimize over $\bm q\in \Delta_{n-1}$. By Lemma~\ref{lem:fixed-t-l1}, this optimal value and solution, for each fixed $t$, are \begin{align}
\label{eq:fixed-t-l1-distance}2\phi(t) &\sett \frac{2(1+t^{n+1})}{(n+1)(1+t)^2}.\\
\label{eq:fixed-t-minimizer-l1}\bm q(t) &\sett \frac1{n+1}\left(1+t, 1-t^2, 1+t^3, \cdots, 1+t^{n-1}, \frac{2+t-t^n}{1+t} \right)
\end{align}

It remains to optimize over $t\in[0,1]$. Differentiating gives $$\phi'(t) = \frac{(n+1)t^n + (n-1)t^{n+1} - 2}{(n+1)(1+t)^3}.$$
The numerator of this expression of $\phi'(t)$ is clearly increasing over $t\in[0,1]$, is negative at $t=0$ and positive at $t=1$ for $n\ge 2$. Therefore there is a unique solution $\tau_n\in (0,1)$ for the equation $(n+1)t^n + (n-1)t^{n+1} = 2$. This means $\phi$ decreases on $[0,\tau_n]$ and increases on $[\tau_n,1]$ whence $\tau_n$ is the unique minimizer of $\phi$ over $[0,1]$. This gives the following theorem.

\begin{theorem}\label{thm:l1-minimizer}
Let $n$ be even and $\phi, \tau_n$ be as above. Let $a^* = 1/(1+\tau_n), \bm p^*=(a^*,1-a^*)$ and $\bm q^* = \bm q(\tau_n)$ from \eqref{eq:fixed-t-minimizer-l1}. Then the optimal value of \eqref{eq:l1problem} is $2\phi(\tau_n)$ and the two solutions in the $\Delta_1\times\Delta_{n-1}$ space are precisely $(\bm p^*, \bm q^*)$ and their reversals.
\end{theorem}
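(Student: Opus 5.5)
The plan is to assemble the pieces already stated before the theorem. First, reduce to $a\ge 1/2$, i.e. $t=(1-a)/a\in[0,1]$. Reversing both $\bm p$ and $\bm q$ reverses $\bm r$ and fixes $\bm u$, so $\norm{\bm r-\bm u}{1}$ is unchanged. Any solution with $a<1/2$ is therefore the reversal of one with $a>1/2$. The case $a=1/2$ needs a check that the optimizer does not land there; this follows below because $\tau_n<1$.

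For fixed $t\in[0,1]$, invoke Lemma~\ref{lem:fixed-t-l1}. It says $\min_{\bm q\in\Delta_{n-1}}\norm{\bm p*\bm q-\bm u}{1}=2\phi(t)$, and I will use it in the stronger form that $\bm q(t)$ from \eqref{eq:fixed-t-minimizer-l1} is the \emph{unique} minimizer. The overall minimum is then $\min_{t\in[0,1]}2\phi(t)$. The derivative computation above shows $\phi$ strictly decreases on $[0,\tau_n]$ and strictly increases on $[\tau_n,1]$. Hence $t=\tau_n$ is the unique minimizing $t$, and the optimal value is $2\phi(\tau_n)$.

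For the solution set, take any optimizer $(\bm p,\bm q)$ with $a\ge 1/2$. Its parameter $t$ must equal $\tau_n$, since otherwise $2\phi(t)>2\phi(\tau_n)$ gives a strictly worse value. Uniqueness in the fixed-$t$ lemma then forces $\bm q=\bm q(\tau_n)=\bm q^*$. Because $\tau_n\in(0,1)$, we get $a^*>1/2$, so $(\bm p^*,\bm q^*)$ differs from its reversal and there are exactly two solutions. It remains to confirm that $\bm q^*$ is an admissible point of $\Delta_{n-1}$. The coordinates $1\pm t^k$ are nonnegative on $[0,1]$, the last entry $(2+t-t^n)/(1+t)$ is positive, and the entries sum to $1$, which is part of the lemma's content.

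The main obstacle is the uniqueness of the fixed-$t$ minimizer, since the $\ell_1$ problem over the simplex is a linear program and may have non-unique optima. If Lemma~\ref{lem:fixed-t-l1} only gives the value, I would argue uniqueness as follows. Write $\bm r$ via Theorem~\ref{thm:M2n-membership}, so that $r_n$ is determined by $(r_0,\dots,r_{n-1})$ through $R_{\bm r}(t)=0$. The constraint $\sum_i(-t)^{n-i}(r_i-u_i)=-\sum_i(-t)^{n-i}u_i$ then shows, by a weighted Hölder/duality bound, that $\norm{\bm r-\bm u}{1}\ge \abs{\sum_i(-t)^{n-i}u_i}/\max_i t^{n-i}$ combined with the mass constraint. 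Equality forces the sign pattern and support of $\bm r-\bm u$: all deviation must sit on the coordinates of maximal weight, namely $r_n$ together with a compensating coordinate. This pins down $\bm r$, and therefore $\bm q$ by the recursion $q_k=Q_k(t)$.
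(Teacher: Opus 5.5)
Your proposal is correct and follows the paper's own route. You reduce to $a\ge 1/2$ by simultaneous reversal, apply Lemma~\ref{lem:fixed-t-l1} (which does assert uniqueness of $\bm q(t)$ for $0<t<1$, so it applies at $\tau_n\in(0,1)$), and use the strict unimodality of $\phi$ on $[0,1]$ to force $t=\tau_n$, which gives exactly two solutions.

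Your fallback is unnecessary, and as written the unshifted bound $\norm{\bm r-\bm u}{1}\ge\abs{\sum_i(-t)^{n-i}u_i}$ is not tight. Only after subtracting $\frac{1-t}{2}$ times the mass constraint do the weights $(-t)^{n-i}-\frac{1-t}{2}$ reach their maximal modulus $\frac{1+t}{2}$, and they do so exactly at $i\in\set{n-1,n}$; rescaled by $-2/(1+t)$, this is precisely the paper's dual certificate $\bm y^*$.
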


Now let's move on to the line-metric Wasserstein$-1$ distance on $\set{0,1,\cdots,n}$. From Section~\ref{sec:w-dist-tutorial}, we have \begin{equation}\label{eq:w1-expressions}W_1^{\text{line}}(\bm r,\bm u) = \inf_{\gamma\in \Gamma(\bm r,\bm u)}\E{\gamma} \abs{i-j} = \sup_{\substack{\bm x\in \R^{n+1}\\\abs{x_i-x_{i+1}}\le 1}} \sum_{i=0}^{n} \left(r_i-\frac{1}{n+1}\right)x_i = \sum_{k=0}^{n-1}\abs{\sum_{i=0}^kr_i-\frac{k+1}{n+1}}.\end{equation}
To state again, we want to solve \begin{equation}
\label{eq:w1-minimize}
\min_{\bm r\in \cM_{2,n}} W_1^{\text{line}}(\bm r,\bm u).
\end{equation}
The technique for solving this problem will be the same as that of $\ell_1$: we will find a feasible point in each of the primal and dual problems with matching objectives. It turns out surprisingly that this has the same optimal solutions as the $\ell_1$ case. We include the proof of this theorem here to help the reader gain a better insight into the proof of the $\ell_1$ case which is similar.
\begin{theorem}\label{thm:w1-minimizer}
For even $n$, the optimal value of \eqref{eq:w1-minimize} is $\phi(\tau_n)$ and its two solutions are $(\bm p^*,\bm q^*)$ and $((\bm p^*)^{rev}, (\bm q^*)^{rev})$ in the $\Delta_1\times \Delta_{n-1}$ space.
\end{theorem}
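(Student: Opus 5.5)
The plan is a primal--dual argument carried out separately for each fixed Bernoulli factor, followed by a one-variable minimization over $t$. By reversal symmetry I restrict to $a\ge 1/2$, that is, $t=(1-a)/a\in[0,1]$. The endpoint $t=0$, i.e.\ $a=1$, is included here; the case $a=0$ is its reversal. For each fixed $t$ I will show
$$\min_{\bm q\in\Delta_{n-1}} W_1^{\text{line}}\bigl((a,1-a)*\bm q,\bm u\bigr)=\phi(t),$$
and identify the minimizing $\bm q$. The theorem then follows from the analysis of $\phi$ on $[0,1]$ already done before Theorem~\ref{thm:l1-minimizer}: $\tau_n$ is the unique minimizer there.

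\emph{Lower bound (dual certificate).} I use the dual form in \eqref{eq:w1-expressions}. Since $\bm r=\bm p*\bm q$,
$$\sum_i r_ix_i=\sum_{j=0}^{n-1}q_j\bigl(a x_j+(1-a)x_{j+1}\bigr).$$
I choose $\bm x$ so that $a x_j+(1-a)x_{j+1}$ equals a constant $c$ for every $j$. Then $\sum_i r_ix_i=c$ for every $\bm q\in\Delta_{n-1}$, so the bound is uniform in $\bm q$. The condition forces $x_{j+1}-c=-(x_j-c)/t$, which gives
$$x_j=c+K(-t)^{n-j}.$$
This grows backward from $j=n$, so the increments $|x_{j+1}-x_j|=|K|(1+t)t^{n-j-1}$ are largest at $j=n-1$. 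Taking $K=-1/(1+t)$ makes every increment at most $1$. The dual value is then
$$c-\frac1{n+1}\sum_i x_i=\frac{1}{(n+1)(1+t)}\sum_{i=0}^n(-t)^i=\frac{1+t^{n+1}}{(n+1)(1+t)^2}=\phi(t),$$
using that $n$ is even. Hence $W_1^{\text{line}}(\bm r,\bm u)\ge\phi(t)\ge\phi(\tau_n)$ for every $\bm r\in\cM_{2,n}$.

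\emph{Matching primal and uniqueness.} Complementary slackness tells me what an optimal $\bm r$ must look like. Only the last increment of $\bm x$ is saturated, so equality in the transport form forces the CDF discrepancies to vanish for $k\le n-2$. That is, $r_i=1/(n+1)$ for $i\le n-2$, and all the mass defect sits at the last step. I will check that feeding these coordinates into the recursion \eqref{eq:recover-q} reproduces $\bm q(t)$ from \eqref{eq:fixed-t-minimizer-l1}. Its entries are nonnegative for $t\in[0,1]$, which is clear from the formula. Then $W_1^{\text{line}}=|r_n-1/(n+1)|=|(1-a)q_{n-1}-1/(n+1)|$, which should simplify to $\phi(t)$.

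Uniqueness combines two facts. First, $\phi$ has a strict minimum at $\tau_n$, so $t=\tau_n$ is forced. Second, equality in the dual bound forces $r_0,\dots,r_{n-2}$, and Theorem~\ref{thm:M2n-membership} then determines $\bm q$ uniquely from $t$ and $\bm r$. Reversal gives the second solution.

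The main obstacle is this equality-case bookkeeping. I must make sure that equality genuinely pins down all of $r_0,\dots,r_{n-2}$, rather than merely constraining the signs of the discrepancies. I must also verify that the algebra for $r_n$ lands exactly on $\phi(t)$; this relies on the even-$n$ telescoping.
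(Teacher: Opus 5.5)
Your proposal is correct and is essentially the paper's proof. Your potential $x_j=c-(-t)^{n-j}/(1+t)$, obtained by making $a x_j+(1-a)x_{j+1}$ constant, is the paper's $x_i^*=-(-t)^i/(1+t)$ in the reversed orientation, since constancy in $\bm q$ amounts to the root constraint of Theorem~\ref{thm:M2n-membership}. Your primal candidate, the value $|r_n-\tfrac1{n+1}|=\phi(t)$, and your uniqueness argument match the paper's computation of $\tilde{\bm r}$ and its equality analysis.

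The equality-case worry you flag is settled by the summation-by-parts identity $\sum_i(r_i-u_i)x_i=\sum_{k=0}^{n-1}D_k(x_k-x_{k+1})$, where $D_k=\sum_{i\le k}(r_i-u_i)$. Since $|x_k-x_{k+1}|=t^{n-1-k}<1$ for $k\le n-2$ when $0<t<1$, equality $\sum_k|D_k|=\phi(t)$ forces $D_k=0$ for those $k$, exactly as in the paper.
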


\begin{proof}
First we verify that the objective for $(\bm r^*)^{rev} = (\bm p^*)^{rev}*(\bm q^*)^{rev}$ is indeed $\phi(\tau_n)$. This is enough because $W_1(\bm \mu, \bm \nu)$ only depends on $\bm \mu-\bm \nu$ and hence does not change if the arguments are reversed. Recall that $\tau_n$ is the unique solution of $(n+1)t^{n} + (n-1)t^{n+1}=2$ in $(0,1)$. Just for the verification, we say $\tau, \tilde{\bm r},\bm \mu,\bm \nu$ instead of $\tau_n,(\bm r^*)^{rev}, (\bm p^*)^{rev}, (\bm q^*)^{rev}$ respectively. Note that \begin{align*}
\bm \mu &= \frac{1}{1+\tau}\left(\tau,1\right)\\
\bm \nu &= \frac{1}{n+1}\left(\frac{2+\tau-\tau^n}{1+\tau}, 1+\tau^{n-1}, 1-\tau^{n-2}\cdots, 1-\tau^2, 1+\tau\right).
\end{align*}
Convolving them gives \begin{align*}
\tilde r_0 &= \frac{\tau_n(2+\tau-\tau^n)}{(n+1)(1+\tau)^2} = \frac{1}{n+1} - \phi(\tau),\\
\tilde r_1 &= \frac{2+2\tau+\tau^2+\tau^{n+1}}{(n+1)(1+\tau)^2} = \frac{1}{n+1} + \phi(\tau),\\
\tilde r_j &= \frac{1}{n+1}\qquad 2\le j \le n.
\end{align*}
Then by the last expression in \eqref{eq:w1-expressions}, $$W_1(\tilde {\bm r},\bm u) = \abs{\frac{1}{n+1}-\phi(\tau)-\frac{1}{n+1}} = \phi(\tau).$$

Now we use the supremum form of of the Wasserstein distance mentioned in \eqref{eq:w1-expressions} to conclude that this is indeed optimal. Let $\bm r$ be any element of $\cM_{2,n}$. So $\bm r = \bm p*\bm q$. Consider $t\sett p_0/p_1 \in [0,1]$ (can be assumed by reversing all distributions) and $x_i^* \sett -(-t)^i/(1+t)$. Then $\abs{x_i-x_{i+1}} = t^i \le 1$. Recall by Theorem~\ref{thm:M2n-membership} that this $t$ satisfies $\sum_{i=0}^n r_i(-t)^i=0$. This particular $\bm x^*$ gives the objective $$\sum_{i=0}^{n}(r_i-(n+1)^{-1})x_i^* = \frac{1}{(1+t)(1+n)}(1-t+t^2-\cdots+t^{n}) = \phi(t).$$ This shows that $W_1^{\text{line}}(\bm r,\bm u) \ge \phi(t)$ for any $t$. 

At a global minimizer, $W_1(r,u)\ge\phi(t)\ge\phi(\tau_n)$ forces $t=\tau_n\in(0,1)$. For $D_k\sett \sum_{i=0}^k(r_i-u_i)$. Since $|x_k^*-x_{k+1}^*|=t^k<1$ for $1\le k\le n-1$, equality forces $D_k=0$ for those indices. Equality at $k=0$ gives $D_0=-\phi(t)$. Thus $r=(u_0-\phi(t),u_1+\phi(t),u_2,\cdots,u_n)$, and the convolution recurrence uniquely determines $\bm q$.

Indeed this argument shows, like the $\ell_1$ case, that for every fixed $t\in[0,1]$, the minimum of $W_1^{\text{line}}((t/(1+t),1/(1+t))*\bm q)$ over $\bm q\in \Delta_{n-1}$ is $\phi(t)$. Therefore, this is tight and the conclusion of the theorem follows.
\end{proof}

\begin{remark}
Note that in the above proof, the root $-t$ of the generating polynomial of $\bm r$ plays an important role. We would be interested in a deeper reason for its involvement in deciding the structure of the solutions forced by this root $-t$.
\end{remark}

\section{Projections in the $n-$by$-n$ model}
$\cM_{2,n}$ is special in the sense that it has the special linear-factor property due to the Bernoulli counterpart. In this section, we focus on the diagonal model $\cM_{n,n}$. This model has no such comparable deconvolution mechanism. The single strong observation that works here is the fact that the middle coordinate of the convolved law is bounded by the geometric mean of its endpoints. This observation is enough to figure out the smooth projections and their equality cases.

\begin{lemma}\label{lem:endpoint-obstruction}
If $\bm r = \bm p*\bm q$ with $\bm p, \bm q \in \Delta_{n-1}$ then $r_{n-1} \ge 2\sqrt{r_0r_{2n-2}}$. Equality holds iff $p_0q_{n-1} = p_{n-1}q_0$ and $p_iq_{n-1-i} = 0~\forall~1\le i\le n-2$.
\end{lemma}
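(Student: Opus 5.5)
We will expand the middle coordinate with the convolution formula \eqref{eq:convolution}. For $\bm p,\bm q\in\Delta_{n-1}$ and $k=n-1$ the summation range is all of $0\le i\le n-1$, so
$$r_{n-1}=\sum_{i=0}^{n-1}p_iq_{n-1-i}=p_0q_{n-1}+p_{n-1}q_0+\sum_{i=1}^{n-2}p_iq_{n-1-i}.$$
The endpoint coordinates are single products, $r_0=p_0q_0$ and $r_{2n-2}=p_{n-1}q_{n-1}$. Every term in the middle sum is nonnegative, so we may drop them to get the lower bound
$$r_{n-1}\ge p_0q_{n-1}+p_{n-1}q_0.$$

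Next we apply AM--GM to the two nonnegative reals $x=p_0q_{n-1}$ and $y=p_{n-1}q_0$. This gives $x+y\ge 2\sqrt{xy}$, and
$$xy=p_0q_0\,p_{n-1}q_{n-1}=r_0r_{2n-2}.$$
Chaining the two inequalities yields $r_{n-1}\ge 2\sqrt{r_0r_{2n-2}}$. Note that $n\ge 2$ is needed so that the indices $0$ and $n-1$ are distinct; for $n=1$ the statement degenerates to $r_0\ge 2r_0$ and needs separate comment.

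For the equality case, observe that equality in the chain forces equality in both steps, since each step is an inequality between nonnegative quantities. Equality in the first step means $\sum_{i=1}^{n-2}p_iq_{n-1-i}=0$; because the summands are nonnegative, each must vanish. Equality in AM--GM, $x+y=2\sqrt{xy}$, is equivalent to $(\sqrt x-\sqrt y)^2=0$, that is, $x=y$, or $p_0q_{n-1}=p_{n-1}q_0$. Conversely, these two conditions together make both steps equalities. No step here is a real obstacle. The only points needing care are checking that the summation range in \eqref{eq:convolution} at $k=n-1$ really is the full range $0\le i\le n-1$, and handling the degenerate $n=1$ case.
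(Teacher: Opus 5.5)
Your proposal is correct and follows the paper's proof: you drop the nonnegative middle terms of $r_{n-1}$ and then apply AM--GM to $p_0q_{n-1}$ and $p_{n-1}q_0$. Your explicit derivation of the equality case and your observation that $n\ge 2$ is required (the claim fails for $n=1$) fill in details the paper leaves implicit.
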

\begin{proof}
All terms are nonnegative. Therefore $$r_{n-1} = \sum_{i=0}^{n-1}p_iq_{n-1-i} \ge p_0q_{n-1} + p_{n-1}q_0 \ge 2\sqrt{p_0q_{n-1}p_{n-1}q_0} = 2\sqrt{r_0r_{2n-2}}.$$
\end{proof}

\begin{remark}
A inequality of this form is also useful for the $\cM_{2,n}$ case and is actually a proof technique for showing that $\cM_{2,n}$ does not contain the uniform law for even $n$. We omitted the discussion on that in the earlier section to focus on the algebraic structure because the algebraic technique gives a better insight into the rigidity of the structure of $\cM_{2,n}$. This inequality based approach for the diagonal model case can be found in the book of \textcite[pp. 3, pp. 64]{Bollobas_2006}.
\end{remark}

Let's now move onto the projection problems. 

\subsection{$\ell_2$ and forward $\KL$.}
These projection problems have already been solved in literature. We provide the projection version of those statements along with appropriate references.

The $\ell_2$ projection case was solved by \textcite[Theorem 1]{asgarliEtAl2024}. Namely, they solve
\begin{equation}
\label{eq:diagonal-l2}\min_{\bm r\in \cM_{n,n}} \norm{\bm r-\bm u}{2}.
\end{equation}
The unique optimal solution in $\cM_{n,n}$ is \begin{equation}
\label{eq:diagonal-l2-optimal}\bm r^* = \frac{1}{2(3n-2)}(2,\underbrace{3,\cdots,3}_{n-2}, 4, \underbrace{3,\cdots,3}_{n-2}, 2).
\end{equation}

\begin{proposition}[{\cite[Theorem 1]{asgarliEtAl2024}}]
For every $n\ge 2$, the optimal value of \eqref{eq:diagonal-l2} is $\left(2(3n-2)(2n-1)\right)^{-1/2}$ with unique optimal solution given in \eqref{eq:diagonal-l2-optimal}. The factorization in the $\Delta_{n-1}\times \Delta_{n-1}$ space is unique upto swapping factors and is given by $$\bm p^* = \frac12(1,0,0,\cdots,0,1), \qquad \bm q^* = \frac{1}{3n-2} (2,\underbrace{3,\cdots,3}_{n-2}, 2).$$
\end{proposition}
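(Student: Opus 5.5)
Since $\bm u$ is fixed and $\bm r\in\Delta_{2n-2}$, I first reduce the problem to minimizing $\norm{\bm r}{2}^2$. Indeed, exactly as in the $\cM_{2,n}$ case, $\norm{\bm r-\bm u}{2}^2=\norm{\bm r}{2}^2-\frac{1}{2n-1}$. So it suffices to show
$$\norm{\bm r}{2}^2\ge \frac{1}{2(3n-2)}+\frac{1}{2n-1}\quad\text{for every }\bm r\in\cM_{n,n},$$
and to identify the equality case. A direct computation for \eqref{eq:diagonal-l2-optimal} gives $\norm{\bm r^*}{2}^2=\frac{8+9(2n-4)+16}{4(3n-2)^2}=\frac{9n-3}{2(3n-2)^2}$. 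One also checks that $\bm p^**\bm q^*=\bm r^*$, since the two shifted copies of $\bm q^*$ overlap only in coordinate $n-1$. Subtracting $\frac1{2n-1}$ then produces the claimed value.

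For the lower bound, the only constraint I will use is Lemma~\ref{lem:endpoint-obstruction}, namely $r_{n-1}\ge 2\sqrt{r_0r_{2n-2}}$. I relax to the problem of minimizing $\sum r_i^2$ over $\bm r\in\Delta_{2n-2}$ subject to this inequality. The relaxation is handled in three steps.
\begin{enumerate}
\item \textbf{Symmetrize the endpoints.} Replacing $r_0,r_{2n-2}$ by their mean preserves the simplex and does not decrease the geometric mean. By convexity, it also does not increase $r_0^2+r_{2n-2}^2$. So I may assume $r_0=r_{2n-2}=x$, and the constraint becomes $r_{n-1}\ge 2x$.
\item \textbf{Equalize the interior.} By convexity, the remaining $2n-4$ free coordinates should all be equal, say to $y$.
\item \textbf{Solve the one-constraint problem.} The Lagrange conditions with an active constraint $r_{n-1}=2x$ give the ratios $x:y:r_{n-1}=2:3:4$, which is exactly $\bm r^*$.
\end{enumerate}
The relaxed problem is a convex quadratic minimized over a convex set: after symmetrization the constraint $r_{n-1}\ge 2x$ is linear, and before symmetrization $\{r_{n-1}\ge 2\sqrt{r_0r_{2n-2}}\}$ is a convex cone. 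KKT is therefore sufficient, and strict convexity gives uniqueness of $\bm r^*$ in the relaxation, hence in $\cM_{n,n}$. To confirm the constraint must be active, I note that the unconstrained minimizer is uniform, which violates $r_{n-1}\ge 2r_0$.

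Next I recover the factorization. Equality in the relaxation forces equality in Lemma~\ref{lem:endpoint-obstruction}, so $p_iq_{n-1-i}=0$ for $1\le i\le n-2$ and $p_0q_{n-1}=p_{n-1}q_0$. To get the support, I would use \eqref{eq:support-sumset}: $\supp\bm r^*=\{0,\dots,2n-2\}$, so the sumset is full. Combined with the vanishing products, this should force one factor, say $\bm p$, to be supported on $\{0,n-1\}$ (if both had interior support, some $p_iq_{n-1-i}$ would survive); this combinatorial step is the one I would check most carefully. Then $\bm p=(\alpha,0,\dots,0,1-\alpha)$. From $r_0=\alpha q_0$, $r_{2n-2}=(1-\alpha)q_{n-1}$ and the middle coordinates I solve for $\bm q$ coordinatewise. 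The symmetric values of $r$ and the condition $p_0q_{n-1}=p_{n-1}q_0$ should force $\alpha=1/2$, giving $\bm q^*$ up to swapping the factors.

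I expect the main obstacle to be the convex-relaxation step. There I must argue carefully that the constraint set is convex, or that symmetrization is legitimate, and handle the degenerate boundary where $x=0$. A secondary concern is the combinatorial uniqueness of the factor supports.
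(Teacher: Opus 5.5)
Your reduction to minimizing $\norm{\bm r}{2}^2$ and your plan to rely only on Lemma~\ref{lem:endpoint-obstruction} match what the paper indicates: it cites \cite{asgarliEtAl2024} and remarks that the endpoint inequality suffices. However, the step that carries your lower bound is false. The set $\set{r_{n-1}\ge 2\sqrt{r_0r_{2n-2}}}$ is not convex, because $\sqrt{r_0r_{2n-2}}$ is concave and this is the region \emph{above} a concave function; the convex set is the reverse inequality. For instance, $(r_0,r_{n-1},r_{2n-2})=(1,0,0)$ and $(0,0,1)$ both satisfy the constraint, but their midpoint does not.

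For the same reason your symmetrization goes the wrong way. Averaging $r_0$ and $r_{2n-2}$ raises the geometric mean, so it tightens the constraint. A feasible point with $r_0=0<r_{2n-2}$ and $r_{n-1}<r_{2n-2}$ becomes infeasible after averaging. So "KKT is sufficient'' and "strict convexity gives uniqueness'' are unsupported. The relaxation is non-convex, and you have not excluded asymmetric or boundary minimizers.

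The conclusion is still true, and the repair is short. Put $s=r_0+r_{2n-2}$ and $m=r_{n-1}$. AM--GM and the lemma give $r_0r_{2n-2}\le\tfrac14\min(s,m)^2$, hence
\[
r_0^2+r_{2n-2}^2+r_{n-1}^2=s^2+m^2-2r_0r_{2n-2}\ge\max(s,m)^2+\tfrac12\min(s,m)^2\ge\tfrac38(s+m)^2 .
\]
The last step holds because, for $M\ge\mu\ge0$, the difference equals $\tfrac18(5M-\mu)(M-\mu)$. Cauchy--Schwarz on the $2n-4$ interior coordinates then gives $\norm{\bm r}{2}^2\ge\tfrac38\sigma^2+(1-\sigma)^2/(2n-4)$ with $\sigma=s+m$. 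This is minimized at $\sigma=4/(3n-2)$ with value $3/(2(3n-2))$. Tracing the equalities forces $r_0=r_{2n-2}$, $s=m$ and equal interior coordinates, i.e.\ $\bm r=\bm r^*$. Your arithmetic also needs fixing: $\norm{\bm r^*}{2}^2=\frac{18n-12}{4(3n-2)^2}=\frac{3}{2(3n-2)}$, and the target is $\frac{1}{2n-1}+\frac{1}{2(3n-2)(2n-1)}=\frac{3}{2(3n-2)}$.

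The factorization step also has a genuine gap: supports alone do not force a two-point factor. For $n=6$, take $\supp\bm p=\set{0,1,4,5}$ and $\supp\bm q=\set{0,2,3,5}$. Their sumset is $\set{0,\dots,10}$ and $p_iq_{5-i}=0$ for $1\le i\le 4$, yet neither factor lives on $\set{0,5}$. You must use the coefficient values of $\bm r^*$.

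Equality in the lemma, together with $r_0=r_{2n-2}=\tfrac12 r_{n-1}$, gives $p_0=p_{n-1}=\alpha$ and $q_0=q_{n-1}=\beta$. Write $G_{\bm p}=\alpha(1+A+x^{n-1})$ and $G_{\bm q}=\beta(1+B+x^{n-1})$, and compare with $G_{\bm r^*}=\alpha\beta(1+x^{n-1})\bigl(1+\tfrac32\sum_{k=1}^{n-2}x^k+x^{n-1}\bigr)$. This yields $AB=(1+x^{n-1})E$, where $E=\tfrac32\sum_{k=1}^{n-2}x^k-A-B$ has all coefficients in $[0,\tfrac32]$. Now track the lowest and highest degrees of $A$, $B$, $E$, using that $AB$ has zero $x^{n-1}$-coefficient. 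If $A$ and $B$ were both nonzero, some coefficient of $E$ would equal $\tfrac94$, a contradiction. Hence one factor is $\tfrac12(1,0,\dots,0,1)$, and normalization gives $\bm q^*$ for the other.
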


Note that in the above for $n=2$, there is only one solution in the $\Delta_{n-1}\times \Delta_{n-1}$ space because the two factors coincide. For $n\ge 3$ there are exactly two solutions related by swapping $\bm p,\bm q$.

Now let's move on to the $\KL$ projection problem 
\begin{equation}
\label{eq:diagonal-kl}\min_{\bm r\in \cM_{n,n}} \KL(\bm r\|\bm u).
\end{equation}

We will invoke the special version of \textcite[Theorem 2.1]{kovacevic2021} namely the following.
\begin{proposition}[{\cite[Theorem 2.1, special case for two summands.]{kovacevic2021}}]
Let $X,Y \in\set{0,1,\cdots,s-1}$ be independent random variables. Then $$\max_{P_X,P_Y} H_2(X+Y) = 1 + w_0/2 + (1-w_0)\log_2(s-2) + h_2(w_0)$$
where $w_0 = 2/\left(\sqrt2(s-2)+2\right)$ and $h_2(x)=-x\log_2x-(1-x)\log_2(1-x)$.
\end{proposition}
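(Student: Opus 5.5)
The plan is to prove the maximum by relaxing the model to a single inequality constraint, solving the relaxed entropy problem exactly, and then exhibiting a factorization that attains the bound. Set $n=s$, so that $\bm r=\bm p*\bm q\in\Delta_{2s-2}$ and the point set is $\set{0,\ldots,2s-2}$. I assume $s\ge3$, so that $\log_2(s-2)$ is defined; the case $s=2$ is immediate by hand. By Lemma~\ref{lem:endpoint-obstruction}, every $\bm r\in\cM_{s,s}$ satisfies $r_{s-1}\ge 2\sqrt{r_0r_{2s-2}}$. So it suffices to maximize $H_2(\bm r)$ over the larger set of $\bm r\in\Delta_{2s-2}$ obeying this one constraint, and then show the bound is attained inside $\cM_{s,s}$.

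\textbf{Step 1: symmetrize the relaxed problem.} The $2s-4$ coordinates other than $0$, $s-1$ and $2s-2$ do not enter the constraint. By concavity of entropy, replacing them by their average keeps feasibility and does not decrease $H_2$. Next write $a=r_0$, $b=r_{2s-2}$ and $y=r_{s-1}$, and fix $\sigma=a+b$ and $y$. The entropy contribution of $(a,b)$ is Schur-concave, so it increases as $a$ and $b$ are brought together. Bringing them together increases $ab$, and the constraint only requires $ab\le y^2/4$. Hence the optimal pair is $a=b=\sigma/2$ when $\sigma\le y$, and otherwise the most balanced pair on the boundary $ab=y^2/4$.

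\textbf{Step 2: show the symmetric boundary case is optimal.} I expect this to be the main obstacle, because the feasible set $\set{y\ge2\sqrt{ab}}$ is not convex, so a plain KKT argument is not automatically global. The approach I would try first is to use Step 1 to reduce to a problem in two or three scalar variables. Then I would check that the unconstrained maximizer (the uniform law, with $y=a$) violates the constraint, so the constraint is active. Finally I would argue directly that, on the active set, moving to $a=b$, $y=2a$ increases entropy. Concretely, with $a=b=w/4$ and $y=w/2$, the three special points carry total mass $w$ in the proportions $(1/4,1/2,1/4)$. If the boundary branch with $a\neq b$ turns out awkward, the fallback is to parametrize it by $a=\theta y/2$, $b=y/(2\theta)$ and show the entropy is unimodal in $\theta$ with its maximum at $\theta=1$.

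\textbf{Step 3: optimize the total weight $w$.} On the reduced family the entropy equals
\begin{align*}
\tfrac32 w + h_2(w) + (1-w)\log_2(2s-4) = 1+\tfrac w2+(1-w)\log_2(s-2)+h_2(w).
\end{align*}
This expression is strictly concave in $w$. Setting its derivative to zero gives $w/(1-w)=\sqrt2/(s-2)$, that is $w=w_0=2/(\sqrt2(s-2)+2)$, which is the claimed value.

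\textbf{Step 4: attainability.} Take $\bm p=\tfrac12(1,0,\ldots,0,1)$ and let $\bm q$ have $q_0=q_{s-1}=w_0/2$, with the remaining mass $1-w_0$ spread equally over the $s-2$ interior coordinates. Convolving gives $r_0=r_{2s-2}=w_0/4$ and $r_{s-1}=w_0/2$. Every other coordinate equals $q_k/2$ for a single interior index $k$, so it has value $(1-w_0)/(2s-4)$. This is exactly the extremal law from Step 3, and it is consistent with the equality case of Lemma~\ref{lem:endpoint-obstruction}. Therefore the relaxed upper bound is achieved in $\cM_{s,s}$, which proves the formula.
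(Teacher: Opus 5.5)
Your overall route is the right one: relax $\cM_{s,s}$ using the endpoint inequality of Lemma~\ref{lem:endpoint-obstruction}, solve the relaxed entropy problem, and exhibit an attaining pair. Steps 1, 3 and 4 are correct, and the pair in Step 4 is the one the paper displays after the proposition.

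The gap is Step 2. It is the only substantive step, and you leave it as a plan. Your primary plan ("on the active set, moving to $a=b$, $y=2a$ increases entropy") does not say along which path you move. Your fallback, in its natural reading with $y$ held fixed, is false. Take the path $a=\theta y/2$, $b=y/(2\theta)$, with the free coordinates absorbing the change of mass. Then the second derivative of the entropy (natural logarithms) with respect to $\log\theta$ at $\theta=1$ equals $y\bigl(\log(2c/y)-1\bigr)$, where $c$ is the common value of the $2s-4$ free coordinates. This is positive whenever $y<2c/e$, for example $s=3$, $a=b=0.05$, $y=0.1$, $c=0.4$. In that regime $\theta=1$ is a local minimum along the path, so no unimodality argument at fixed $y$ can close the non-convex step.

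The missing idea is that the constraint $y^2\ge 4ab$ is homogeneous, so it only restricts the conditional law on the three special points. Put $w=a+b+y$ and $(\alpha,\gamma,\beta)=(a,y,b)/w$. The grouping rule gives
\[
H_2(\bm r)\le h_2(w)+w\,H_2(\alpha,\gamma,\beta)+(1-w)\log_2(2s-4).
\]
The constraint becomes $\gamma^2\ge 4\alpha\beta$, i.e.\ $\sqrt\alpha+\sqrt\beta\le 1$, with no dependence on $w$.

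Now maximize $H_2(\alpha,\gamma,\beta)$ over this non-convex set. The uniform law on three points violates the constraint, since $1/9<4/9$. A maximizer in the relative interior would therefore be a local, hence global, maximizer of the strictly concave entropy, which is impossible. Points with $\alpha=0$ or $\beta=0$ are not maximal either: moving a little mass from the middle point to the empty endpoint stays feasible and increases entropy. So the maximum lies on the curve $\sqrt\alpha+\sqrt\beta=1$. That curve is exactly the binomial family $(x^2,2x(1-x),(1-x)^2)$, with entropy $2h_2(x)-2x(1-x)$. This function is strictly concave, with maximum $3/2$ at $x=1/2$.

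Hence every feasible $\bm r$ satisfies $H_2(\bm r)\le \tfrac32 w+h_2(w)+(1-w)\log_2(2s-4)$, and your Step 3 finishes the proof. If you want to keep your $\theta$-parametrization, hold $w$ fixed rather than $y$. Then $\theta=x/(1-x)$ and the path is precisely this binomial family.
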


Their proof also constructs the attaining pair for the distributions of $X,Y$, upto swapping, namely, \begin{align*}
\mathbb P[X=0] = &\mathbb P[X=s-1] = \frac12,\\
\mathbb P[Y=0] = \mathbb P[Y=s-1] = \frac{w_0}{2}, \quad &\mathbb P[Y=j] = \frac{1-w_0}{s-2} ~ (1\le j \le s-2).
\end{align*}
This proof, combined with Lemma~\ref{lem:endpoint-obstruction}, also supplies the fact that the only possible distributions possible for attainment are the one displayed above and the one where $X,Y$ are swapped. Consider the quantity \begin{equation}
\label{eq:D_n}D_n \sett \sqrt2(n-2)+2.
\end{equation} from the above theorem with $s=n$. Further a direct computation in the above theorem shows that the quantity claimed as maximum simplifies to $1+\log_2(n-2+\sqrt 2)$. The resulting maximum entropy in natural base logarithm gives the value $\log(2(n-2+\sqrt 2)) = \log(D_n\sqrt 2)$. This result directly gives us the projection we are looking for.

\begin{theorem}
The optimal value of \eqref{eq:diagonal-kl} is $\log(2n-1) - \log \left(D_n\sqrt 2\right)$. The optimal solution is unique upto factor swapping in the $\Delta_{n-1}\times \Delta_{n-1}$ and is given by $$\bm p^* = \frac{1}{2}(1,0,0,\cdots0,,1), \qquad \bm q^* = \frac{1}{D_n}\left(1,\sqrt 2,\sqrt 2,\cdots,\sqrt 2,1\right).$$
\end{theorem}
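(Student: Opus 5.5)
The plan is to reduce everything to the entropy identity $\KL(\bm r\|\bm u)=\log(2n-1)-H(\bm r)$ on $\Delta_{2n-2}$ and then transfer the result of \textcite[Theorem 2.1]{kovacevic2021} with $s=n$. Minimizing $\KL$ over $\cM_{n,n}$ is the same as maximizing $H(\bm p*\bm q)$ over $\bm p,\bm q\in\Delta_{n-1}$. That maximum is exactly $\max_{P_X,P_Y}H(X+Y)$ for independent $X,Y\in\set{0,\dots,n-1}$. The quoted proposition gives it as $1+w_0/2+(1-w_0)\log_2(n-2)+h_2(w_0)$ bits. I would therefore first carry out the stated simplification. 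With $w_0=2/D_n$ and $1-w_0=\sqrt2(n-2)/D_n$, substitute and collect terms to obtain $1+\log_2(n-2+\sqrt2)$ bits. Converting to nats gives $\log(2(n-2+\sqrt2))=\log(\sqrt2 D_n)$. This yields the optimal value $\log(2n-1)-\log(D_n\sqrt2)$. Compactness of $\Delta_{n-1}\times\Delta_{n-1}$ and continuity of $H$ guarantee that the minimum is attained.

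Next I would check attainment directly. Take $\bm p^*=\frac12(1,0,\dots,0,1)$ and $\bm q^*=D_n^{-1}(1,\sqrt2,\dots,\sqrt2,1)$. Their convolution $\bm r^*$ is supported on $\set{0,\dots,2n-2}$. Its endpoints are $r_0=r_{2n-2}=1/(2D_n)$, the middle coordinate is $r_{n-1}=1/D_n$, and every other coordinate equals $\sqrt2/(2D_n)$. A short entropy computation then reproduces $\log(\sqrt2 D_n)$. This matches the Kova\v{c}evi\'c attaining pair once we identify $w_0/2=1/D_n$ and $(1-w_0)/(n-2)=\sqrt2/D_n$.

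The main obstacle is uniqueness up to swapping. The cited theorem only gives the value and one attaining pair, so I would reproduce its mechanism using Lemma~\ref{lem:endpoint-obstruction}. For any $\bm r\in\cM_{n,n}$ we have $r_{n-1}\ge 2\sqrt{r_0r_{2n-2}}$. Consider maximizing $H$ over the larger set $\mathcal C$ of $\bm r\in\Delta_{2n-2}$ satisfying this single constraint. By symmetry and concavity of $H$, an optimum over $\mathcal C$ equalizes all the non-constrained coordinates, equalizes the two endpoints, and sits on the boundary $r_{n-1}=2r_0$. A one-variable optimization (Lagrange, or Gibbs' inequality as in Proposition~\ref{prop:gibbs}) then shows the unique maximizer over $\mathcal C$ is $\bm r^*$. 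Here I must check that the constraint set restricted to the symmetric slice still yields a unique maximizer, since $\mathcal C$ is not convex; I would handle this by fixing $r_0,r_{2n-2}$ and $r_{n-1}$ and maximizing over the rest, which is convex. Because $\bm r^*\in\cM_{n,n}$, the optimum over $\cM_{n,n}$ is attained only at $\bm r^*$.

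It remains to deconvolve $\bm r^*$. Equality in Lemma~\ref{lem:endpoint-obstruction} forces $p_iq_{n-1-i}=0$ for $1\le i\le n-2$ and $p_0q_{n-1}=p_{n-1}q_0$. Since $r^*_j>0$ for all $j$, the sumset identity \eqref{eq:support-sumset} gives $\set{0,\dots,2n-2}=\supp\bm p+\supp\bm q$. Combining this with the vanishing constraints, a case analysis on supports forces one factor to be supported on $\set{0,n-1}$. The relation $p_0q_{n-1}=p_{n-1}q_0$ together with $r_0=r_{2n-2}$ forces that factor to be symmetric, hence equal to $\bm p^*$. Division, i.e. solving the triangular convolution system, then yields $\bm q^*$ uniquely. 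Swapping the two factors gives the only other solution.
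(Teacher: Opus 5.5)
Your treatment of the optimal value is the paper's argument: reduce to maximizing $H$, quote Kova\v{c}evi\'c with $s=n$, and simplify to $\log(\sqrt2D_n)$. Your check of $\bm r^*$ is also correct. The paper settles uniqueness only by pointing to Kova\v{c}evi\'c's proof and Lemma~\ref{lem:endpoint-obstruction}. In making this explicit, you assert two steps that do not hold as stated.

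\textbf{First gap: the endpoint symmetrization.} You claim that, by symmetry and concavity, a maximizer of $H$ over $\mathcal C$ has $r_0=r_{2n-2}$. Averaging $\bm r$ with its reflection replaces $r_0,r_{2n-2}$ by their arithmetic mean. This \emph{increases} the right side of $r_{n-1}\ge 2\sqrt{r_0r_{2n-2}}$, so the average can leave $\mathcal C$. For example, $(r_0,r_{n-1},r_{2n-2})=(4\epsilon,4\epsilon,\epsilon)$ is admissible, but its symmetrization is not. Your fix of freezing $r_0,r_{n-1},r_{2n-2}$ only equalizes the free coordinates, so the endpoint step still needs a real argument. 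Here is one way to supply it.
\begin{itemize}
\item If $r_0r_{2n-2}=0$, then $H\le\log(2n-2)<\log(\sqrt2D_n)$.
\item Otherwise a maximizer has all coordinates positive and the constraint active. With $\theta=\sqrt{r_{2n-2}/r_0}$, the KKT equations reduce to $\log\theta=\log 2\cdot\frac{\theta-1}{\theta+1}$.
\item Both sides vanish at $\theta=1$, and the difference of the two sides has derivative $\frac1\theta-\frac{2\log 2}{(\theta+1)^2}>0$. So $\theta=1$ is the only root.
\end{itemize}
Alternatively, apply Proposition~\ref{prop:gibbs} with $f_0=-\nu\theta$, $f_{n-1}=\nu$, $f_{2n-2}=-\nu/\theta$, $\nu=\frac12\log 2$ and all other $f_i=0$. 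Then use $2^{-\theta/2}+2^{-1/(2\theta)}\le\sqrt2$, with equality iff $\theta=1$.

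\textbf{Second gap: the support case analysis.} You claim that supports alone force one factor onto $\set{0,n-1}$. This is false. For $n=6$, take $\supp\bm p=\set{0,1,4,5}$ and $\supp\bm q=\set{0,2,3,5}$. Then $\supp\bm p+\supp\bm q=\set{0,\dots,10}$ and $p_iq_{5-i}=0$ for $1\le i\le 4$, yet neither factor is supported on $\set{0,5}$. You have to use the actual values of $\bm r^*$. One way to do this:
\begin{itemize}
\item From $p_0q_{n-1}=p_{n-1}q_0$ and $p_0q_0=r^*_0=r^*_{2n-2}=p_{n-1}q_{n-1}$, you get $p_0=p_{n-1}$ and $q_0=q_{n-1}$.
\item Define $c_m=\sum p_iq_j$ over $1\le i,j\le n-2$ with $i+j=m$. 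The equalities $r^*_k=r^*_{k+n-1}$ for $1\le k\le n-2$ then become $c_k=c_{k+n-1}$.
\item Since $r^*_{2n-3}=p_{n-1}q_{n-2}+p_{n-2}q_{n-1}>0$, after swapping factors you may assume $q_{n-2}>0$.
\item Then $p_1=0$, because $p_1q_{n-2}=0$.
\item Inductively, if $p_1=\dots=p_{m-1}=0$, then $c_m=0$. Hence $0=c_{m+n-1}\ge p_{m+1}q_{n-2}$, which gives $p_{m+1}=0$.
\end{itemize}
So $\bm p$ is supported on $\set{0,n-1}$, and your remaining steps (symmetry of that factor, then division) go through.
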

\begin{proof}
It is clear that $\KL(\bm r\|\bm u) = \log(2n-1) - H(\bm r)$. The maximum entropy of $H(\bm r)$ over $\cM_{n,n}$ is $\log(D_n\sqrt 2)$, which is given by the above. Therefore the minimum relative entropy is $\log(2n-1)-\log(D_n\sqrt 2)$. The uniqueness assertion follows from the proof of \cite[Theorem 2.1]{kovacevic2021} and the discussion above.
\end{proof}

\subsection{Line-metric $W_1$.}

Let's start with the $W_1^{\text{line}}$ problem, before moving on to a conjecture about the $\ell_1$ projection. We are interested in \begin{equation}
\label{eq:w1-diagonal-problem}
\min_{\bm r\in \cM_{n,n}} W_1^{\text{line}}(\bm r,\bm u).
\end{equation} 
Recall the expressions for the line-metric Wasserstein$-1$ distance from \eqref{eq:w1-expressions}, with adjusted indices. Here we will use the same proof strategy. Propose a feasible point with the optimal value and show, using the supremum form of the linear program, that this is indeed a lower bound.

We will need an elementary deviation inequality, Lemma~\ref{lem:independent-absolute-deviation}, which states that for independent $X,Y\in[0,m]$, we have $\E{}\abs{X+Y-m} \le m/2 + (\E{}X+\E{}Y-m)^2/(2m)$ with equality iff $\E{}X=\E{}Y$ and at least one of $X,Y$ is supported on $\set{0,m}$. Using this, we compute a lower bound on the $W_1$ distance using the supremum expression in \eqref{eq:w1-expressions}, with adjusted indices, computed against three dual-feasible `potentials'. This would result in $$W_1^{\text{line}}(\bm r,\bm u) \ge \max\set{t,\frac{n-1}{4n-2} -\frac{t^2}{2n-2}}$$ where $t = \abs{ \E{}[X+Y]-n+1}$, as proven in Lemma~\ref{lem:w1-diagonal-lower}. The point where the two expressions in the above maximum meet should suggest the mean of an extremal distribution, which are in search of. Indeed, by solving a quadratic, they meet at \begin{equation}
t_* = (n-1)\left(\sqrt{\frac{2n}{2n-1}}-1\right).\label{eq:w1-diagonal-optimal}
\end{equation} The equality case of Lemma~\ref{lem:independent-absolute-deviation} states that one of the optimal $X,Y$ has to be supported on $\set{0,n-1}$, say $X$, and $u\sett \E{}X=\E{}Y$. This means $\mathbb P[X=n-1](n-1) = u$ and $u=(n-1\pm t_*)/2$ whence $\mathbb P[X=n-1] = (n-1\pm t_*)/(2(n-1))$. Let's go forward with the choice $\mathbb P[X=n-1] = (n-1 - t_*)/(2(n-1))$ whence $\mathbb P[X=0] = (n-1 + t_*)/(2(n-1))$. The other choice simply reverses the distribution of $X$. Here is the full description.

\begin{theorem}\label{thm:w1-optimizer-diagonal}
The optimal value of \eqref{eq:w1-diagonal-problem} is $t_*$ as in \eqref{eq:w1-diagonal-optimal} above. Upto swapping factors and simultaneously reversing each factor, the unique solution in the $\Delta_{n-1}\times \Delta_{n-1}$ space is given by \begin{equation}
\label{eq:w1-optimal-solution-diagonal}\bm p^* = (a,0,0,\cdots,0,1-a), \qquad \bm q^* = (\eta,\eta,\cdots,\eta,b)
\end{equation}
where $a=(n-1+t_*)/(2n-2), \eta = ((2n-1)a)^{-1}$ and $b=1-(n-1)\eta$.
\end{theorem}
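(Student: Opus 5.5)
The plan is the primal–dual template used for Theorem~\ref{thm:w1-minimizer}. The upper bound comes from exhibiting the pair \eqref{eq:w1-optimal-solution-diagonal} and computing its distance. The lower bound comes from Lemma~\ref{lem:w1-diagonal-lower}. Uniqueness is obtained by tracing the equality cases through Lemma~\ref{lem:w1-diagonal-lower} and Lemma~\ref{lem:independent-absolute-deviation}.

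\emph{Lower bound.} Take any $\bm r=\bm p*\bm q\in\cM_{n,n}$ and write $X\sim\bm p$, $Y\sim\bm q$, with
\[
t=\abs{\E{}[X+Y]-(n-1)}.
\]
Lemma~\ref{lem:w1-diagonal-lower} gives
\[
W_1^{\text{line}}(\bm r,\bm u)\ge f(t)\sett\max\set{t,\tfrac{n-1}{4n-2}-\tfrac{t^2}{2n-2}}.
\]
The first branch of $f$ increases in $t$ and the second decreases, so $f$ is minimized where they cross. Solving the quadratic
\[
t^2+(2n-2)t-\tfrac{(n-1)^2}{2n-1}=0
\]
gives the crossing point
\[
t=(n-1)\left(-1+\sqrt{1+\tfrac{1}{2n-1}}\right)=t_*,
\]
so $W_1^{\text{line}}(\bm r,\bm u)\ge t_*$ for every $\bm r\in\cM_{n,n}$.

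\emph{Attainment.} First check that $\bm p^*,\bm q^*$ are probability vectors. We have $a\in(1/2,1)$ and $\eta=1/((2n-1)a)$. We need $b\ge0$, i.e.\ $(n-1)\eta\le 1$, i.e.\ $a\ge (n-1)/(2n-1)$, which holds since $a>1/2$.

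Next compute $\bm r^*=\bm p^**\bm q^*$ explicitly:
\begin{itemize}
\item $r_k=a\eta=1/(2n-1)$ for $0\le k\le n-2$;
\item $r_{n-1}=ab+(1-a)\eta$;
\item $r_{n-1+k}=(1-a)\eta$ for $1\le k\le n-2$;
\item $r_{2n-2}=(1-a)b$.
\end{itemize}
The first $n-1$ coordinates are exactly uniform, so the partial-sum discrepancies $D_k$ vanish for $k\le n-2$. Beyond that, $D_k$ moves linearly: each coordinate after the middle has deficit $1/(2n-1)-(1-a)\eta$, which is constant. Hence
\[
\sum_k\abs{D_k}
\]
is an arithmetic sum, and it should reduce to $t_*$ using the defining quadratic for $t_*$.

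Checking this closed form is purely computational, but it is where I expect sign bookkeeping to bite. The check needs the signs of the $D_k$ to be constant over the tail, which follows from $(1-a)\eta<1/(2n-1)$, equivalently $a>1/2$. Alternatively, it may be cleaner to evaluate the specific dual potentials from Lemma~\ref{lem:w1-diagonal-lower} at $\bm r^*$ and show the primal value matches them.

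\emph{Uniqueness.} Equality $W_1^{\text{line}}(\bm r,\bm u)=t_*$ forces $t=t_*$. It also forces equality in the quadratic branch, and therefore equality in Lemma~\ref{lem:independent-absolute-deviation}. That lemma's equality case gives $\E{}X=\E{}Y$ and that one factor, say $X$ (up to swapping), is supported on $\set{0,n-1}$. Hence $\bm p=(a,0,\dots,0,1-a)$ with $a=(n-1\pm t_*)/(2n-2)$; the minus sign is the reversal of the plus sign.

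Now fix $a$. Equality in the first branch $W_1\ge t$ should force, as in the proof of Theorem~\ref{thm:w1-minimizer}, $D_k=0$ for $0\le k\le n-2$. The main obstacle is to confirm that the linear potential giving $W_1\ge t$ has slack exactly where needed to force this; I would extract it from the potentials of Lemma~\ref{lem:w1-diagonal-lower}. Granting it, $r_k=1/(2n-1)$ for $k\le n-2$, and the convolution gives $r_k=aq_k$ for $k\le n-2$. So $q_k=\eta$ for $k\le n-2$, and $q_{n-1}=b$ by normalization. This recovers \eqref{eq:w1-optimal-solution-diagonal} uniquely, up to the stated symmetries of swapping factors and simultaneously reversing each factor.
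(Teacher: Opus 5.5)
Your lower bound, the reduction to $t=t_*$, and the use of the equality case of Lemma~\ref{lem:independent-absolute-deviation} to pin down $\bm p$ all match the paper. Your attainment step is also sound. Once every $D_k\ge 0$, summation by parts gives $\sum_k\abs{D_k}=\sum_k D_k=\E{}U-\E{}S=(n-1)(2a-1)=t_*$, using $\E{}X=\E{}Y$; this is how Lemma~\ref{lem:w1-diagonal-attain} concludes.

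\emph{The gap} is the step you grant: that equality in the branch $W_1\ge t$ forces $D_k=0$ for $0\le k\le n-2$. The mechanism you propose is not there. Since $\sum_i(r_i-u_i)x_i=\sum_k D_k(x_k-x_{k+1})$, slack in a potential at index $k$ forces $D_k=0$. That is what happens in Theorem~\ref{thm:w1-minimizer}, where the increments are $t^k<1$. The potentials $x_i=\pm i$, however, have increments of exactly $1$ at every index. So equality $\sum_k\abs{D_k}=\abs{\sum_k D_k}$ only tells you that all $D_k$ share a sign. When $a>1/2$ we have $\E{}S=n-1-t_*$, so all $D_k\ge 0$. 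It says nothing about which of them vanish.

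\emph{The missing ingredient} is the other half of the tightness of the quadratic branch. You used only that the inequality of Lemma~\ref{lem:independent-absolute-deviation} is tight. At $t=t_*$, the first inequality of that branch, $W_1\ge\sum_i(r_i-u_i)\bigl(-\abs{i-(n-1)}\bigr)$, must be tight as well. The tent potential has increments $x_k-x_{k+1}=-1$ for $k\le n-2$ and $+1$ for $k\ge n-1$. Its tightness therefore forces $D_k\le 0$ for $k\le n-2$ and $D_k\ge 0$ for $k\ge n-1$.

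Combined with $D_k\ge 0$ from the linear potential, this gives $D_k=0$ for $k\le n-2$. Hence $aq_k=r_k=1/(2n-1)$, i.e.\ $q_k=\eta$ for $k\le n-2$, and $q_{n-1}=b$ by normalization. For the root $a<1/2$, the linear potential instead gives all $D_k\le 0$, and the same tent condition forces $D_k=0$ for $k\ge n-1$. This produces exactly the reversed pair. With this replacement your uniqueness argument is complete, and it is what the paper's ``trace the tight cases'' sketch implicitly requires.
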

\begin{proof}
Lemma~\ref{lem:w1-diagonal-lower} and the above discussion gives $W_1^{\text{line}}(\bm r,\bm u) \ge t_*$ for any $\bm r\in \cM_{n,n}$. Lemma~\ref{lem:w1-diagonal-attain} verifies that the proposed $\bm p^*,\bm q^*$ satisfy this equality. For uniqueness upto swapping and reversal, one simply traces the tight cases of the inequalities used and the equality criterion of Lemma~\ref{lem:independent-absolute-deviation}, by reversing the factors if necessary.
\end{proof}

\subsection{$\ell_1$ projection and its conjectural lower bound}
We now turn our interest in the $\ell_1$ projection problem \begin{equation}
\label{eq:l1-diagonal-problem}
\min_{\bm r\in \cM_{n,n}} \norm{\bm r-\bm u}{1}
\end{equation}

We exhibit a family of probability vectors and their $\ell_1$ distance from $\bm u$ serves as a candidate for the above projection problem. This, in particular, serves as an upper bound for the optimal value of \eqref{eq:l1-diagonal-problem} but a lower bound still needs to be proven. We conjecture that this is the lower bound too.

Consider the family for probability vectors for $1\le t\le 2$
\begin{equation}
\label{eq:l1-candidate}
\bm p = \frac12(1,0,\cdots,0,1)\in \Delta_{n-1}, \qquad \bm q^{(t)} = \frac{1}{2n-1}(t, \underbrace{2,\cdots,2}_{n-2}, 3-t)\in \Delta_{n-1}.
\end{equation}
One can verify easily that the above are indeed probability vectors just by adding the coordinates. It is also easy to check that the $\ell_1$ distance of $\bm p*\bm q^{(t)}$ from $\bm u$ is exactly $(2n-1)^{-1}$. Indeed, the coordinates of $\bm p*\bm q^{(t)}$ are exactly $(2n-1)^{-1}$ everywhere except indices $0,n-1,2(n-1)$ where they are $t/(4n-2), 3/(4n-2), (3-t)/(4n-2)$ respectively. Adding the deviations from $(2n-1)^{-1}$ gives the aforementioned distance. We conjecture that this is optimal.

\begin{conjecture}
The optimal value of \eqref{eq:l1-diagonal-problem} is $(2n-1)^{-1}$ for $n\ge 3$.
\end{conjecture}

\section{Bernoulli additive noise capacity}

Let $B\in\set{0,1}$ be a Bernoulli random variable with $p=\mathbb P[B=1]$. Define $$C(p) \sett \sup_{\substack{X\perp B\\\mathbb P[X \in \mathbb N]=1\\\E{} X<\infty}}\frac{I(X;X+B)}{\E{}X}$$ where the information content $I(X;X+B)$ is given by $I(X;X+B) = H(X+B) + p\log_2 p + (1-p)\log_2(1-p)$. We let $h_2(p) = -p\log_2 p - (1-p)\log_2(1-p)$ and $h(p) = -p\log p - (1-p)\log(1-p)$ (in base $e$). \textcite{BCPR2026} identify the quantity $C(p)$ with the capacity of the binary sticky channel whose input runs are independently extended by $B$; see \cite[Section 1.1 and Appendix A]{BCPR2026}. \cite[Theorem 3.1]{BCPR2026} supplies an exponential-family upper bound, and \cite[Conjecture 3.6]{BCPR2026} asserts its optimality. The next theorem proves the nondegenerate range of that conjecture. The conversion to the base-two normalization is simply multiplying the exponents by a logarithmic factor and the endpoint case of the conjecture is remarked below.

We consider the distribution \begin{equation}\label{dext:eq:capacity-normalization} Q_b(y) \sett \exp{-c(y-p)-h(p)-b(-\rho)^{y-1}}, \qquad y\in \mathbb N\end{equation}
where $\rho = (1-p)/p$. This is not a distribution on $\mathbb N$ in general, but it turns out by Lemma~\ref{lem:solution-law} that there is a solution $c=c_p(b)$ for every $p\in \left(\frac12,1\right), b\in \R$, which makes it sum to one. Further, $c_p$ has a unique positive minimizer $b_p$.

\begin{theorem}
\label{dext:thm:capacity}
Fix $1/2<p<1$, and let $\rho=(1-p)/p$. Then 
\begin{equation}
\label{dext:eq:capacity-value}
 C(p)=c_p(b_p)/\log 2=\min_{b\in\mathbb R}c_p(b)/\log 2.
\end{equation}
The maximizing argument $X$ of $C(p)$ is unique, has strictly positive probability at every positive integer, and has an exponentially decaying tail.
\end{theorem}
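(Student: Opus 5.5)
The proof has two halves: a Gibbs-type upper bound for every admissible input, and a matching construction at the minimizing $b$. Write $Y=X+B$ with $X\in\mathbb N$, and let $\bm r$ be the law of $Y$ on $\mathbb N$. The law of $Y-1$ is $(1-p,p)*\bm q$ with $\rho=(1-p)/p<1$. By the same computation as in Theorem~\ref{thm:M2n-membership}, the Bernoulli factor forces the root condition $\E{}(-\rho)^{Y-1}=0$. In nats, $I(X;Y)=H(\bm r)-h(p)$ and $\E{}X=\E{}Y-p$.

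For the upper bound, fix $b\in\R$ and use the law $Q_b$ from \eqref{dext:eq:capacity-normalization} with $c=c_p(b)$ (Lemma~\ref{lem:solution-law}), so that $Q_b$ is a probability law. Gibbs' inequality (Proposition~\ref{prop:gibbs}) then gives
\[
H(\bm r)\le -\sum_y r_y\log Q_b(y)=c\,(\E{}Y-p)+h(p)+b\,\E{}(-\rho)^{Y-1}=c\,\E{}X+h(p),
\]
where the $b$-term vanishes by the root condition. Finiteness of $\E{}Y$ justifies splitting the series. Hence $I(X;Y)\le c_p(b)\,\E{}X$ for every admissible $X$ and every $b$. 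Taking the infimum over $b$ gives $C(p)\le \min_b c_p(b)/\log 2$, the minimum being attained at $b_p$ by Lemma~\ref{lem:solution-law}.

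For attainment, take $b=b_p$ and $\bm r=Q_{b_p}$, shifted to $\{0,1,\dots\}$. This law has the form $r_j\propto z^j\exp\set{-b(-\rho)^j}$ with $z=e^{-c}$. Since $b_p$ is an interior minimizer of $c_p$, differentiating the normalization $\sum_y Q_b(y)=1$ implicitly in $b$ at $b_p$ (where $c_p'(b_p)=0$) gives $\sum_y Q_{b_p}(y)(-\rho)^{y-1}=0$. This is exactly the root condition $\E{}(-\rho)^Z=0$. To justify this step I need $c_p$ differentiable, which should follow from the implicit function theorem on the absolutely convergent series, together with the fact that $c>0$ so that $z<1$. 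I then apply Lemma~\ref{lem:positive-deconvolution} with $k=\infty$. Its hypotheses are $b>0$ (true since $b_p>0$), $\rho<1$ and $z<1$. It yields a law $\bm q$ with $\bm r=(a,1-a)*\bm q$, where $a=\rho/(1+\rho)=1-p$, so that $P[B=0]=1-p$ as required. The lemma also gives that every $q_j$ is strictly positive and that $\bm q$ has a geometric tail bound, so $\E{}X<\infty$. For $X=1+(\text{law }\bm q)$ the Gibbs inequality is an equality, since $\bm r=Q_{b_p}$. Therefore $I(X;Y)/\E{}X=c_p(b_p)$, which proves \eqref{dext:eq:capacity-value}.

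For uniqueness, suppose $X$ is another maximizer. Its output law $\bm r$ must give equality in Gibbs' inequality at $b_p$, and equality in Gibbs forces $\bm r=Q_{b_p}$. By the uniqueness part of Lemma~\ref{lem:positive-deconvolution}, or equivalently the recursive reconstruction with the Bernoulli factor fixed, $\bm q$ is then determined. A maximizer might instead only be approached by a sequence rather than attained, but the supremum is attained and any attaining $X$ satisfies the equality, so uniqueness holds among maximizers. The main obstacle I expect is the middle step: showing that the minimizer $b_p$ exists, is positive, and satisfies the stationarity condition that turns into the root condition. This relies on Lemma~\ref{lem:solution-law}, and I would check the sign of $b_p$ there carefully, because the positivity assumption $b>0$ in Lemma~\ref{lem:positive-deconvolution} is essential.
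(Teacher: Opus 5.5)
Your proposal is correct and follows essentially the paper's argument. The upper bound is the Gibbs/cross-entropy comparison against $Q_b$, where the root condition $\E{}(-\rho)^{Y-1}=0$ induced by the Bernoulli factor kills the $b$-term. Attainment, positivity and the exponential tail come from stationarity at $b_p>0$ together with Lemma~\ref{lem:positive-deconvolution} for $k=\infty$. Uniqueness follows from equality in Gibbs forcing the output law to be $Q_{b_p}$, plus uniqueness of the deconvolution. The only cosmetic difference is that you prove $I(X;Y)\le c_p(b)\,\E{}X$ for every $b$ and then minimize, whereas the paper uses only $b=b_p$, written as the identity $\KL(\tilde Y\|Y)=c_*\E{}\tilde X-I(\tilde X;\tilde Y)$.
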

\begin{proof}Every base used here is base $e$ (for logarithms, entropy, mutual information). By Lemma~\ref{lem:solution-law} there is a unique solution $c_p$ that normalizes the law $(Q_b(y))_{y\in\mathbb N}$ and $b_p = \arg\min c_p > 0$. Put $c_*=c_p(b_p)$. Stationarity yields
\begin{equation}
\label{dext:eq:channel-root}
 \sum_{y\geq1}Q_{b_p}(y)(-\rho)^{y-1}=0.
\end{equation}
The shifted law $r_j=Q_{b_p}(j+1)$ has the form in Lemma~\ref{lem:positive-deconvolution} (with infinite support), with
$$
 z=e^{-c_*},\qquad A=e^{-c_*(1-p)-h(p)},\qquad b=b_p.
$$
Since $\rho/(1+\rho)=1-p$, the lemma asserts the existence of a unique law $\bm q$ on $\mathbb N \cup \set{0}$ such that $\bm r=(1-p,p)*\bm q$. Taking $X$ to be an independent random variable with law $\mathbb P(X=j+1)=q_j$ produces $Q_{b_p}$. The tail bound on $q_j$ in the lemma ensures $\E{} X<\infty$.

We next justify the entropy calculation for every $X$ in the definition of $C(p)$, not only for the constructed law above. It is clear that a probability law $R$ on the positive integers with finite mean has finite entropy. Thus $H(X)$ and $H(Y)$ are finite. The cross entropy against $Q_{b_p}$ is finite too, because $-\log Q_{b_p}(y)$ is a linear function of $y$ plus a bounded function. Hence all terms in the following relative-entropy identity are well defined.

For every admissible $\tilde X$, letting $\tilde Y=\tilde X+B$, independence and boundedness of the powers of $\rho$ give
$$\E{}(-\rho)^{\tilde Y-1} =\E{}(-\rho)^{\tilde X-1}\left((1-p)-p\rho\right)=0.$$
Since $\E{} \tilde Y-p=\E{} \tilde X$ and $H(\tilde Y\mid \tilde X)=h(p)$, we obtain
\begin{align*}
\KL(\tilde Y \| Y)
&=-H(\tilde Y) - \sum_{i\ge 1}P_{\tilde Y}(i)\log Q_{b_p}(i)\\
&=-H(\tilde Y) +h(p) + c_*\sum P_{\tilde Y}(i)(i-p) + b_p\sum P_{\tilde Y} (i)(-\rho)^{i-1}\\
&= -H(\tilde Y) + h(p) + c_* \E{}[\tilde X] + b_p\sum P_{\tilde Y} (i)(-\rho)^{i-1}\\
&= -H(\tilde Y) + h(p) + c_* \E{}[\tilde X] + b_p \E{}(-\rho)^{\tilde Y-1}\\
&= c_*\E{}[\tilde X] - H(\tilde Y) + H(\tilde Y\mid \tilde X)\\
&= c_*\E{}\tilde X - I(\tilde X;\tilde Y)
\end{align*}
Nonnegativity of $D_{\mathrm{KL}}$ and taking supremum over $\tilde X$ proves $C(p)\leq c_*/\log 2$. For the constructed law $Q_{b_p}$ of $Y$, the relative entropy is zero, so equality is attained and $C(p)=c_*/\log 2$. Finally, equality for any maximizing random variable forces $P_{\tilde X+B}=Q_{b_p}$. The uniqueness conclusion of Lemma~\ref{lem:positive-deconvolution} forces the same distribution. Since the law of $\tilde Y$ is unique, so is the law of $\tilde X$.
\end{proof}

\begin{remark}[Conversion to the stated conjecture base and the endpoint]
\label{dext:rem:capacity-endpoint}
In the notation of \cite[Conjecture~3.6]{BCPR2026}, which uses bits, $b=\beta\log2$ and $c_p(b)=\lambda(p,\beta)\log2$. Thus \eqref{dext:eq:capacity-value} is precisely the asserted formula for $1/2<p<1$. The endpoint $p=1$ must not be included in the assertion that a normalizing $c$ exists for every real $b$, or that the minimum has a finite minimizer. With $0^0=1$, its normalization equation is
$$e^{-b}+\frac{e^{-c}}{1-e^{-c}}=1.$$
It has a finite solution exactly when $b>0$, namely
$$c_1(b)=\log\frac{2-e^{-b}}{1-e^{-b}},\qquad \inf_{b>0}c_1(b)=\log2,$$
and the infimum is approached only as $b\to\infty$. Directly, $B=1$ implies $I(X;X+B)=H(X)$, and comparison with $\mathbb P(X=x)=2^{-x}$ gives $H(X) \le (\log2)\E{} X$, with equality only at the specified law. Hence $C(1)=1$.
\end{remark}

\printbibliography

@article{chenRaoShreve1997,
  author  = {Chen, Guantao and Rao, M. Bhaskara and Shreve, Warren E.},
  title   = {Can One Load a Set of Dice So That the Sum Is Uniformly Distributed?},
  journal = {Mathematics Magazine},
  volume  = {70},
  number  = {3},
  pages   = {204--206},
  year    = {1997},
  doi     = {10.1080/0025570X.1997.11996534}
}

@article{gasarchKruskal1999,
  author  = {Gasarch, William I. and Kruskal, Clyde P.},
  title   = {When Can One Load a Set of Dice so That the Sum Is Uniformly Distributed?},
  journal = {Mathematics Magazine},
  volume  = {72},
  number  = {2},
  pages   = {133--138},
  year    = {1999},
  doi     = {10.1080/0025570X.1999.11996715}
}

@article{morrison2018,
  author  = {Morrison, Ian},
  title   = {Sacks of Dice with Fair Totals},
  journal = {The American Mathematical Monthly},
  volume  = {125},
  number  = {7},
  pages   = {579--592},
  year    = {2018},
  doi     = {10.1080/00029890.2018.1473699}
}

@article{asgarliEtAl2024,
  author  = {Asgarli, Shamil and Hartglass, Michael and Ostrov, Daniel N. and Walden, Byron},
  title   = {A Fair Shake: How Close Can the Sum of {$n$}-Sided Dice Be to a Uniform Distribution?},
  journal = {The American Mathematical Monthly},
  volume  = {131},
  number  = {7},
  pages   = {608--617},
  year    = {2024},
  doi     = {10.1080/00029890.2024.2347163}
}

@article{kovacevic2021,
  author  = {Kova\v{c}evi\'c, Mladen},
  title   = {On the Maximum Entropy of a Sum of Independent Discrete Random Variables},
  journal = {Theory of Probability and Its Applications},
  volume  = {66},
  number  = {3},
  pages   = {482--487},
  year    = {2021},
  doi     = {10.1137/S0040585X97T99054X},
  eprint  = {2008.01138},
  archivePrefix = {arXiv}
}

@article{bauschCubitt2016,
  author  = {Bausch, Johannes and Cubitt, Toby},
  title   = {The Complexity of Divisibility},
  journal = {Linear Algebra and its Applications},
  volume  = {504},
  pages   = {64--107},
  year    = {2016},
  doi     = {10.1016/j.laa.2016.03.041}
}

@article{covenMeyerowitz1999,
  author  = {Coven, Ethan M. and Meyerowitz, Aaron D.},
  title   = {Tiling the Integers with Translates of One Finite Set},
  journal = {Journal of Algebra},
  volume  = {212},
  number  = {1},
  pages   = {161--174},
  year    = {1999},
  doi     = {10.1006/jabr.1998.7628}
}

@article{vallender1974,
  author  = {Vallender, S. S.},
  title   = {Calculation of the Wasserstein Distance Between Probability Distributions on the Line},
  journal = {Theory of Probability and Its Applications},
  volume  = {18},
  number  = {4},
  pages   = {784--786},
  year    = {1974},
  doi     = {10.1137/1118101}
}

@article{kelly1950,
  author  = {Kelly, John B.},
  title   = {Elementary Problem {E925}},
  journal = {The American Mathematical Monthly},
  volume  = {57},
  number  = {6},
  pages   = {416},
  year    = {1950}
}

@misc{klich2026,
  author        = {Klich, Israel},
  title         = {On Factorizing Aggregate Counting Distributions into Independent Latent Processes},
  year          = {2026},
  eprint        = {2607.14409},
  archivePrefix = {arXiv},
  primaryClass  = {math-ph},
  note          = {Version 1, 15 July 2026}
}

@misc{BCPR2026,
      title={Capacity of Additive-Noise Sticky Channels}, 
      author={Cécile Bouette and Samuel Pearson and Roni Con and João Ribeiro},
      year={2026},
      eprint={2608.01433},
      archivePrefix={arXiv},
      primaryClass={cs.IT},
      url={https://arxiv.org/abs/2608.01433}, 
}

@book{villani_ot,
  author    = {Villani, C{\'e}dric},
  title     = {Topics in Optimal Transportation},
  series    = {Graduate Studies in Mathematics},
  volume    = {58},
  publisher = {American Mathematical Society},
  address   = {Providence, RI},
  year      = {2003},
  isbn      = {978-0-8218-3312-4}
}

@misc{computationaloptimaltransport,
      title={Computational Optimal Transport}, 
      author={Gabriel Peyré and Marco Cuturi},
      year={2020},
      eprint={1803.00567},
      archivePrefix={arXiv},
      primaryClass={stat.ML},
      url={https://arxiv.org/abs/1803.00567}, 
}

@book{Bollobas_2006, 
    place={Cambridge}, 
    title={The Art of Mathematics: Coffee Time in Memphis}, 
    publisher={Cambridge University Press}, 
    author={Bollob\'as, B\'ela}, 
    year={2006}}

\newpage
\appendix

\section{Exposition on Wasserstein distance}\label{sec:w-dist-tutorial}

Fix a finite metric space $(S=\set{0,1,\cdots,n-1}, d)$, and two probability densities $\bm \mu, \bm \nu\in \Delta_{n-1}$. A \textit{transportation plan} to transport $\bm \mu$ to $\bm \nu$ is a matrix $\gamma \in \R_{\ge 0}^{n\times n}$ such that $$\sum_{j=0}^{n-1} \gamma_{ij} = \mu_i ~\forall ~0\le i< n,\qquad \sum_{i=0}^{n-1} \gamma_{ij} = \nu_j ~\forall~ 0\le j< n.$$ Intuitively $\gamma_{i,j}$ records the amount of mass transported from site $i$ to site $j$, so that the total mass that can be transported from site $i$ to all $j$, namely the sum of $\Gamma_{ij}$ over all $j$, should be $\mu_i$. Similarly the total mass transported to site $j$ is $\nu_j$. One can also think of $\gamma$ to be a joint distribution on $S\times S$ with marginals $\bm\mu,\bm\nu$. Let $\Gamma(\bm \mu,\bm \nu)$ be the set of all such transportation plans and is said to be the transportation polytope. The $p-$cost of plan $\gamma$ is proportional to $$C_p(\gamma) = \left(\sum_{0\le i,j<n} \gamma_{ij}d_{ij}^p\right)^{1/p} = \left(\E{(i,j)\sim\gamma} \left[d_{ij}^p\right]\right)^{1/p}.$$ The \textit{Wasserstein-$p$ distance to `transport' $\bm\mu$ to $\bm\nu$ based on metric $d$} is the cost for the `best' transportation plan, namely, $$W_p^{(d)}(\bm \mu,\bm \nu) \sett \inf_{\gamma\in \Gamma(\bm\mu,\bm \nu)}C_p(\gamma) = \inf_{\gamma\in \Gamma(\bm\mu,\bm \nu)}\left(\E{(i,j)\sim\gamma} \left[d_{ij}^p\right]\right)^{1/p}.$$

For $p=1$ and a finite metric space, the above $W_1^{(d)}$ turns into a linear program and has a different form by the Kantorovich-Rubenstein duality, which is simply LP-duality for finite dimensional linear programs. This dual program can be written as a supremum over the Lipschitz polytope, namely $$\text{Lip}_{S,d} = \set{\bm f \in \R^S \st \abs{f_i-f_j} \le d_{ij}~\forall~i,j\in S}.$$
The corresponding dual program is
$$W_1^{(d)}(\bm \mu,\bm \nu) = \inf_{\gamma\in \Gamma(\bm\mu,\bm \nu)}\E{(i,j)\sim\gamma} d_{ij} \stackrel{\text{duality}}{=} \sup_{\bm x\in \text{Lip}_{S,d}} \sum_{0\le i<n} (\mu_i-\nu_i)x_i.$$

\textcite[Section 1.1.3]{villani_ot} recounts the following interpretation from Caffarelli:

\begin{quote}
Suppose you want to ship some coal from mines, distributed as $\bm \mu$, to factories, distributed as $\bm \nu$. The cost function of transport is $d$. Now a shipper comes and offers to do the transport for you. You would pay him $f(x)$ per coal for loading the coal at $x$, and pay him $g(y)$ per coal for unloading the coal at $y$.

For you to accept the deal, the price schedule must satisfy  $f(x) +g(y) \le d(x,y)$. The Kantorovich-Rubenstein duality states that the shipper can make a price schedule that makes you pay almost as much as you would ship yourself.
\end{quote}

We focus on the case when $p=1$ and $d_{ij} = \abs{i-j}$, which is said to be the line metric, and write $W_1^{\text{line}}$ for this setting. In this setting, there is in fact a simple formula in terms of the cumulutaive distributions of $\bm\mu,\bm\nu$ as described by \textcite[Remark 2.30, Equation (2.37)]{computationaloptimaltransport}. To the best of our knowledge, this is attributed to \textcite{vallender1974}. Consider the CDF's of $\bm\mu,\bm\nu$ namely $F_{\bm\mu}, F_{\bm\nu}$ where $F_{\bm\pi}(k) = \sum_{i=0}^k\pi_i$. Then $$W_1^{\text{line}}(\bm\mu,\bm\nu) = \sum_{k=0}^{n-2}\abs{F_{\bm\mu}(k) - F_{\bm\nu}(k)}.$$

\section{The negative-root entropy bound}
\label{sec:technical-KL}

We complete the entropy argument used in the $2$-by-$n$ model. Fixing a negative zero $-\rho$ converts the root condition into one linear moment constraint. Gibbs' inequality then identifies the corresponding exponential family, so the problem reduces to the optimized partition function
$$
\mathcal Z(\rho)=\min_{\lambda\in\R}
\sum_{i=0}^{2k}e^{\lambda(-\rho)^i}.
$$ 
Our goal is to show that $\mathcal Z$ increases on $(0,1]$. At its minimizing parameter, the derivative of $\mathcal Z$ is governed by a weighted alternating exponential sum. We determine the sign of that sum by pairing its even and odd terms and comparing the resulting cross-ratios. The only lengthy part is a scalar estimate ensuring that all these cross-ratios point in the same direction.

\begin{proposition}[Gibbs variational inequality]
\label{prop:gibbs}
Let $I$ be finite, let $(f_i)_{i\in I}\subseteq\R$, and let $Z(\lambda)=\sum_{i\in I}e^{\lambda f_i}$. For every probability vector $(r_i)_{i\in I}$ and every $\lambda\in\R$,
\begin{equation}
H(\bm r)\leq\log Z(\lambda)-\lambda\sum_{i\in I}r_if_i.
\label{eq:gibbs}
\end{equation}
Equality holds if and only if $r_i=e^{\lambda f_i}/Z(\lambda)$ for every $i$.
\end{proposition}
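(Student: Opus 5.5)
The plan is to reduce the statement to nonnegativity of a relative entropy. Fix $\lambda\in\R$ and set $\pi_i \sett e^{\lambda f_i}/Z(\lambda)$ for $i\in I$. Since $I$ is finite and every $e^{\lambda f_i}$ is strictly positive, $Z(\lambda)\in(0,\infty)$ and $\bm\pi$ is a probability vector with full support. This means $\KL(\bm r\|\bm\pi)$ is finite for every probability vector $\bm r$, so no integrability issues arise.

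Next, expand the relative entropy using the convention $0\log 0=0$:
$$\KL(\bm r\|\bm\pi)=\sum_{i\in I} r_i\log r_i-\sum_{i\in I} r_i\log\pi_i = -H(\bm r)-\lambda\sum_{i\in I}r_if_i+\log Z(\lambda).$$
Here I use $\log \pi_i=\lambda f_i-\log Z(\lambda)$ and $\sum_i r_i=1$. Terms with $r_i=0$ contribute nothing to either sum, so the identity holds for all $\bm r$, including boundary points of the simplex.

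Finally, I will invoke Gibbs' inequality $\KL(\bm r\|\bm\pi)\ge 0$. For a self-contained proof, apply Jensen's inequality to the concave function $\log$ over the support of $\bm r$:
$$-\KL(\bm r\|\bm\pi)=\sum_{r_i>0}r_i\log\frac{\pi_i}{r_i}\le\log\sum_{r_i>0}\pi_i\le\log 1=0.$$
This gives \eqref{eq:gibbs} directly.

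The only step needing care is the equality case. Equality requires two things. First, strict concavity forces $\pi_i/r_i$ to be constant on the support of $\bm r$. Second, equality in the last step forces $\sum_{r_i>0}\pi_i=1$. Because $\pi_i>0$ for every $i$, the second condition means $\bm r$ must have full support. The constant is then $1$, so $\bm r=\bm\pi$. Conversely, substituting $\bm r=\bm\pi$ gives $\KL=0$, so equality holds there. No step here is a genuine obstacle; the statement is a standard fact.
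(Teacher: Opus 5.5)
Your proof is correct and follows the paper's argument: both set $\pi_i=e^{\lambda f_i}/Z(\lambda)$, expand $\KL(\bm r\|\bm\pi)$ as $-H(\bm r)-\lambda\sum_i r_if_i+\log Z(\lambda)$, and conclude from nonnegativity of relative entropy. The paper cites the log-sum inequality where you use Jensen, and your treatment of the equality case, including the observation that $\pi_i>0$ forces $\bm r$ to have full support, is more explicit than the paper's one-line assertion.
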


\begin{proof}
Set $\pi_i=e^{\lambda f_i}/Z(\lambda)$. Nonnegativity of relative entropy, which follows from the log-sum inequality, gives
$$
0\leq\sum_i r_i\log\frac{r_i}{\pi_i}
=-H(\bm r)-\lambda\sum_i r_if_i+\log Z(\lambda).
$$
Equality holds exactly when $\bm r=\bm\pi$.
\end{proof}

We now prepare the cross-ratio comparison. The parameter
$$
\Lambda=\frac{-2\log\rho}{(1+\rho)(1-\rho^{2k-1})}
$$
is chosen so that the desired cross-ratio sign will follow once the balancing parameter is shown to lie below $\Lambda$. The next estimate establishes that comparison.

\begin{lemma}
\label{lem:scalar-cutoff}
Let $k\geq1$ and $0<\rho<1$, and define $\Lambda$ as in \cref{lem:taylor-comparison}. Then
\begin{equation}
\frac{\rho(1-\rho^{2k})}{1+\rho}
+\Lambda\frac{\rho^2(1-\rho^{4k})}{1-\rho^2}
>e^{-\Lambda}.
\label{eq:scalar-cutoff}
\end{equation}
\end{lemma}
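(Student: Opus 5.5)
The plan is to clear the exponential with an elementary rational bound and reduce \eqref{eq:scalar-cutoff} to a polynomial-type inequality in two quantities that are monotone in $k$.

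\textbf{Setup.} Write $s=-\log\rho>0$, so that
$$\Lambda=\frac{2s}{(1+\rho)(1-\rho^{2k-1})}.$$
Denote the two terms on the left of \eqref{eq:scalar-cutoff} by
$$A_k=\frac{\rho(1-\rho^{2k})}{1+\rho},\qquad B_k=\frac{\rho^2(1-\rho^{4k})}{1-\rho^2}=\rho^2\sum_{j=0}^{2k-1}\rho^{2j}.$$

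\textbf{Sanity check at the endpoints.} First I will confirm that the inequality is not tight in either limit, which shows a crude bound can suffice. As $\rho\to1$ we have $\Lambda\to 1/(2k-1)$ and $B_k\to 2k$, so $\Lambda B_k\to 2k/(2k-1)>1>e^{-\Lambda}$. As $\rho\to0$ the left side is $\asymp\rho$, while
$$e^{-\Lambda}=\rho^{2/((1+\rho)(1-\rho^{2k-1}))}\le\rho^{2/(1+\rho)}=o(\rho).$$
So there is slack at both ends. The main risk is an intermediate range of $\rho$ for small $k$; the case $k=1$ is probably the tightest.

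\textbf{Main step.} I will use $e^{-\Lambda}<1/(1+\Lambda)$ and prove the stronger claim $(A_k+\Lambda B_k)(1+\Lambda)\ge1$. Since $\Lambda B_k\ge0$, it suffices to show
$$\Lambda(A_k+B_k)+\Lambda^2B_k\ge 1-A_k .$$
Multiplying through by $(1+\rho)^2(1-\rho^{2k-1})^2$ turns this into an inequality in $s$ and the geometric sums $\sum\rho^j$. I will then bound each piece:
\begin{itemize}
\item use $s\ge 1-\rho$ and $s\ge(1-\rho^2)/(2\rho)$ (tangent-line bounds for $-\log$) to remove $s$;
\item bound $1-\rho^{2k-1}\le(2k-1)(1-\rho)$;
\item use $B_k\ge 2k\rho^{4k}$, since every term of the sum is at least $\rho^{4k}$.
\end{itemize}
After these substitutions both sides become polynomials in $\rho$, and I will compare them coefficientwise, or via the term $\Lambda^2B_k$, which is of order $2k/(2k-1)^2$ near $\rho=1$.

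\textbf{Fallback if $1/(1+\Lambda)$ is too lossy.} If this bound is too weak for small $\rho$ with $k=1$, I will split into cases. For $\rho\le\rho_0$ (say $\rho_0=1/2$), use $e^{-\Lambda}\le\rho^{2/(1+\rho)}$ together with $A_k\ge\rho(1-\rho)$, and check
$$\rho(1-\rho)>\rho^{2/(1+\rho)},$$
which is equivalent to $1-\rho>\rho^{(1-\rho)/(1+\rho)}$; if this fails near $\rho_0$, add the contribution of $\Lambda B_k$. For $\rho\ge\rho_0$, use the rational bound above, which is accurate near $1$. The main obstacle is making these two regimes overlap uniformly in $k$.

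To handle $k$, I expect to show that the difference between the left side and $e^{-\Lambda}$ is minimized at $k=1$ or as $k\to\infty$. To do this, treat $x=\rho^{2k-1}\in(0,\rho]$ as a continuous variable and check the sign of the derivative in $x$. This reduces the problem to the two one-variable inequalities in $\rho$ given by the cases $k=1$ and $k=\infty$.
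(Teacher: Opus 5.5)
Your overall plan matches the paper's: split at $\rho_0=1/2$, use $e^{-\Lambda}<1/(1+\Lambda)$ for large $\rho$ and $e^{-\Lambda}\le\rho^{2/(1+\rho)}$ for small $\rho$, and treat $x=\rho^{2k-1}\in(0,\rho]$ as a free variable. However, the step that is supposed to do the work fails.

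\textbf{The main step.} The bound $s\ge(1-\rho^2)/(2\rho)$ is false. On $(0,1)$ one has $-\log\rho<\tfrac12(\rho^{-1}-\rho)$; for example $\log2\approx0.693<0.75$ at $\rho=1/2$. More seriously, the substitutions $1-\rho^{2k-1}\le(2k-1)(1-\rho)$ and $B_k\ge2k\rho^{4k}$ are not uniform in $k$. At fixed $\rho$ they drive your lower bounds for $\Lambda$ and $\Lambda B_k$ to $0$. The sufficient condition then degenerates to $A_k\ge1$, which is impossible since $A_k<\rho/(1+\rho)<1/2$. Concretely, at $\rho=0.99$, $k=50$ they only give $(A_k+\Lambda B_k)(1+\Lambda)\gtrsim0.46$, although the true value exceeds $1$.

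This matters because your tightness intuition is reversed. Inequality \eqref{eq:scalar-cutoff} has no uniform slack: both sides tend to $1$ in the joint limit $k\to\infty$, $\rho\to1$. Your own fixed-$k$ endpoint slack, $2k/(2k-1)-e^{-1/(2k-1)}\approx2/(2k-1)$, already vanishes as $k\to\infty$, whereas $k=1$ is the easiest case. So any estimate that loses a constant factor in $k$ near $\rho=1$ cannot work. Comparing coefficients of polynomials whose degree grows with $k$ does not address this.

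\textbf{The reduction in $k$.} This is the right idea, but you state it only as an expectation and do not know which endpoint is extremal. It can be settled. Let $D(x)$ be the left side minus $e^{-\Lambda}$, with $\Lambda=2s/((1+\rho)(1-x))$. One has $\frac{d}{dx}\frac{1-\rho^2x^2}{1-x}=\frac{1-2\rho^2x+\rho^2x^2}{(1-x)^2}\ge1$ and $2s>1-\rho^2$. Hence the gain in $\Lambda B$ beats the loss $\rho^2/(1+\rho)$ in $A$, and $e^{-\Lambda}$ decreases, so $D$ is strictly increasing in $x$. Everything then reduces to the $k=\infty$ inequality
\[
\frac{\rho}{1+\rho}+\frac{2\rho^2(-\log\rho)}{(1+\rho)^2(1-\rho)}\ge\rho^{2/(1+\rho)}.
\]
This inequality is tight as $\rho\to1$, and your tools do not cover it on all of $(0,1)$. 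The fallback inequality $1-\rho>\rho^{(1-\rho)/(1+\rho)}$ fails once $\rho$ exceeds about $0.37$. The rational bound with $s\ge1-\rho$ needs $\rho\ge(1+\sqrt{17})/8\approx0.64$, even at $k=\infty$.

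\textbf{What the paper supplies.} For $\rho\le1/2$, Lemma~\ref{lem:helperbound1} gives $\Lambda(1+\rho^2)>1$. The left side is then at least $\rho-\rho^2+\Lambda\rho^2(1+\rho^2)>\rho>e^{-\Lambda}$, so the $\Lambda B_k$ term is used rather than dropped. For $\rho\ge1/2$, the sharper bound $-\log\rho>2(1-\rho)/(1+\rho)$ (Lemma~\ref{lem:helperbound2}) gives $\Lambda>\Lambda_0$. Monotonicity of $\Psi(L)$, together with the sign of $P(u,\rho)$ (Lemma~\ref{lem:dirtypoly}), then proves the claim uniformly in $u=\rho^{2k-1}\in[0,\rho]$. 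If you actually prove your monotonicity in $x$, only the slice $u=0$ of that polynomial argument is needed.
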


\begin{proof}
First assume $0<\rho\leq1/2$. Note that 
$$
\Lambda(1+r^2) = \frac{-2(1+r^2)\log r}{(1+r)(1-r^{2k-1})} \ge \frac{-2(1+r^2)\log r}{(1+r)} \stackrel{Lemma~\ref{lem:helperbound1}}{>} 1.
$$
Using $\rho^{2k}\leq\rho^2$ and $(1-\rho^{4k})/(1-\rho^2)\geq1+\rho^2$, the left side of \eqref{eq:scalar-cutoff} is at least
$$
\rho-\rho^2+\Lambda\rho^2(1+\rho^2)>\rho.
$$
Moreover $(1+\rho)(1-\rho^{2k-1})<2$, and hence
$$
e^{-\Lambda}
=\rho^{\,2/((1+\rho)(1-\rho^{2k-1}))}<\rho.
$$
This proves the claim in the first range.

Now let's move to the range $1/2\leq\rho<1$. Put $u=\rho^{2k-1}\in[0,\rho]$. \Cref{lem:helperbound2} gives $-(1+\rho)\log \rho > 2(1-\rho)$. Consequently, we have 
\begin{equation}\label{eq:lambda-zero}
\Lambda = \frac{-2\log \rho}{(1+\rho)(1-\rho^{2k-1})} > \frac{4(1-\rho)}{(1+\rho)^2(1-u)} =: \Lambda_0.
\end{equation}
Define
$$
\Psi(L)=
\frac{\rho(1-\rho u)}{1+\rho}
+L\frac{\rho^2(1-\rho^2u^2)}{1-\rho^2}
-\frac{1}{1+L}.
$$
Then
$$\Psi'(L) = \frac{r^2(1-r^2u^2)}{1-r^2} + \frac{1}{(1+L)^2}>0.$$ 
Therefore $\Psi$ is monotonically increasing. We now show $\Psi(\Lambda_0)>0$. Directly substituting $\Lambda_0$ into $\Phi$ gives 
$$
\Psi(\Lambda_0) = \frac{\rho(1-\rho u)}{1+\rho} + \frac{\rho^2(1-\rho^2u^2)}{1-\rho^2} \cdot \frac{4(1-\rho)}{(1+\rho)^2(1-u)} - \frac{1}{1+\frac{4(1-\rho)}{(1+\rho)^2(1-u)}}.$$
This simplifies to $$\Phi(\Lambda_0) = \frac{(\rho-1) P(u,\rho)}{(\rho + 1)^{3} (u - 1) (4 \rho + (\rho + 1)^{2} (u - 1) - 4)}$$ where
\begin{align*}
P(u,\rho)={}&
u^3(3\rho^5+7\rho^4+5\rho^3+\rho^2)\\
&+u^2(-2\rho^5+10\rho^4-3\rho^3+5\rho^2+5\rho+1)\\
&+u(-\rho^5-\rho^4-\rho^3-5\rho^2-6\rho-2)\\
&+1+\rho-17\rho^2-\rho^3.
\end{align*}
Since $u=r^{2k-1}\in[0,r]$, \Cref{lem:dirtypoly} says that $P(u,\rho)<0$. This shows that the numerator of $\Psi(\Lambda_0)$ is positive because both terms in numerator are negative. Similarly, the cubic term in the denominator of $\Psi(\Lambda_0)$ is positive and the other two terms are negative. Indeed $u-1<0$ and $4\rho+(\rho+1)^2(u-1)-4 = 4(\rho-1) + (\rho+1)^2(u-1) < 0$. Therefore $\Psi(\Lambda_0) > 0$.
By \eqref{eq:lambda-zero} and monotonicity of $\Psi$, $\Psi(\Lambda) > \Psi(\Lambda_0) >0$ whence
$$
\frac{\rho(1-\rho^{2k})}{1+\rho} +\Lambda\frac{\rho^2(1-\rho^{4k})}{1-\rho^2} >\frac{1}{1+\Lambda}>e^{-\Lambda}.
$$
\end{proof}

\paragraph{From cross-ratios to an alternating sign}

At the minimizing parameter, the total even and odd weights balance. The next lemma shows that this balance occurs before the threshold $\Lambda$ and hence orders every pair of cross-products in the same way. A symmetrization then turns those pairwise comparisons into the weighted alternating inequality we need.

\begin{lemma}[Cross-term sign]
\label{lem:cross-term}
Let $k\geq1$, $0<\rho<1$, and $\lambda>0$. Define
$$
u_j(\lambda)=\rho^{2j}e^{-\lambda\rho^{2j}}
\quad(0\leq j\leq k),
\qquad
v_j(\lambda)=\rho^{2j+1}e^{\lambda\rho^{2j+1}}
\quad(0\leq j\leq k-1).
$$
If $\sum_{j=0}^ku_j(\lambda)=\sum_{j=0}^{k-1}v_j(\lambda)$, then
\begin{equation}
\lambda< \Lambda:=\frac{-2\log\rho}{(1+\rho)(1-\rho^{2k-1})},
\label{eq:lambda-cross}
\end{equation}
and, for $0\leq j<\ell\leq k$,
\begin{equation}
v_j(\lambda)u_\ell(\lambda) -v_{\ell-1}(\lambda)u_j(\lambda)<0.
\label{eq:cross-term}
\end{equation}
\end{lemma}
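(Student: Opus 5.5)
The plan has two parts. First, reduce the cross-term inequality \eqref{eq:cross-term} to the bound \eqref{eq:lambda-cross}. Second, prove \eqref{eq:lambda-cross} by a monotonicity argument in $\lambda$, with \cref{lem:scalar-cutoff} supplying the final numerical comparison.

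\textbf{Step 1: cross terms from $\lambda<\Lambda$.} For $0\le j<\ell\le k$, both products $v_j u_\ell$ and $v_{\ell-1}u_j$ are positive, so I will compare them through their ratio. Collecting powers of $\rho$ and exponents gives
$$
\frac{v_j(\lambda)u_\ell(\lambda)}{v_{\ell-1}(\lambda)u_j(\lambda)}
=\rho^{2}\exp\set{\lambda\left(\rho^{2j+1}-\rho^{2\ell}-\rho^{2\ell-1}+\rho^{2j}\right)}
=\rho^2\exp\set{\lambda(1+\rho)\left(\rho^{2j}-\rho^{2\ell-1}\right)}.
$$
This ratio is less than $1$ exactly when $\lambda(1+\rho)(\rho^{2j}-\rho^{2\ell-1})<-2\log\rho$. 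Since $\lambda>0$, it suffices to treat the largest value of $\rho^{2j}-\rho^{2\ell-1}$ over $0\le j<\ell\le k$. That maximum is $1-\rho^{2k-1}$, attained at $j=0$, $\ell=k$. So \eqref{eq:cross-term} follows at once from $\lambda<\Lambda$.

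\textbf{Step 2: locating the balancing parameter.} Set $F(\lambda)=\sum_{j=0}^k u_j(\lambda)-\sum_{j=0}^{k-1}v_j(\lambda)$.
\begin{itemize}
\item Each $u_j$ is strictly decreasing in $\lambda$ and each $v_j$ is strictly increasing, so $F$ is strictly decreasing.
\item Hence $F$ has at most one zero. The hypothesis says $F(\lambda)=0$, and strict monotonicity then gives $\lambda<\Lambda$ as soon as $F(\Lambda)<0$.
\end{itemize}
To show $F(\Lambda)<0$, I will pair $u_j$ with $v_{j-1}$ for $1\le j\le k$ and isolate $u_0(\Lambda)=e^{-\Lambda}$. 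The inequality $F(\Lambda)<0$ is then equivalent to
$$
e^{-\Lambda}<\sum_{j=1}^{k}\left(v_{j-1}(\Lambda)-u_j(\Lambda)\right).
$$
Each summand is bounded below using $e^{x}>1+x$ for $x>0$ and $e^{-x}<1$ for $x>0$:
$$
v_{j-1}(\Lambda)-u_j(\Lambda)=\rho^{2j-1}e^{\Lambda\rho^{2j-1}}-\rho^{2j}e^{-\Lambda\rho^{2j}}>\rho^{2j-1}(1-\rho)+\Lambda\rho^{4j-2}.
$$
Summing over $1\le j\le k$ gives two geometric series, $\rho(1-\rho^{2k})/(1+\rho)$ and $\Lambda\rho^2(1-\rho^{4k})/(1-\rho^2)$. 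These are exactly the two terms on the left of \eqref{eq:scalar-cutoff}. \Cref{lem:scalar-cutoff} therefore yields $\sum_j(v_{j-1}-u_j)>e^{-\Lambda}$, so $F(\Lambda)<0$ and $\lambda<\Lambda$.

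\textbf{Main obstacle.} The hard step is the scalar inequality \eqref{eq:scalar-cutoff}, and that is already established in \cref{lem:scalar-cutoff}. What remains here is bookkeeping: keep the pairing so that the linearized lower bound reproduces precisely the left side of \eqref{eq:scalar-cutoff}, and check that all inequalities used are strict. They are, because $\Lambda\rho^{j}>0$.
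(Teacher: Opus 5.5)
Your Step 1 is correct and matches the paper's argument. The architecture of Step 2 is also the paper's: strict monotonicity of $F$, isolating $u_0(\Lambda)=e^{-\Lambda}$, pairing $u_j$ with $v_{j-1}$, linearizing, then invoking \cref{lem:scalar-cutoff}.

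The gap is in the linearization. Bounding the two exponentials separately by $e^{x}>1+x$ and $e^{-x}<1$ gives $v_{j-1}(\Lambda)-u_j(\Lambda)>\rho^{2j-1}(1-\rho)+\Lambda\rho^{4j-2}$. Summing, $\sum_{j=1}^{k}\Lambda\rho^{4j-2}=\Lambda\rho^{2}(1-\rho^{4k})/(1-\rho^{4})$, not $\Lambda\rho^{2}(1-\rho^{4k})/(1-\rho^{2})$ as you assert. Your second term is therefore smaller than the one in \eqref{eq:scalar-cutoff} by a factor $1+\rho^{2}$, so \cref{lem:scalar-cutoff} does not apply.

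This cannot be repaired by a sharper scalar estimate, because the weakened inequality is false. For $\rho=0.9$ and $k=3$ one has $\Lambda\approx0.271$. Your lower bound is then $\approx0.222+0.458=0.680$, while $e^{-\Lambda}\approx0.763$. More generally, as $\rho\to1$ with $\rho^{2k}\to0$, your bound tends to $3/4$ while $e^{-\Lambda}\to1$. So your chain does not establish $F(\Lambda)<0$.

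The missing idea is to keep the first-order contribution of $u_j$, which $e^{-x}<1$ throws away. A separate bound on $e^{-x}$ cannot recover it, since $e^{-x}\ge1-x$ points the wrong way. The two exponentials must be expanded together, which is exactly what \cref{lem:taylor-comparison} does. With $a=\Lambda\rho^{2j-1}$ you have $v_{j-1}(\Lambda)-u_j(\Lambda)=\rho^{2j-1}\left(e^{a}-\rho e^{-\rho a}\right)$ and
\[
e^{a}-\rho e^{-\rho a}=\sum_{i\ge0}\frac{a^{i}}{i!}\left(1-(-1)^{i}\rho^{i+1}\right)>1-\rho+a(1+\rho^{2}).
\]
The inequality holds because every term with $i\ge2$ is positive, since $\rho^{i+1}<1$: the higher-order terms of $e^{a}$ dominate those of $\rho e^{-\rho a}$.

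Summing $\rho^{2j-1}(1-\rho)+\Lambda(1+\rho^{2})\rho^{4j-2}$ over $1\le j\le k$ gives $\rho(1-\rho^{2k})/(1+\rho)+\Lambda\rho^{2}(1+\rho^{2})(1-\rho^{4k})/(1-\rho^{4})$. Since $(1+\rho^{2})/(1-\rho^{4})=1/(1-\rho^{2})$, this is exactly the left side of \eqref{eq:scalar-cutoff}. With this replacement your proof becomes the paper's.
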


\begin{proof}
Let
$$
F(s)=\sum_{j=0}^ku_j(s)-\sum_{j=0}^{k-1}v_j(s).
$$
Every term of $F'(s)$ is negative, so $F$ is strictly decreasing and the hypothesis is $F(\lambda)=0$. Since $u_0(\Lambda)=e^{-\Lambda}$,
\begin{align*}
e^{-\Lambda} - F(\Lambda) &= \sum_{j=0}^{k-1}(v_j(\Lambda) - u_{j+1}(\Lambda))\\
&= \sum_{j=0}^{k-1}\rho^{2j+1}\left[ e^{\Lambda \rho^{2j+1}} - \rho e^{-\Lambda \rho^{2j+2}}\right]\\
&\stackrel{Lemma~\ref{lem:taylor-comparison}}{>} \sum_{j=0}^{k-1} \rho^{2j+1} \left[1-\rho + \Lambda\rho^{2j+1} \left(1+\rho^{2}\right)\right]\\
&= \sum_{j=0}^{k-1} \left( \rho^{2j} (1-\rho)\rho + \Lambda \rho^2(1+\rho^2) \rho^{4j}\right)\\
&= (1-\rho)\rho \cdot\frac{1-\rho^{2k}}{1-\rho^2} + \Lambda \rho^2(1+\rho^2) \cdot \frac{1-\rho^{4k}}{1-\rho^4}\\
&= \frac{\rho(1-\rho^{2k})}{1+\rho} + \Lambda \rho^2 \cdot \frac{1-\rho^{4k}}{1-\rho^2} \stackrel{Lemma~\ref{lem:scalar-cutoff}}{\ge } e^{-\Lambda}
\end{align*}
Thus $F(\Lambda)<0$, and strict monotonicity of $F$ gives \eqref{eq:lambda-cross}.

For $j<\ell$, direct substitution gives
$$
\log\frac{v_{\ell-1}u_j}{v_ju_\ell} =-2\log\rho +\lambda(1+\rho)(\rho^{2\ell-1}-\rho^{2j}).
$$
The difference in the final parentheses is negative, so \eqref{eq:lambda-cross} yields
\begin{align*}
\log\frac{v_{\ell-1}u_j}{v_ju_\ell} &> -2\log\rho\left[ 1-\frac{\rho^{2j}(1-\rho^{2\ell-2j-1})} {1-\rho^{2k-1}}\right]\geq0.
\end{align*}
The final inequality uses $2\ell-2j-1\leq2k-1$ and $\rho^{2j}\leq1$. Exponentiation proves \eqref{eq:cross-term}.
\end{proof}

\begin{theorem}[Weighted alternating sign]
\label{thm:weighted-alternating}
Let $k\geq1$, $0<\rho\leq1$, and $t<0$. If
\begin{equation}
\sum_{i=0}^{2k}(-\rho)^i e^{t(-\rho)^i}=0,
\label{eq:alternating-balance}
\end{equation}
then
\begin{equation}
\sum_{i=0}^{2k}i(-\rho)^i e^{t(-\rho)^i}
\begin{cases}
<0,&0<\rho<1,\\
=0,&\rho=1.
\end{cases}
\label{eq:alternating-weighted}
\end{equation}
\end{theorem}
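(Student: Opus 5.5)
Set $\lambda=-t>0$ and translate \eqref{eq:alternating-balance} into the notation of \cref{lem:cross-term}. For an even index $i=2j$ with $0\le j\le k$, the term $(-\rho)^{i}e^{t(-\rho)^{i}}$ equals $\rho^{2j}e^{-\lambda\rho^{2j}}=u_j(\lambda)$. For an odd index $i=2j+1$ with $0\le j\le k-1$, it equals $-\rho^{2j+1}e^{\lambda\rho^{2j+1}}=-v_j(\lambda)$. Write $U=\sum_{j=0}^k u_j$ and $V=\sum_{j=0}^{k-1}v_j$. The hypothesis then says exactly $U=V$, and the weighted sum in \eqref{eq:alternating-weighted} becomes
$$
W\sett\sum_{\ell=0}^{k}2\ell\,u_\ell-\sum_{j=0}^{k-1}(2j+1)\,v_j .
$$
Since $U>0$, it suffices to determine the sign of $WU$.

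The key step is a symmetrization that uses the balance $U=V$ to replace one factor. We write
$$
WU=\Big(\sum_{\ell}2\ell\,u_\ell\Big)V-\Big(\sum_j(2j+1)v_j\Big)U=\sum_{\ell=0}^{k}\sum_{j=0}^{k-1}(2\ell-2j-1)\,u_\ell v_j .
$$
Next we pair the index $(j,\ell)$ with $(j',\ell')=(\ell-1,j)$. This map is an involution that interchanges the region $\{\ell>j\}$ (where the coefficient $2(\ell-j)-1$ is positive) with the region $\{\ell'\le j'\}$. Both index ranges are respected: if $j<\ell\le k$, then $j'=\ell-1\le k-1$ and $\ell'=j\le k$. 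The coefficient of the paired term is $2j-2(\ell-1)-1=-(2(\ell-j)-1)$, which exactly negates the original coefficient. Hence
$$
WU=\sum_{0\le j<\ell\le k}\big(2(\ell-j)-1\big)\big(v_ju_\ell-v_{\ell-1}u_j\big).
$$

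For $0<\rho<1$, every bracket is strictly negative by \eqref{eq:cross-term} of \cref{lem:cross-term}, which applies because $\lambda>0$ and the balance holds. Every weight $2(\ell-j)-1$ is positive, and the sum is nonempty since $k\ge1$. Therefore $WU<0$, so $W<0$.

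For $\rho=1$, we compute directly. Here $u_j=e^{-\lambda}$ and $v_j=e^{\lambda}$ for all $j$, so the balance reads $(k+1)e^{-\lambda}=ke^{\lambda}$. Then
$$
W=k(k+1)e^{-\lambda}-k^2e^{\lambda}=k\big((k+1)e^{-\lambda}-ke^{\lambda}\big)=0 .
$$
Equivalently, every cross bracket in the identity for $WU$ vanishes when $\rho=1$.

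The main difficulty is the cross-term inequality \eqref{eq:cross-term} itself, and \cref{lem:cross-term} already supplies it (via \cref{lem:scalar-cutoff}). What remains here is the bookkeeping of the pairing involution, and in particular checking that it is a bijection between the index regions above.
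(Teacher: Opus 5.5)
Your proof is correct and follows the paper's argument essentially verbatim. The balance $U=V$ turns the weighted sum into $\sum(2\ell-2j-1)u_\ell v_j$, the reindexing $(j,\ell)\mapsto(\ell-1,j)$ pairs $\{j<\ell\}$ with $\{\ell\le j\}$, \cref{lem:cross-term} signs each bracket, and $\rho=1$ is checked directly. One cosmetic point: that map is not an involution (applying it twice gives $(j-1,\ell-1)$); it is a bijection from $\{j<\ell\}$ onto $\{\ell\le j\}$ with inverse $(j',\ell')\mapsto(\ell',j'+1)$, which is all you use.
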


\begin{proof}
For $\rho=1$, it is straightforward to check that the claimed quantity is zero. So we focus on $\rho\in(0,1)$. We are given $$\sum_{j=0}^k u_j(\lambda) = \sum_{j=0}^{k-1} v_j(\lambda) =:M$$ and want to show that $$\sum_{j=0}^k ju_j(\lambda) < \sum_{j=0}^{k-1} \left(j+\frac12\right) v_j(\lambda).$$
Note that \begin{equation}\label{eq:main}
\sum_{j=0}^{k-1}\left( j+\frac12\right) v_j - \sum_{\ell=0}^{k}\ell u_\ell = \frac1M\sum_{j=0}^{k-1}\sum_{\ell=0}^k \left( j+\frac12-\ell\right) u_\ell v_j.    
\end{equation}
We split the above double sum on the RHS into two parts, one where $j\ge \ell$ and one where $j<\ell$, as follows.
$$\sum_{j=0}^{k-1}\sum_{\ell=0}^k \left( j+\frac12-\ell\right) u_\ell v_j = \sum_{\ell=1}^k \sum_{j=0}^{\ell-1}\left( j+\frac12-\ell\right) u_\ell v_j  + \sum_{\ell=0}^k \sum_{j=\ell}^{k-1} \left( j+\frac12-\ell\right) u_\ell v_j.$$
Notice the second term can be re-wrtten to symmetrize the above sum:
\begin{align*}
\sum_{\ell=0}^k \sum_{j=\ell}^{k-1} \left( j+\frac12-\ell\right) u_\ell v_j &= \sum_{\ell=0}^k \sum_{j=0}^{k-1}  \bm1_{\ell\le j}\left( j+\frac12-\ell\right) u_\ell v_j\\
&= \sum_{j=0}^{k} \sum_{\ell=0}^{k-1}  \bm1_{j\le \ell}\left( \ell+\frac12-j\right) u_j v_\ell\\
&= \sum_{j=0}^{k} \sum_{\ell=1}^{k}  \bm1_{j\le \ell-1}\left( \ell-\frac12-j\right) u_j v_{\ell-1}\\
&= \sum_{\ell=1}^{k}\sum_{j=0}^{\ell-1}  \left( \ell-\frac12-j\right) u_j v_{\ell-1}.
\end{align*}
This gives \begin{equation}\label{eq:paired}\sum_{j=0}^{k-1}\sum_{\ell=0}^k \left( j+\frac12-\ell\right) u_\ell v_j = \sum_{0\le j < \ell \le k} \left( j+\frac12-\ell\right) (u_\ell v_j-u_{j}v_{\ell-1}).\end{equation}
If $j<\ell$, then each term $j+\frac12-\ell < 0$ and $u_\ell v_j- u_jv_{\ell-1} < 0$ by Lemma~\ref{lem:cross-term} proving that the above quantity in \eqref{eq:paired} is positive. But this quantity is a positive scaling (since $M>0$) of $$\sum_{j=0}^{k-1}\left(2j+1\right) v_j - \sum_{j=0}^{k}2ju_j$$ by \eqref{eq:main} and therefore the conclusion of the result follows.
\end{proof}

\paragraph{The optimized partition function}

The alternating sign is exactly the derivative information required for the partition minimum. Strict convexity supplies a unique minimizing parameter, and the envelope identity, or Danskin's theorem, depending on the preference of the reader, differentiates the optimized value without requiring an explicit formula for that parameter.

\begin{lemma}
\label{lem:partition-monotonicity}
For $k\geq1$ and $0<\rho\leq1$, define
$$
Z_\rho(\lambda)=\sum_{i=0}^{2k}e^{\lambda(-\rho)^i},
\qquad
\mathcal Z(\rho)=\min_{\lambda\in\R}Z_\rho(\lambda).
$$
The minimizer $\lambda_\rho$ of $Z_\rho$ is unique and negative. $\mathcal Z$ is strictly increasing on $(0,1]$ with
\begin{equation}
\mathcal Z(1)=2\sqrt{k(k+1)}.
\label{eq:partition-at-one}
\end{equation}
\end{lemma}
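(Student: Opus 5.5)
We need to prove the three assertions of Lemma~\ref{lem:partition-monotonicity}: the minimizer is unique and negative, the value at $\rho=1$, and strict monotonicity. Throughout we write $f_i=(-\rho)^i$ for $0\le i\le 2k$.

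\textbf{Step 1: existence, uniqueness and sign of the minimizer.} We have $Z_\rho''(\lambda)=\sum_i f_i^2e^{\lambda f_i}>0$, so $Z_\rho$ is strictly convex. It is also coercive. As $\lambda\to+\infty$ the term $e^{\lambda f_0}=e^\lambda$ blows up. As $\lambda\to-\infty$ the term $e^{\lambda f_1}=e^{-\lambda\rho}$ blows up. Hence a unique minimizer $\lambda_\rho$ exists, characterized by
\[
Z_\rho'(\lambda_\rho)=\sum_{i=0}^{2k}(-\rho)^ie^{\lambda_\rho(-\rho)^i}=0 .
\]
This is exactly the balance hypothesis \eqref{eq:alternating-balance} with $t=\lambda_\rho$.

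For the sign, evaluate the derivative at zero: $Z_\rho'(0)=\sum_{i=0}^{2k}(-\rho)^i=(1+\rho^{2k+1})/(1+\rho)>0$. Since $Z_\rho'$ is strictly increasing, this forces $\lambda_\rho<0$.

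\textbf{Step 2: the value at $\rho=1$.} There are $k+1$ even indices, where $f_i=1$, and $k$ odd indices, where $f_i=-1$. So
\[
Z_1(\lambda)=(k+1)e^\lambda+ke^{-\lambda}.
\]
By AM--GM, or by setting the derivative to zero, the minimum is $2\sqrt{k(k+1)}$. It is attained at $e^{2\lambda}=k/(k+1)$, which indeed gives a negative $\lambda$.

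\textbf{Step 3: monotonicity via the envelope theorem.} The map $(\rho,\lambda)\mapsto Z_\rho(\lambda)$ is smooth, and $Z_\rho''(\lambda_\rho)>0$. The implicit function theorem therefore makes $\rho\mapsto\lambda_\rho$ smooth on $(0,1]$, with a one-sided derivative at $\rho=1$. The envelope identity (Danskin) then gives
\[
\mathcal Z'(\rho)=\partial_\rho Z_\rho(\lambda)\big|_{\lambda=\lambda_\rho}=\sum_{i=0}^{2k}\lambda_\rho\, i(-1)^i\rho^{i-1}e^{\lambda_\rho(-\rho)^i}=\frac{\lambda_\rho}{\rho}\sum_{i=0}^{2k}i(-\rho)^ie^{\lambda_\rho(-\rho)^i}.
\]
Now apply Theorem~\ref{thm:weighted-alternating} with $t=\lambda_\rho<0$. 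The balance hypothesis holds by Step 1. For $0<\rho<1$ the theorem says the weighted sum is strictly negative. Multiplying by $\lambda_\rho/\rho<0$ gives $\mathcal Z'(\rho)>0$ on $(0,1)$.

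Since $\mathcal Z$ is continuous on $(0,1]$ (as a minimum of a jointly continuous family, or simply by the smoothness above), strict positivity of the derivative on the open interval yields strict monotonicity on all of $(0,1]$, including the endpoint. At $\rho=1$ the derivative vanishes, which is consistent with Theorem~\ref{thm:weighted-alternating} but is not needed.

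\textbf{Where the difficulty sits.} All the substantive work is contained in Theorem~\ref{thm:weighted-alternating}, which we take as given. The only delicate points left here are two. First, justifying differentiation of the minimum value, which follows from strict convexity together with the implicit function theorem. Second, checking that the sign conventions line up: the envelope formula carries the factor $\lambda_\rho$, so we need $t<0$ as the theorem requires, and this is exactly what Step 1 establishes.
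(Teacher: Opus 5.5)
Your proof is correct and follows the paper's argument essentially step for step. It uses strict convexity together with $Z_\rho'(0)=(1+\rho^{2k+1})/(1+\rho)>0$ to get a unique negative minimizer, the AM--GM evaluation of $Z_1$, and the envelope identity $\rho\,\mathcal Z'(\rho)/\lambda_\rho=\sum_i i(-\rho)^i e^{\lambda_\rho(-\rho)^i}$ combined with Theorem~\ref{thm:weighted-alternating} and continuity at $\rho=1$. The only cosmetic difference is that you get existence of the minimizer from coercivity of $Z_\rho$, whereas the paper uses $Z_\rho'\to-\infty$ as $\lambda\to-\infty$.
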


\begin{proof}
We have
$$
\partial_\lambda Z_\rho
=\sum_{i=0}^{2k}(-\rho)^ie^{\lambda(-\rho)^i},
\qquad
\partial_{\lambda\lambda}Z_\rho
=\sum_{i=0}^{2k}\rho^{2i}e^{\lambda(-\rho)^i}>0.
$$
Also $\partial_\lambda Z_\rho(0)=(1+\rho^{2k+1})/(1+\rho)>0$, whereas $\partial_\lambda Z_\rho(\lambda)\to-\infty$ as $\lambda\to-\infty$. Thus there is a unique minimizer $\lambda_\rho<0$. The implicit function theorem makes it differentiable in $\rho$. At the minimum, the envelope identity gives
$$
\mathcal Z'(\rho)
=-\lambda_\rho
\sum_{i=0}^{2k}i(-\rho)^{i-1}
e^{\lambda_\rho(-\rho)^i},
$$
and therefore
\begin{equation}
\frac{\rho\mathcal Z'(\rho)}{\lambda_\rho}
=\sum_{i=0}^{2k}i(-\rho)^i
e^{\lambda_\rho(-\rho)^i}.
\label{eq:partition-envelope}
\end{equation}
By \cref{thm:weighted-alternating}, the right side is negative for $\rho<1$ and zero at $\rho=1$. Since $\lambda_\rho<0$, $\mathcal Z'(\rho)>0$ on $(0,1)$. Continuity gives strict increase on $(0,1]$. Finally,
$$
Z_1(\lambda)=(k+1)e^\lambda+ke^{-\lambda}
\geq2\sqrt{k(k+1)},
$$
with equality at $\lambda=\frac12\log(k/(k+1))$, proving \eqref{eq:partition-at-one}.
\end{proof}

\subsection{Completion of the entropy argument} \label{sec:completion-entropy}
Now we finally use the above ingredients to prove the heart of the paper, namely, Lemma~\ref{lem:negative-root-entropy}.

\begin{proof}[Proof of Lemma~\ref{lem:negative-root-entropy}]
Suppose first that $0<\rho\leq1$ and set $f_i=(-\rho)^i$. The root constraint is
$$
\sum_{i=0}^{2k}r_if_i=0.
$$
Applying Proposition~\ref{prop:gibbs} and then minimizing over $\lambda$ gives
$$
H(\bm r)\leq\log\mathcal Z(\rho)
\leq\log\mathcal Z(1)
=\log\bigl(2\sqrt{k(k+1)}\bigr)
=\log\sqrt{n(n+2)}.
$$
If $\rho>1$, reverse the distribution: $\widetilde r_i=r_{2k-i}$. Its generating polynomial is
$$
G_{\widetilde{\bm r}}(x)=x^{2k}G_{\bm r}(1/x),
$$
so it has the root $-1/\rho\in(-1,0)$, while $H(\widetilde{\bm r})=H(\bm r)$. The preceding argument applies.

If the entropy bound is an equality and $\rho\leq1$, strict monotonicity in \cref{lem:partition-monotonicity} forces $\rho=1$. If $\rho>1$, the reversed argument would force $1/\rho=1$, which is impossible. Hence equality requires $\rho=1$. Equality in Gibbs' inequality then forces
$$
r_i=\frac{\exp(\lambda_1(-1)^i)}
{(k+1)e^{\lambda_1}+ke^{-\lambda_1}},
\qquad
\lambda_1=\frac12\log\frac{k}{k+1}.
$$
The denominator is $2\sqrt{k(k+1)}$, and simplification gives $r_i=1/(2k+2)=1/(n+2)$ for even $i$ and $r_i=1/(2k)=1/n$ for odd $i$. Conversely, this vector's generating polynomial has root $-1$, is exactly the displayed Gibbs law, and is tight at every inequality above.
\end{proof}

\section{Supplementary lemmata}

\begin{lemma}\label{lem:quad-reduction}
For every $\bm p\in \Delta_1,\bm q\in \Delta_{n-1}$, $\norm{\bm p*\bm q}{2}^2 \ge \norm{(1/2,1/2)*\bm q}{2}^2$.
\end{lemma}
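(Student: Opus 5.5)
Write $\bm r=\bm p*\bm q$ with $\bm p=(a,1-a)$, so that $r_k=aq_k+(1-a)q_{k-1}$ with the conventions $q_{-1}=q_n=0$. The plan is to expand $\norm{\bm r}{2}^2$ as a quadratic polynomial in $a$ and compare it with its value at $a=1/2$.

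Expanding coordinatewise gives
$$
\norm{\bm p*\bm q}{2}^2=\sum_{k=0}^{n}\bigl(aq_k+(1-a)q_{k-1}\bigr)^2=\bigl(a^2+(1-a)^2\bigr)\norm{\bm q}{2}^2+2a(1-a)\sum_{k=1}^{n-1}q_kq_{k-1}.
$$
This uses the fact that $\sum_k q_k^2$ and $\sum_k q_{k-1}^2$ both equal $\norm{\bm q}{2}^2$.

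Write $S=\norm{\bm q}{2}^2$ and $C=\sum_{k}q_kq_{k-1}$. Since $a^2+(1-a)^2=1-2a(1-a)$, the expression becomes
$$
S-2a(1-a)(S-C).
$$
By Cauchy--Schwarz, $C\le \sqrt{\sum q_k^2}\sqrt{\sum q_{k-1}^2}\le S$. Alternatively, one can use $2q_kq_{k-1}\le q_k^2+q_{k-1}^2$ together with the boundary terms. Either way $S-C\ge 0$.

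The expression is therefore minimized when $a(1-a)$ is maximized. On $[0,1]$ this happens at $a=1/2$, where $a(1-a)=1/4$. Hence $\norm{\bm p*\bm q}{2}^2\ge S-\tfrac12(S-C)=\norm{(1/2,1/2)*\bm q}{2}^2$, which is the claimed inequality.

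For the equality case, note that $S-C=\tfrac12\bigl(q_0^2+q_{n-1}^2+\sum_k(q_k-q_{k-1})^2\bigr)$. This is strictly positive for any probability vector $\bm q$, since if $q_0=q_{n-1}=0$ and all consecutive differences vanish then $\bm q=0$. Consequently equality holds only at $a=1/2$, which is the uniqueness used in the $\ell_2$ theorem. There is no real obstacle here; the only care needed is in handling the boundary terms correctly in the expansion.
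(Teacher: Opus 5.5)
Your proof is correct and is essentially the paper's argument: you use the same expansion $\norm{\bm p*\bm q}{2}^2=S-2a(1-a)(S-C)$, the same sum-of-squares identity showing $S-C\ge 0$, and the same observation that $a(1-a)$ is maximized at $a=1/2$. Your strictness remark is a small addition the paper does not spell out but its $\ell_2$ uniqueness claim relies on: $S-C>0$ for every probability vector $\bm q$, so $a=1/2$ is the unique minimizer.
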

\begin{proof}
Expansion gives
$$\norm{(a,1-a)*\bm q}{2}^2 = \sum_{i=0}^{n-1} q_i^2 - 2a(1-a)\left(\sum_{i=0}^{n-1}q_i^2 - \sum_{i=1}^{n-1}q_iq_{i-1}\right).$$
The coefficient in the large parenthesis is
$$\sum_{i=0}^{n-1}q_i^2 - \sum_{i=1}^{n-1}q_iq_{i-1} = \frac12\left(q_0^2+q_{n-1}^2 + \sum_{i=1}^{n-1}(q_i-q_{i-1})^2\right) \ge 0.$$
Therefore for any $\bm q\in \Delta_{n-1}$, $\norm{\bm p*\bm q}{2}^2$ decreases as $a(1-a)$ increases. $a(1-a)$ is maximum at $a=1/2$.
\end{proof}

\begin{lemma}\label{lem:fixed-t-l1}
Let $n$ be even and fix $t\in[0,1]$ and let $a=1/(1+t)$. The problem $$\min_{\bm q\in \Delta_{n-1}}\norm{(a,1-a)*\bm q-\bm u}1$$ has the optimal value $2\phi(t)$ in \eqref{eq:fixed-t-l1-distance}, and if $0<t<1$, the unique solution $\bm q(t)$ in \eqref{eq:fixed-t-minimizer-l1}.
\end{lemma}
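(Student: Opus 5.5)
The plan is to prove matching upper and lower bounds: exhibit the candidate $\bm q(t)$ and compute its distance, then certify optimality with a single dual vector built from the root of $G_{\bm r}$. This mirrors the proof of Theorem~\ref{thm:w1-minimizer}, with the Lipschitz polytope replaced by the $\ell_\infty$ unit ball.

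\textbf{Lower bound.} Fix $t\in[0,1]$ and $a=1/(1+t)$, and let $\bm r=(a,1-a)*\bm q$ for an arbitrary $\bm q\in\Delta_{n-1}$.
\begin{enumerate}
\item \emph{Moment identity.} From $G_{\bm r}(x)=(a+(1-a)x)G_{\bm q}(x)$, the factor vanishes at $x=-1/t$. Multiplying by $(-t)^n$, as in \eqref{eq:gen-r}, gives
$$\sum_{i=0}^n r_i y_i=0,\qquad y_i\sett(-t)^{n-i}.$$
Directly, $\sum_i r_i y_i$ equals $\sum_j q_j(-t)^{n-j-1}\bigl(-ta+(1-a)\bigr)=0$, so the case $t=0$ is covered too.
\item \emph{Range of $y$.} Since $n$ is even, $y_n=1$, $y_{n-1}=-t$, and every other $y_i$ lies strictly inside $(-t,1)$ when $0<t<1$.
\item \emph{Dual vector.} Map $[-t,1]$ affinely onto $[-1,1]$ decreasingly:
$$x_i\sett \frac{1-t}{1+t}-\frac{2}{1+t}\,y_i .$$
Then $|x_i|\le 1$, and $\sum_i(r_i-u_i)=0$ kills the constant term. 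Hence
$$\norm{\bm r-\bm u}{1}\ \ge\ \sum_i (r_i-u_i)x_i=\frac{2}{1+t}\sum_i u_i y_i=\frac{2}{(n+1)(1+t)}\sum_{m=0}^n(-t)^m .$$
\item \emph{Evaluation.} Because $n$ is even, $\sum_{m=0}^n(-t)^m=(1+t^{n+1})/(1+t)$, so the bound equals $2\phi(t)$.
\end{enumerate}

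\textbf{Upper bound and uniqueness.} For $0<t<1$, equality in $\norm{\bm r-\bm u}{1}\ge\sum_i(r_i-u_i)x_i$ forces complementary slackness:
\begin{itemize}
\item $r_i=u_i$ for all $i\notin\{n-1,n\}$, since $|x_i|<1$ there;
\item $r_{n-1}\ge u_{n-1}$, since $x_{n-1}=1$;
\item $r_n\le u_n$, since $x_n=-1$.
\end{itemize}
Together with the moment identity $\sum_i r_i y_i=0$ and $\sum_i r_i=1$, this pins down $r_{n-1}$ and $r_n$ uniquely. By Theorem~\ref{thm:M2n-membership}, the recursion \eqref{eq:recover-q} then determines $\bm q$ uniquely. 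Evaluating \eqref{eq:Q-closed} should reproduce \eqref{eq:fixed-t-minimizer-l1}: the $\bm u$-part telescopes to entries $(1+(-1)^k t^{k+1})/(n+1)$, and the last entry is fixed by the $r_n$ equation.

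It remains to check that this $\bm q(t)$ is admissible, i.e.\ lies in $\Delta_{n-1}$.
\begin{itemize}
\item Each entry $1\pm t^{j}$ is nonnegative because $t\le1$.
\item The last entry $(2+t-t^n)/(1+t)$ is positive.
\item The sign conditions $r_{n-1}\ge u_{n-1}$ and $r_n\le u_n$ hold; concretely $r_n=(1-a)q_{n-1}$ should come out to $u_n-\phi(t)$ and $r_{n-1}$ to $u_{n-1}+\phi(t)$.
\end{itemize}
Given these, the distance is exactly $2\phi(t)$, which proves both the optimal value and the uniqueness of the solution. The endpoints are treated separately: at $t=1$ the bound is attained by the \eqref{eq:fixed-t-minimizer-l1} formula (or by continuity), and at $t=0$ it is attained by $\bm q$ with $r=\bm q$ placing deficit only at index $n$, giving $2/(n+1)=2\phi(0)$.

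\textbf{Main obstacle.} I expect the hardest step to be choosing the right dual vector. The naive choice $x_i=-(-t)^{n-i}$ only yields the weaker bound $(1+t)\phi(t)$. The constant shift that rescales the range $[-t,1]$ to $[-1,1]$, which is legitimate because $\bm r$ and $\bm u$ have equal mass, is what produces exactly $2\phi(t)$. After that, the remaining work is the bookkeeping of the closed form of $\bm q(t)$ and its sign checks.
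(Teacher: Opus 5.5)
Your proof is correct and is essentially the paper's argument: your vector $x_i=\frac{1-t}{1+t}-\frac{2}{1+t}(-t)^{n-i}$ is exactly the paper's dual certificate $\bm y^*$ with $\lambda^*=2a-1$. Your moment identity $\sum_i r_i(-t)^{n-i}=0$ is the paper's check that $A_t^\top\bm y^*=\lambda^*\bm 1$, written as direct weak duality rather than through the explicit LP dual, and uniqueness proceeds the same way, with $|x_i|<1$ forcing $r_i=u_i$ for $i\le n-2$ and the convolution recursion then determining $\bm q$.
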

\begin{proof}
Verifying that $\bm q(t)\in \Delta_{n-1}$ and showing that the $\ell_1$ distance between $\bm u$ and $\bm r(t) = (a,1-a)*\bm q(t)$ is $2\phi(t)$ are straightforward. It is now enough to show that this distance is a lower bound on the objective.

For the lower bound, we frame its LP dual, which is a maximization (of lower bounds) problem and produce a certificate in the dual feasible space which matches the aforementioned objective value. The objective is $\norm{A_t\bm q-\bm u}{1}$ where $A=A_t \in \R^{(n+1)\times n}$ is the matrix corresponding to convolution by $(a,1-a)$. More precisely, $A_{ii}$ are all $a$, $A_{i+1,i}$ are all $1-a$, rest all zero, We leave it to the reader to verify that the following is the LP dual of the said optimization problem over $\bm q\in \Delta_{n-1}$.
\begin{align*}
\max_{\bm y\in \R^{n+1}, \lambda \in \R}\qquad &\lambda-\bm u^\top \bm y\\
\text{s.t. } \qquad & A^\top \bm y\ge \lambda \bm 1\\
& -\bm 1\le \bm y\le \bm 1.
\end{align*}
In general, this is a lower bound on the primal problem. Consider the candidate \begin{align*}\lambda^*&=2a-1, \\
y_n^* = -1, &\qquad y_j^* = \lambda + 2(1-a)\left(\frac{a-1}{a}\right)^{n-1-j} = \lambda + (1-\lambda)(-t)^{n-1-j}~\forall~0\le j<n.\end{align*}
Let's first verify dual feasibility of this candidate. Clearly $\lambda\in [0,1]$ because $1/2\le a\le 1$. Therefore $y_j^*$, for $0\le j\le n-1$. lies between $\lambda^* + (1-\lambda^*) (\pm t^{n-j-1})$. Both these endpoints are a linear combination of $1,\pm t^{n-1-j}$ and $t\le1$. Therefore $\norm{\bm y^*}\infty\le 1$. Note that of $A^\top \bm y^* = \left(ay_0^*+(1-a)y_1^*, \cdots, ay_{n-1}^*+(1-a)y_n^*\right)$. But for $0\le j\le n-2$ we have $$
ay_j^*+(1-a)y_{j+1}^* = \lambda + 2(1-a)\left(\frac{a-1}{a}\right)^{n-j-2}\left[a\cdot\frac{a-1}{a} + 1-a\right] = \lambda^*$$ and $ay_{n-1}^*+(1-a)y_{n}^* = a\lambda^* +2a(1-a) - (1-a) = \lambda^*$. This proves feasibility of the candidate $(\bm y^*,\lambda^*)$. Its objective value is calculated as follows. First $\sum y_i^* = (1+t)^{-2}\left[n(1-t^2) -2t^{n+1} - t^2-1\right]$. So $$\lambda^*-\bm u^\top \bm y^* = \frac{1-t^2}{(1+t)^2} - \frac{n(1-t^2) - 2t^{n+1} - t^2-1}{(n+1)(1+t)^2} = 2\phi(t).$$
This proves the lower bound of the optimization problem.

Now let's move onto uniqueness. $0<t<1 \implies |y_j^*|<1 \quad(0\le j\le n-2)$. Equality in
$$\sum_j|r_j-u_j|\ge\sum_j y_j^*(r_j-u_j)$$
therefore forces $r_j=u_j$ for $0\le j\le n-2$. The convolution recursion determines $q_0,\cdots,q_{n-2}$, and normalization determines $q_{n-1}$. This proves uniqueness.
\end{proof}

\begin{lemma}[A centered absolute-deviation inequality]
\label{lem:independent-absolute-deviation}
Let $m>0$, and let $X,Y$ be independent random variables taking values in $[0,m]$. Then
\begin{equation}
\E{}\abs{X+Y-m}
\leq\frac m2+
\frac{(\E{} X+\E{} Y-m)^2}{2m}.
\label{eq:independent-absolute-deviation}
\end{equation}
Equality holds if and only if $\E{} X=\E{} Y$ and at least one of $X,Y$ is supported on $\set{0,m}$.
\end{lemma}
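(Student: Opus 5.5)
We prove the bound pointwise in $X$ by reducing to a scalar inequality for each fixed value $X=x$, and then average using independence. Write $\mu=\E{}X$ and $\nu=\E{}Y$.

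For $s\in[0,m]$ the map $y\mapsto\abs{s+y-m}$ is convex on $[0,m]$. It therefore lies below its chord through $(0,m-s)$ and $(m,s)$:
$$\abs{s+y-m}\le (m-s)+\frac{y}{m}\bigl(s-(m-s)\bigr)=(m-s)+\frac{y(2s-m)}{m}.$$
Conditioning on $X=s$ and integrating over $Y$, which is independent of $X$, gives
$$\E{}\bigl[\abs{X+Y-m}\mid X=s\bigr]\le m-s+\frac{\nu(2s-m)}{m}.$$
This bound is linear in $s$, so averaging over $X$ yields
$$\E{}\abs{X+Y-m}\le m-\mu-\nu+\frac{2\mu\nu}{m}.$$

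It remains to compare this with the target. Multiplying the claimed right side by $2m$ and subtracting $2m$ times the bound just obtained gives
$$m^2+(\mu+\nu-m)^2-2m^2+2m(\mu+\nu)-4\mu\nu=(\mu+\nu)^2-4\mu\nu=(\mu-\nu)^2\ge0.$$
So $m-\mu-\nu+2\mu\nu/m=\frac m2+\frac{(\mu+\nu-m)^2}{2m}-\frac{(\mu-\nu)^2}{2m}$, which proves the inequality with an explicit slack term.

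For the equality case, equality forces $\mu=\nu$. It also forces the chord bound to be tight, in expectation, for almost every $s$ in the support of $X$. The gap between the chord and $\abs{s+y-m}$ is strictly positive exactly for $y\in(0,m)$ when $0<s<m$. For $s\in\{0,m\}$ the function $y\mapsto\abs{s+y-m}$ is linear in $y$, so there is no gap. Hence tightness requires, for a.e. $s$ in the support of $X$, either $s\in\{0,m\}$ or $Y\in\{0,m\}$ almost surely. It follows that either $Y$ is supported on $\{0,m\}$, or $X$ is.

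The main obstacle is making this last dichotomy rigorous. Suppose $Y$ puts positive mass on $(0,m)$. Then every $s$ in the support of $X$ must lie in $\{0,m\}$, so $X$ is supported on $\{0,m\}$. For the converse, if $\mu=\nu$ and either variable is two-point on $\{0,m\}$, then the chord bound is exact. To see this, either run the same argument with the roles of $X$ and $Y$ swapped, or note directly that for $y\in\{0,m\}$ the map $s\mapsto\abs{s+y-m}$ is linear. With $\mu=\nu$ the slack term vanishes, so equality holds.
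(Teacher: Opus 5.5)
Your proposal is correct and is essentially the paper's proof: your chord bound $(m-s)+y(2s-m)/m=(1-s/m)(m-y)+(s/m)y$ is exactly the pointwise bound the paper gets from the identity $X+Y-m=(1-X/m)(Y-m)+(X/m)Y$ and the triangle inequality, and both arguments then use independence to reach $m-(\mu+\nu)+2\mu\nu/m$ and finish with $(\mu-\nu)^2\ge 0$. Your equality analysis also matches the paper's: the pointwise gap vanishes exactly off $(0,m)\times(0,m)$, so equality reduces via independence to $\mathbb P[0<X<m]\,\mathbb P[0<Y<m]=0$.
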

\begin{proof}
Writing $x=\E{} X$ and $y=\E{} Y$, the identity
$$
X+Y-m=(1-X/m)(Y-m)+(X/m)Y
$$
and the triangle inequality give
\begin{align*}
\E{}\abs{X+Y-m} &\leq\frac1m\E{}\left[m^2-m(X+Y)+2XY\right]\\
&=m-(x+y)+\frac{2xy}{m}\\
&\leq m-(x+y)+\frac{(x+y)^2}{2m},
\end{align*}
where independence is used in the equality and $(x-y)^2\geq0$ in the last line. This is \eqref{eq:independent-absolute-deviation}. Equality in the last line means $x=y$. Pointwise equality in the triangle inequality means that at least one of $X,Y$ is an endpoint. If neither variable were endpoint-supported, independence would give
$$ \mathbb P[0<X<m, 0<Y<m] =\mathbb P[0<X<m]\mathbb P[0<Y<m]>0, $$
contradicting pointwise equality. Conversely, endpoint support of either variable makes the triangle inequality an equality almost surely.
\end{proof}

\begin{lemma} \label{lem:w1-diagonal-lower}
Let $X,Y\in\set{0,1,\cdots,n-1}$ be independent random variables and let their law be $\bm r\in \cM_{n,n}$. Then  $$W_1^{\text{line}}(\bm r,\bm u) \ge \max\set{t,\frac{n-1}{4n-2} -\frac{t^2}{2n-2}}$$ where $t = \abs{ \E{}[X+Y]-n+1}$.
\end{lemma}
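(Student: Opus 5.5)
The plan is to use the dual (supremum) form of $W_1^{\text{line}}$ from \eqref{eq:w1-expressions}, re-indexed to the support $\set{0,\ldots,2n-2}$. Every $1$-Lipschitz potential $\bm x$, i.e.\ one with $\abs{x_i-x_{i+1}}\le 1$, gives
$$W_1^{\text{line}}(\bm r,\bm u)\ge \sum_{i=0}^{2n-2}(r_i-u_i)x_i=\E{}[x_{X+Y}]-\E{i\sim\bm u}[x_i].$$
I will write $m=n-1$ and $Z=X+Y$. The uniform law on $\set{0,\ldots,2m}$ has mean $m$. I will use three potentials: $x_i=i$, $x_i=-i$, and $x_i=-\abs{i-m}$. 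Each has consecutive differences of absolute value exactly $1$, so each is dual feasible.

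\textbf{The bound $t$.} With $x_i=i$ the pairing equals $\E{}Z-m$. With $x_i=-i$ it equals $m-\E{}Z$. Taking the larger of the two gives $W_1^{\text{line}}(\bm r,\bm u)\ge\abs{\E{}[X+Y]-(n-1)}=t$.

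\textbf{The second bound.} With $x_i=-\abs{i-m}$ the pairing equals $\E{i\sim\bm u}\abs{i-m}-\E{}\abs{Z-m}$. I will compute the uniform term directly:
$$\E{i\sim\bm u}\abs{i-m}=\frac{2}{2m+1}\cdot\frac{m(m+1)}{2}=\frac{m(m+1)}{2m+1}.$$
For the other term, $X$ and $Y$ are independent and take values in $[0,m]$, so Lemma~\ref{lem:independent-absolute-deviation} applies and gives
$$\E{}\abs{Z-m}\le \frac m2+\frac{t^2}{2m}.$$
Subtracting, and using $\frac{m(m+1)}{2m+1}-\frac m2=\frac{m}{2(2m+1)}$, I get
$$W_1^{\text{line}}(\bm r,\bm u)\ge \frac{m}{2(2m+1)}-\frac{t^2}{2m}=\frac{n-1}{4n-2}-\frac{t^2}{2n-2}.$$
Combining this with the first bound gives the claimed maximum.

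The only step requiring care is the third potential. The right test function is $-\abs{i-(n-1)}$, chosen because it converts the bound into the absolute-deviation inequality. Once that choice is made, the remaining work is the elementary arithmetic above together with a check that all three potentials are $1$-Lipschitz on the index range $\set{0,\ldots,2n-2}$. This is also where the equality analysis used later in Theorem~\ref{thm:w1-optimizer-diagonal} comes from: tightness forces equality in Lemma~\ref{lem:independent-absolute-deviation}.
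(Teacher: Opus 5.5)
Your proposal is correct and follows the paper's proof essentially verbatim. It uses the same three dual potentials $x_i=i$, $x_i=-i$ and $x_i=-\abs{i-(n-1)}$ in the supremum form of $W_1^{\text{line}}$, bounds the third pairing via Lemma~\ref{lem:independent-absolute-deviation}, and uses the same uniform-mean arithmetic $\frac{m(m+1)}{2m+1}-\frac m2=\frac{m}{2(2m+1)}$.
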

\begin{proof}
Let $m=n-1$ throughout this proof. Let $X,Y,\bm r, t$ as mentioned. Let $S=X+Y$ so that $t=\abs{\E{}S-m}$. Recall from \eqref{eq:w1-expressions} that $$W_1^{\text{line}} = \sup_{\substack{\bm x\in \R^{2m+1}\\\abs{x_i-x_{i+1}}\le 1}} \sum_{i=0}^{2m} \left(r_i-\frac{1}{2m+1}\right)x_i.$$

We use three different Lipschitz $\bm x\in \R^{2m+1}$ in the above to get different lower bounds on the Wasserstein distance.

First take $x_i = i$. This satisfies $\abs{x_i-x_{i+1}}=1$. The corresponding objective is $$\sum_{i=0}^{2m}r_ii-\frac{1}{2m+1} \sum_{i=0}^{2m}i = \E{}S-m.$$

Then taking $x_i = -i$ gives the objective $m-\E{}S$.

Finally take $x_i = -\abs{i-m}$. The objective is
$$\frac{1}{2m+1} \sum_{i=0}^{2m}\abs{i-m} - \sum_{i=0}^{2m}r_i\abs{i-m} = \frac{m(m+1)}{2m+1} - \E{}\abs{S-m} \stackrel{\eqref{eq:independent-absolute-deviation}}{\ge} \frac{m}{4m+2} - \frac{t^2}{2m}.$$

Combining these, we get $$W_1^{\text{line}}(\bm r,\bm u) \ge \max\set{m-\E{}S, \E{}S-m,\frac{m}{4m+2} - \frac{t^2}{2m}} = \max\set{t, \frac{m}{4m+2} - \frac{t^2}{2m}}.$$
\end{proof}

\begin{lemma}]\label{lem:w1-diagonal-attain}
Let $\bm p,\bm q\in \Delta_{n-1}$ be as in \eqref{eq:w1-optimal-solution-diagonal}. Then $$W_1^{\text{line}}(\bm p*\bm q, \bm u) = (n-1)\left(\sqrt{\frac{2n}{2n-1}}-1\right).$$
\end{lemma}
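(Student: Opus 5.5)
The plan is to compute $\bm r=\bm p*\bm q$ explicitly and evaluate $W_1^{\text{line}}$ through the CDF formula in \eqref{eq:w1-expressions}. The goal is to show that every partial difference $D_k\sett\sum_{i\le k}(r_i-u_i)$ is nonnegative. Once that holds, the absolute values disappear and the sum telescopes to a difference of means. Write $m=n-1$ and $t_*$ as in \eqref{eq:w1-diagonal-optimal}, so that $a=(m+t_*)/(2m)\in(1/2,1)$.

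\textbf{Step 1: the convolution.} Since $\bm p^*$ is supported on $\set{0,m}$, the coordinates of $\bm r$ are
\begin{align*}
r_i&=a\eta \quad (0\le i\le m-1), & r_m&=ab+(1-a)\eta,\\
r_{m+i}&=(1-a)\eta \quad (1\le i\le m-1), & r_{2m}&=(1-a)b.
\end{align*}
The choice $\eta=((2n-1)a)^{-1}$ gives $a\eta=1/(2n-1)=u_i$. Hence $D_k=0$ for $k<m$.

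\textbf{Step 2: signs of the $D_k$.} For $m\le k\le 2m-1$, $D_k$ is affine in $k$. Its slope is $(1-a)\eta-a\eta<0$, because $a>1/2$. The sequence ends with $D_{2m-1}=u_{2m}-r_{2m}=1/(2n-1)-(1-a)b$. So all $D_k\ge0$ exactly when $(1-a)b\le 1/(2n-1)$. This must also be checked together with $b\ge 0$, which is needed for $\bm q^*\in\Delta_{n-1}$. Both are explicit inequalities in $a$, and hence in $t_*$. I would verify them using $t_*\in(0,1)$, or more precisely $t_*=(m)(\sqrt{2n/(2n-1)}-1)\approx m/(4n-2)$, together with the explicit value of $b=1-m\eta$.

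\textbf{Step 3: telescoping.} With all $D_k\ge0$,
$$W_1^{\text{line}}(\bm r,\bm u)=\sum_{k=0}^{2m-1}D_k=\E{\bm u}[S]-\E{\bm r}[S]=m-\E{}X-\E{}Y.$$
Here $\E{}X=m(1-a)=(m-t_*)/2$ and $\E{}Y=\eta\, m(m-1)/2+bm$. If $\E{}X=\E{}Y$, the value equals $m-2m(1-a)=m(2a-1)=t_*$, which is the claim.

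\textbf{Main obstacle.} The key check is the identity $\E{}Y=\E{}X$. Substituting $b=1-m\eta$ turns it into
$$m-\frac{m(m+1)}{2(2n-1)a}=\frac{m-t_*}{2}.$$
With $a=(m+t_*)/(2m)$, this should reduce exactly to the quadratic
$$t_*+\frac{t_*^2}{2m}=\frac{m}{4m+2},$$
which defines $t_*$ as the intersection point in Lemma~\ref{lem:w1-diagonal-lower}. I would carry out this reduction explicitly, clearing denominators as $(m+t_*)^2=m^2\cdot 2n/(2n-1)$, and then confirm the inequality from Step 2 by the same substitution.
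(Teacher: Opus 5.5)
Your proposal is correct and follows the same route as the paper: compute $\bm p^*\ast\bm q^*$, show every CDF difference $D_k$ is nonnegative, and telescope $\sum_k D_k$ into $m-\E{}X-\E{}Y$. Your reduction of $\E{}X=\E{}Y$ to $(m+t_*)^2=m^2\cdot 2n/(2n-1)$ supplies a step the paper uses without comment. The endpoint check you left open is immediate: $n\eta=2a=\sqrt{2n/(2n-1)}>1$ gives $b<\eta$, hence $(1-a)b<(1-a)\eta<a\eta=1/(2n-1)$, and $b\ge 0$ reduces to $2(n-1)^2\le n(2n-1)$.
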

\begin{proof}
Firstly, it is an easy check that $\bm p^*,\bm q^*\in \Delta_{n-1}$. Then, \begin{align*}
\bm r^* = \bm p^*\bm q^* = (\underbrace{a\eta, \cdots, a\eta}_{n-1}, (1-a)\eta+ab, \underbrace{(1-a)\eta,\cdots,(1-a)\eta}_{n-2}, (1-a)b)
\end{align*}
Now we use the last expression of \eqref{eq:w1-expressions}, namely,
\begin{equation}\label{eq:sum-wasserstein}W_1^{\text{line}}(\bm r^*,\bm u) = \sum_{k=0}^{2n-3} \abs{\sum_{i=0}^k r_i-\frac{k+1}{2n-1}}.\end{equation}
The first $n-1$ coordinates of $\bm r^*$ are each $a\eta = (2n-1)^{-1}$ whence the above sum has no contribution for $k=0,\cdots,n-2$. The last coordinate of $\bm q^*$  has lower weight than the rest. Indeed, $n\eta = \sqrt{2n/(2n-1)} > 1$ whence $b<\eta$. Now note that $r^*_{j} < (2n-1)^{-1} ~\forall~ n \le j \le 2n-2$. Indeed, $$(1-a)\eta = \frac{n-1-t_*}{n-1+t_*}\cdot \frac{1}{2n-1} < \frac{1}{2n-1}, \qquad (1-a)b < (1-a)\eta < \frac{1}{2n -1}.$$ 
Therefore the sequence of partial sums of $\bm r^*-\bm u$ is zero initially, then is positive at some point and gradually decreases to zero since $r_i < u_i$ for $i\ge n$ and $\bm 1^\top(\bm r^*-\bm u) = 0$. Therefore $r_{n-1} > u_{n-1}$. In particular this means that every absolute value in \eqref{eq:sum-wasserstein} is non-negative. Therefore, $W_1^{\text{line}}(\bm r^*,\bm u) = \E{S\sim\bm r}[2n-2-S] - \E{U\sim \bm u} [2n-2-U] = \E{}U-\E{}S = n-1- 2(n-1)(1-a) = t_*$.
\end{proof}

\begin{lemma}\label{lem:solution-law}
Refer to \eqref{dext:eq:capacity-normalization}. Fix $p\in \left(\frac12,1\right), \rho=(1-p)/p$. $\forall~b\in \R,~ \exists! c=c_p(b)$ satisfying $$\sum_{y=1}^\infty Q_b(y) = 1.$$
The function $c_p$ is smooth, strictly convex, coercive and $b_p\sett \arg\min c_p > 0$.
\end{lemma}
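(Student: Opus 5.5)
We plan to study the function
$$F(b,c)\sett\sum_{y\ge1}Q_b(y)=e^{cp-h(p)}\sum_{y\ge1}e^{-cy-b(-\rho)^{y-1}}.$$
Since $1/2<p<1$ gives $\rho\in(0,1)$, the factor $e^{-b(-\rho)^{y-1}}$ lies between $e^{-|b|}$ and $e^{|b|}$. Hence the series converges exactly when $c>0$, and on $\R\times(0,\infty)$ it is real-analytic, with termwise differentiation justified by geometric domination.

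Existence and uniqueness of $c_p(b)$ come first, for fixed $b$. Write $F(b,c)=e^{-h(p)}\sum_{y\ge1}e^{-c(y-p)-b(-\rho)^{y-1}}$. Since $y-p>0$ for every $y\ge1$, each term is strictly decreasing in $c$, so $F$ is strictly decreasing in $c$. As $c\to\infty$ every term tends to $0$, and by dominated convergence $F\to0$. As $c\downarrow0$ the sum is bounded below by $e^{-|b|}\sum_y e^{-c(y-p)}\to\infty$. Continuity then gives a unique root of $F(b,c)=1$. Because $\partial_cF<0$, the implicit function theorem makes $c_p$ smooth (indeed analytic).

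For convexity we use that $\log F$ is a log-partition function of the exponential family with sufficient statistics $(-(y-1),\,-(-\rho)^{y-1})$ in the natural parameters $(c,b)$, up to the affine shift $cp$. It is therefore jointly convex, and strictly so because these statistics are affinely independent on $\mathbb N$. The curve $\{(b,c_p(b))\}$ is the level set $\{\log F=0\}$ of a convex function with $\partial_c\log F<0$, so $c_p$ is convex: its epigraph is $\{\log F\le0\}$, the sublevel set of a convex function, and this is where monotonicity in $c$ enters. For strict convexity, suppose $c_p$ were affine on an interval. Then strict convexity of $\log F$ along that segment would give $\log F<0$ in its interior, contradicting $\log F=0$ there.

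Coercivity and the sign of the minimizer follow. As $b\to+\infty$, the $y=2$ term $e^{-c(2-p)+b\rho-h(p)}$ must stay $\le1$, so $c_p(b)\ge(b\rho-h(p))/(2-p)\to\infty$. As $b\to-\infty$, the $y=1$ term $e^{-c(1-p)-b-h(p)}\le1$ gives $c_p(b)\ge(-b-h(p))/(1-p)\to\infty$. A strictly convex coercive smooth function has a unique minimizer $b_p$, characterized by $c_p'(b_p)=0$. Implicit differentiation gives $c_p'(b)=-\partial_bF/\partial_cF$, so its sign is the sign of
$$-\partial_bF=\sum_y Q_b(y)(-\rho)^{y-1}.$$
To show $b_p>0$ it suffices, by convexity, to show $c_p'(0)<0$, i.e.\ $\sum_y Q_0(y)(-\rho)^{y-1}<0$. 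At $b=0$, $Q_0(y)\propto e^{-cy}$, so this sum is a positive multiple of $\sum_y e^{-cy}(-\rho)^{y-1}=e^{-c}/(1+\rho e^{-c})>0$. That sign is the wrong one, so we must recheck the sign convention. Since $\partial_cF<0$, we have $c_p'(b)=\partial_bF/|\partial_cF|$ with
$$\partial_bF=-\sum_y Q_b(y)(-\rho)^{y-1}.$$
At $b=0$ this is negative, so $c_p'(0)<0$, and the minimizer lies at $b_p>0$. This sign bookkeeping is the one delicate point; otherwise the main obstacle is making the convexity-of-level-curve argument rigorous, for which we rely on the epigraph description above.
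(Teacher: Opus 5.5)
Your proposal is correct. The existence and uniqueness of $c_p(b)$, smoothness via the implicit function theorem, coercivity from $Q_b(1)\le1$ and $Q_b(2)\le1$, and $b_p>0$ via the geometric law at $b=0$ are essentially the paper's argument.

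The real difference is the convexity step. The paper differentiates the normalization identity twice. With $d_y=y-p$ and $f_y=(-\rho)^{y-1}$, it gets $c_p''(b)=\mathbb{E}[(d_Yc_p'(b)+f_Y)^2]/\mathbb{E}[d_Y]$. It rules out $c_p''(b)=0$ by looking at $y=1,2$, where $f_1>0>f_2$. You instead treat $\log F$ as a jointly convex log-partition function on $\R\times(0,\infty)$. By strict monotonicity in $c$, the epigraph of $c_p$ is exactly the sublevel set $\{\log F\le 0\}$, so $c_p$ is convex, and a chord argument gives strictness.

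Your route is more conceptual and needs no second-order computation: a function defined implicitly by a jointly convex function that is decreasing in one variable is convex. The paper's route gives an explicit formula and the pointwise bound $c_p''>0$, which is slightly stronger than strict convexity. You recover that too by differentiating $\log F(b,c_p(b))=0$ twice, which gives $c_p''=-\nabla^2\log F[(1,c_p'),(1,c_p')]/\partial_c\log F>0$. At the normalization, this quadratic form is the variance of $f_Y+c_p'd_Y$, whose mean vanishes. So it equals $\mathbb{E}[(d_Yc_p'+f_Y)^2]$, and the two computations are literally the same object read geometrically versus analytically.

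Three cleanups for the write-up. First, delete the initial sign claim that $c_p'$ has the sign of $-\partial_bF$. Since $\partial_cF<0$, $c_p'(b)=-\partial_bF/\partial_cF$ has the sign of $\partial_bF=-\sum_yQ_b(y)(-\rho)^{y-1}$. State only this corrected version, which is the paper's $c_p'(b)=-\mathbb{E}[f_Y]/\mathbb{E}[d_Y]$.

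Second, "affinely independent" is the wrong phrase for infinitely many points in $\R^2$. What you need is that the points $T(y)$ do not all lie on one affine line; this is what makes the covariance Hessian of $\log F$ positive definite. It holds because $y\mapsto(-\rho)^{y-1}$ is not affine: its second difference at $y=1$ is $(1+\rho)^2\neq0$.

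Third, with statistic $-(y-1)$ the affine shift is $c(p-1)$, not $cp$. This is harmless but inconsistent with your definition of $F$.
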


\begin{proof}
Recall $Q_b(y) \sett \exp{-c(y-p)-h(p)-b(-\rho)^{y-1}}$. Write $f_y=(-\rho)^{y-1}$ and $d_y=y-p>0$. For fixed $b$, the expression in \eqref{dext:eq:capacity-normalization}, with $c_p(b)$ replaced by $c>0$ then becomes, upto a factor of $e^{h(p)}$,
$$\sum_{y=1}^\infty \exp\set{-cd_y-bf_y}.$$
This is strictly decreasing in $c$. It tends to infinity as $c\downarrow0$ (since $f_y\to0$) and tends to zero as $c\to\infty$ (since it is bounded above by $e^{|b|}e^{-c(1-p)}/(1-e^{-c})$). This proves existence and uniqueness of $c_p(b)$ and that this value is positive. On compact subsets of $\{c>0,\ b\in\mathbb R\}$, the series and all its $c,b$ derivatives are dominated by a fixed geometric sequence times a polynomial in $y$. The implicit function theorem therefore applies and proves smoothness of $c_p$.

Let $Q_b$ denote the law on $\mathbb N$ from \eqref{dext:eq:capacity-normalization}, that is, 
$$Q_b(y) = e^{-h(p)} \exp\set{-c_p(b)d_y - bf_y}$$
and let $Y$ be a random variable with law $Q_b$. Differentiating gives
\begin{equation}
\label{dext:eq:capacity-derivatives}
 c_p'(b)=-\frac{\E{Y\sim Q_b}f_Y}{\E{Y\sim Q_b}d_Y},\qquad c_p''(b)=
 \frac{\E{Y\sim Q_b}\left[(d_Yc_p'(b)+f_Y)^2\right]}{\E{Y\sim Q_b}d_Y} > 0.
\end{equation}
Non-negativity of the above second derivative is clear. We will argue it is strictly positive. Indeed if $c_p''(b) = 0$ then $d_1c_p'(b)+f_1=d_2c_p'(b)+f_2 = 0$ implying that $c_p'(b) = -f_1/d_1 = -f_2/d_2$ but $d_1,d_2>0$ and $f_1>0>f_2$, contradicting the two expressions for $c_p'(b)$. This implies $c_p$ is strictly convex.

The inequalities $Q_b(1)\leq1$ and $Q_b(2)\leq1$ imply
\[
 c_p(b)\geq\frac{-h(p)-b}{1-p},\qquad
 c_p(b)\geq\frac{\rho b-h(p)}{2-p}.
\]
They prove coercivity and hence existence of a unique minimizer $b_p$ of $c_p$. 

Next we show that $c_p$ is decreasing at $0$, which is enough to prove that its minimizer is positive. At $b=0$ the normalized law is given by $$Q_0(y) = e^{-h(p) - c_p(0)(1-p)} \left(\exp\set{-c_p(0)}\right)^{y-1}$$ which is geometric. Consequently (from its mgf)
\[
 \E{Y\sim Q_0}f_Y=\frac{1-z}{1+\rho z}>0, \qquad z=\exp\set{-c_p(0)}.
\]
Equation~\eqref{dext:eq:capacity-derivatives} gives $c_p'(0)<0$, whence $b_p>0$.
\end{proof}

\begin{lemma}\label{lem:taylor-comparison}
Let $r\in \left(0,1\right),\lambda>0,j\in \mathbb Z_{\ge 0}$. Define $$\Lambda = \frac{-2\log r}{(1+r)(1-r^{2k-1})}.$$ Then 
\begin{equation}\label{eq:taylor-comparison}
e^{\Lambda r^{2j+1}} - re^{-\Lambda r^{2j+2}} \ge 1-r+ \Lambda r^{2j+1} \left(1+r^{2}\right).
\end{equation}
\end{lemma}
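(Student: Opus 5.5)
The plan is to treat \eqref{eq:taylor-comparison} as a pure one-variable Taylor estimate. The only structural input is that $\Lambda>0$. Since $0<r<1$ we have $-2\log r>0$ and $1+r>0$, and for $k\ge 1$ also $1-r^{2k-1}>0$. The specific form of $\Lambda$, and the parameter $\lambda$, play no further role. So I will set
$$x\sett \Lambda r^{2j+1}\ge 0,$$
observe that $\Lambda r^{2j+2}=rx$, and rewrite the claim as
$$e^{x}-re^{-rx}\ \ge\ 1-r+x(1+r^2)\qquad\text{for all }x\ge 0,\ r\in(0,1).$$

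The key steps, in order, are as follows.

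\begin{enumerate}
\item Bound the second exponential from above using the alternating Taylor bound $e^{-y}\le 1-y+y^2/2$, valid for $y\ge 0$. Applied with $y=rx$ and multiplied by $-r<0$, it gives
$$-re^{-rx}\ \ge\ -r+r^2x-\frac{r^3x^2}{2}.$$
\item Write $e^x=1+x+(e^x-1-x)$ and add this to the previous bound. This gives
$$e^x-re^{-rx}\ \ge\ 1-r+x(1+r^2)+\Bigl[(e^x-1-x)-\frac{r^3x^2}{2}\Bigr].$$
\item Show the bracket is nonnegative. For $x\ge 0$ we have $e^x-1-x\ge x^2/2$, and $x^2/2\ge r^3x^2/2$ because $r<1$.
\end{enumerate}

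The only point needing care is the sign requirement in step~1: the bound $e^{-y}\le 1-y+y^2/2$ holds only for $y\ge 0$. This is exactly where $\Lambda>0$, and hence $k\ge 1$, is used. Once that is confirmed there is no real obstacle. The inequality is in fact strict for $x>0$, although the statement asks only for $\ge$.
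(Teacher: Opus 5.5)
Your proof is correct and is essentially the paper's argument. The paper expands both exponentials as full power series in $x=\Lambda r^{2j+1}$. The $i=0,1$ terms give exactly $1-r+x(1+r^2)$, and every higher coefficient $1-(-1)^i r^{i+1}$ with $i\ge 2$ is positive. You carry out the same comparison using the second-order truncation bounds $e^{-y}\le 1-y+y^2/2$ and $e^x-1-x\ge x^2/2$. Your remark that $x\ge 0$ (that is, $\Lambda>0$, which needs $k\ge 1$) is required also applies to the paper's termwise argument, which leaves it implicit; for $x<0$ the odd-order terms there would be negative.
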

\begin{proof}
Taking Taylor expansion gives
$$
e^{\Lambda r^{2j+1}} - re^{-\Lambda r^{2j+2}} = \sum_{i\ge 0} \frac{\Lambda^i r^{(2j+1)i}}{i!}\left(1-r^{i+1}(-1)^i\right).
$$
Clearly $1-r^{i+1}(-1)^i > 0$ if $i\ge 2$.
\end{proof}

\begin{lemma}\label{lem:helperbound1}
If $r\in \left(0,\frac12\right]$ then $2(1+r^2)\log r + r + 1 <0.$
\end{lemma}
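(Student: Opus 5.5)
We need to show $g(r)\sett 2(1+r^2)\log r + r + 1<0$ for $r\in(0,1/2]$. The plan is to bound $\log r$ from above on the interval and then check a polynomial inequality. The only slightly delicate point is that $\log r$ is negative with an increasing coefficient, so the bound has to be applied carefully.

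The first route is monotonicity of $g$. We compute
$$g'(r)=4r\log r+\frac{2(1+r^2)}{r}+1.$$
For $r\in(0,1/2]$ we have $|4r\log r|\le 4e^{-1}<1.5$, while $2(1+r^2)/r\ge 2/r\ge 4$. Hence $g'(r)>0$ on $(0,1/2]$, so $g$ is increasing there. It then suffices to check the right endpoint:
$$g(1/2)=2\cdot\tfrac54\log\tfrac12+\tfrac32=\tfrac32-\tfrac52\log 2.$$
Since $\log 2>0.69$, we get $\tfrac52\log2>1.725>1.5$, so $g(1/2)<0$. Therefore $g(r)\le g(1/2)<0$ on the whole interval.

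As a fallback in case the derivative estimate feels loose, we can split the expression. Because $\log r<0$, we have $2(1+r^2)\log r\le 2\log r$, so $g(r)\le 2\log r+r+1$. The function $r\mapsto 2\log r+r$ is increasing, so on $(0,1/2]$ this is at most $2\log(1/2)+3/2=1.5-1.386>0$. This does not close the argument, so this route fails and we will not use it.

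We therefore commit to the monotonicity argument. The only step that needs care is the bound $|r\log r|\le 1/e$ on $(0,1]$, which follows from elementary calculus since the minimum of $r\log r$ occurs at $r=1/e$. The endpoint value then needs the numerical bound $\log 2>0.6$, and $\tfrac52\cdot 0.6=1.5$ already suffices once the inequality is made strict with $\log 2>0.69$.
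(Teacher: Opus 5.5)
Your proof is correct and follows the paper's route: you show the function is increasing on $(0,\tfrac12]$ and then check that its value at $r=\tfrac12$, namely $\tfrac32-\tfrac52\log 2$, is negative. The only difference is how $g'>0$ is certified. You use $|r\log r|\le e^{-1}$ together with $2/r\ge 4$, whereas the paper shows $g''<0$ so that $g'\ge g'(\tfrac12)=6-2\log 2>0$; both work, and your abandoned ``fallback'' paragraph can simply be deleted.
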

\begin{proof}
Let $f(r) = 2(1+r^2)\log r + r + 1$. Then $$f'(r) = 2(r^{-1}+r) + 4r\log r + 1, \qquad f''(r) = -2r^{-2}+6+4\log r.$$
$r\le \frac12$ so $f''(r) \le -8+6-4\log 2 = -2-4\log 2 < 0$ on $\left(0,\frac12\right]$. This means $f'$ is decreasing on $\left(0,\frac12\right]$. Therefore $\inf\limits_{\left(0,\frac12\right]} f' = f'(1/2) = 6-2\log 2 > 4 > 0$. This further means $f$ is increasing on $\left(0,\frac12\right]$ and so if $r\in \left(0,\frac12\right]$ then $$f(r) \le f(1/2) = \frac12\left(3-\log 32\right) < \frac12\left(3-3\log 3\right) < 0$$ where we used the fact that $e<3$ in the last inequality.
\end{proof}

\begin{lemma}\label{lem:helperbound2}
If $r\in \left[\frac12,1\right)$ then $(1+r)\log r + 2(1-r) < 0$.
\end{lemma}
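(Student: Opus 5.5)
We need to prove $g(r)\sett(1+r)\log r+2(1-r)<0$ for $r\in[1/2,1)$. This is a one-variable calculus estimate, and the plan is to show that $g$ vanishes at $r=1$ and is strictly increasing on $[1/2,1)$. That forces $g(r)<g(1)=0$ on this interval.

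First we compute the derivatives:
$$g'(r)=\log r+\frac{1+r}{r}-2=\log r+\frac1r-1,\qquad g''(r)=\frac1r-\frac1{r^2}=\frac{r-1}{r^2}.$$
On $(0,1)$ we have $g''<0$, so $g'$ is strictly decreasing there. Since $g'(1)=0$, it follows that $g'(r)>0$ for all $r\in(0,1)$.

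Alternatively, $g'(r)>0$ follows directly from the standard inequality $\log(1/r)<1/r-1$ for $r\neq1$. Either way, $g$ is strictly increasing on $(0,1]$. Together with $g(1)=0$, this gives $g(r)<0$ for every $r\in(0,1)$, and in particular on $[1/2,1)$.

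There is essentially no obstacle here; the only point to check is the strictness of the inequality. Strictness comes from $g'>0$ on the open interval, which in turn comes from $g''<0$ on $(0,1)$, so the strict inequality $g(r)<g(1)$ holds for every $r<1$. The same argument works on all of $(0,1)$, so the restriction to $r\ge 1/2$ is only a matter of convenience for its use in Lemma~\ref{lem:scalar-cutoff}.
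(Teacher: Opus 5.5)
Your proof is correct and follows essentially the same route as the paper: show that $g''<0$ on $(0,1)$, deduce from $g'(1)=0$ that $g'>0$, and conclude that $g(r)<g(1)=0$. Your remarks that the bound holds on all of $(0,1)$, and that it can also be obtained from $\log(1/r)<1/r-1$, are accurate but not needed.
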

\begin{proof}
Let $f(r) = (1+r)\log r + 2(1-r)$. Then $$f'(r) = r^{-1} + 1 + \log r - 2, \qquad f''(r) = -r^{-2} + r^{-1}.$$
Clearly $f''(r) < 0$ for $r<1$ whence $f'$ is strictly decreasing on $\left[\frac12,1\right)$. Therefore $f'|_{\left[\frac12,1\right)} > f'(1) = 0$. Again this means $f$ is strictly increasing on $\left[\frac12,1\right)$. This lets us conclude that if $r\in \left[\frac12,1\right)$ then $$f(r) < f(1) = 0.$$
\end{proof}

\begin{lemma}\label{lem:dirtypoly}
Fix $r\in \left[\frac12,1\right)$ and consider the polynomial \begin{align*}
P_r(u) &= u^{3} \left( 3 r^{5} + 7 r^{4} + 5 r^{3} + r^{2}\right)  + u^{2} \left( - 2 r^{5} + 10 r^{4} - 3 r^{3} + 5 r^{2} + 5 r + 1\right)  \\
&~~~~~~~+ u \left( - r^{5} - r^{4} - r^{3} - 5 r^{2} - 6 r - 2\right)  + 1 + r - 17r^2- r^{3}.
\end{align*}
Then $P_r(u) < 0$ for all $u\in[0,r]$.
\end{lemma}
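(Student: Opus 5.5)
The plan is to use convexity of $P_r$ in $u$ to reduce the claim to the two endpoints $u=0$ and $u=r$, and then settle each endpoint as a one-variable inequality in $r$.

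\textbf{Reduction to the endpoints.} Write $P_r(u)=a_3u^3+a_2u^2+a_1u+a_0$, with the coefficients read off from the statement. On $u\ge 0$ we have $P_r''(u)=6a_3u+2a_2$.
\begin{itemize}
\item Clearly $a_3=3r^5+7r^4+5r^3+r^2>0$.
\item For $a_2$, group the terms as
$$a_2=r^3(10r-2r^2-3)+5r^2+5r+1.$$
For $r\in[\tfrac12,1)$ we have $10r-2r^2-3\ge 5-2r^2-3=2(1-r^2)>0$ (using $10r\ge 5$), so $a_2>0$.
\end{itemize}
Hence $P_r$ is convex on $[0,r]$, and its maximum there is attained at $u=0$ or $u=r$. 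It therefore suffices to prove $P_r(0)<0$ and $P_r(r)<0$.

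\textbf{The endpoint $u=0$.} Here $P_r(0)=1+r-17r^2-r^3$. On $[\tfrac12,1)$ we have $17r^2\ge 17/4$ and $r\le 1$, so
$$P_r(0)\le 1+1-\tfrac{17}{4}<0.$$

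\textbf{The endpoint $u=r$ (main obstacle).} I expect this to be the delicate step, because the bound is tight as $r\to1$. Substituting $u=r$ and collecting powers gives
$$g(r):=P_r(r)=3r^8+5r^7+14r^6-3r^5+4r^4-r^3-22r^2-r+1.$$
The coefficients sum to $0$, so $g(1)=0$ and the inequality degenerates at the right endpoint. The remedy is to factor out $(r-1)$. Synthetic division gives
$$g(r)=(r-1)\,h(r),\qquad h(r)=3r^7+8r^6+22r^5+19r^4+23r^3+22r^2-1.$$
I would double-check this by expanding $(r-1)h(r)$ and comparing with $g$. For $r\ge\tfrac12$, every term of $h$ except $-1$ is nonnegative and $22r^2\ge \tfrac{22}{4}>1$, so $h(r)>0$. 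Since $r-1<0$ on $[\tfrac12,1)$, this gives $g(r)<0$.

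Combining the two endpoint bounds with convexity yields $P_r(u)<0$ for all $u\in[0,r]$ and all $r\in[\tfrac12,1)$, which is the statement.
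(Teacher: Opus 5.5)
Your proposal is correct and matches the paper's proof: you use convexity of $P_r$ in $u$ (from $a_3>0$ and $a_2>0$) to reduce to the endpoints, bound $P_r(0)\le 2-\tfrac{17}{4}<0$, and factor $P_r(r)=(r-1)h(r)$ with $h>0$ since $22r^2>1$. Your $h(r)$ is exactly $(r+1)(3r^6+5r^5+17r^4+2r^3+21r^2+r-1)$, the factorization the paper uses, and your grouping argument for $a_2>0$ is an equally valid variant of the paper's termwise bound.
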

\begin{proof}
Note that $\partial_u^2 P_r/2 =  u \left(3 r^{5} + 7 r^{4} + 5 r^{3} +  r^{2}\right)\cdot 3  - 2 r^{5} + 10 r^{4} - 3 r^{3} + 5 r^{2} + 5 r + 1$. Now fix $r\in \left[\frac12,1\right)$ and treat $u$ as a variable over $[0,r]$. Note that the coefficient of $u$ in $\partial_u^2 P/2$ is positive because all terms are positive. Finally $\partial_u^2 P_r(0)/2 = - 2 r^{5} + 10 r^{4} - 3 r^{3} + 5 r^{2} + 5 r + 1 > - 2 r^{2} + 10 r^{4} - 3 r + 5 r^{2} + 5 r + 1 = 10r^4 + 3r^2 + 2r + 1 > 0$. Therefore $P_r$ is convex over $[0,r]$ as a function of $u$, whence its maxima over $u\in[0,r]$ is attained at $u\in \set{0,r}$. Evaluating at $u=0$, $P_r(0) = 1 + r - 17r^2- r^{3}  < 1+1-\frac{17}{4} - \frac{1}{8} = -\frac{19}{8} < 0$ because $\frac12\le r < 1$. For the other endpoint $u=r$ we get $P_r(r) = \left( r - 1\right)  \left( r + 1\right)  \left( 3 r^{6} + 5 r^{5} + 17 r^{4} + 2 r^{3} + 21 r^{2} + r - 1\right) $. The large factor $3 r^{6} + 5 r^{5} + 17 r^{4} + 2 r^{3} + 21 r^{2} + r - 1$ is positive because $21r^2-1 >0$ if $r\ge \frac12$ and other terms are positive. Therefore $P_r(r) < 0$ because $r<1$. This proves that $P_r(u) < 0$ for any $u\in [0,r]$.
\end{proof}

\end{document}